\documentclass{article} % For LaTeX2e
\usepackage{iclr2027_conference,times}

\usepackage{amsmath,amsfonts,bm}

\def\eqref#1{equation~\ref{#1}}
\def\1{\bm{1}}

\DeclareMathAlphabet{\mathsfit}{\encodingdefault}{\sfdefault}{m}{sl}
\SetMathAlphabet{\mathsfit}{bold}{\encodingdefault}{\sfdefault}{bx}{n}

\newcommand{\E}{\mathbb{E}}

\DeclareMathOperator*{\argmax}{arg\,max}
\DeclareMathOperator*{\argmin}{arg\,min}

\usepackage{url}
\usepackage{amsthm}
\usepackage{thmtools}
\usepackage{thm-restate}
\usepackage{mathtools}
\usepackage{hyperref}
\usepackage{cleveref} % You need to load this after theorem packages, otherwise the references break.
\usepackage{subcaption}

\declaretheorem[name=Assumption]{assumption}

\declaretheorem[name=Fact]{fact}
    
\title{Regularized policy gradient\\ with learned mixtures of Gaussians\\ for games with continuous actions}

\arxivpreprint

\author{%
  Ondřej Kubíček  \\
  Czech Technical University in Prague\\
  Carnegie Mellon University\\ 
  \texttt{kubicon3@fel.cvut.cz} \\
  \And
  Viliam Lisý \\
  Czech Technical University in Prague \\
  Artificial Intelligence Center \\
  \texttt{viliam.lisy@fel.cvut.cz} \\
  \AND
  Tuomas Sandholm \\
  Carnegie Mellon University \\
  Strategy Robot, Inc. \\
  Strategic Machine, Inc.\\
  Optimized Markets, Inc.\\
  \texttt{sandholm@cs.cmu.edu } \\ 
}

\crefname{appsec}{Appendix}{Appendices}

\begin{document}

\newcommand{\RealNumbers}[0]{\mathbb{R}}
\newcommand{\BorelSet}[0]{\mathcal{P}}
\newcommand{\Expectation}[0]{\E}

\newcommand{\Player}[0]{i}

\newcommand{\State}[0]{s}
\newcommand{\Actions}[1]{\mathcal{A}_{#1}}
\newcommand{\Action}[1]{a_{#1}}

\newcommand{\Strategy}[1]{\pi_{#1}}
\newcommand{\Strategies}[1]{\Pi_{#1}}

\newcommand{\Utility}[0]{u}
\newcommand{\ExpectedUtility}[1]{U_{#1}}
\newcommand{\Landscape}[1]{Q_{#1}}

\newcommand{\BR}[0]{BR}
\newcommand{\Nash}[0]{*}

\newcommand{\CategoricalStrategy}[0]{w}
\newcommand{\GaussianStrategy}[0]{a}

\newcommand{\Timestep}[0]{t}

\newcommand{\AtomStrategy}[0]{k}
\newcommand{\AtomStrategies}[0]{K}

\newcommand{\ActionValue}[1]{q_{#1}}

\newcommand{\RegularizationStrength}[0]{\eta}
\newcommand{\KLDiv}[0]{D_{\text{KL}}}

\newcommand{\LearningRate}[0]{\alpha}

\newcommand{\Mean}[0]{\mu}
\newcommand{\Variance}[0]{\sigma}
\newcommand{\Covariance}[0]{\Sigma}
\newcommand{\Gaussian}[0]{\mathcal{N}}

\newcommand{\LipConst}[0]{L}
\newcommand{\MonConst}[0]{\delta}

\newcommand{\GameField}[0]{V}
\newcommand{\RegularizerField}[0]{R}
\newcommand{\Jacobian}[0]{J}

\newcommand{\IdentityM}[0]{\textbf{I}}

\newcommand{\VI}[0]{\text{VI}}
\newcommand{\PGDA}[0]{T}
\newcommand{\ProjStrategies}[0]{P_{\Strategies{}}}
\newcommand{\Inner}[0]{\Psi}

\newcommand{\Exploitability}[0]{\mathcal{E}}

\newcommand{\Neural}[0]{\theta}

\newcommand{\Loss}[0]{\mathcal{L}}
\newcommand{\Advantage}[0]{A}
\newcommand{\EntropyCoefficient}[0]{\alpha}
\newcommand{\Entropy}[0]{\mathcal{H}}

\newcommand{\PPOSurrogate}[0]{\rho}

\newcommand{\AlgorithmName}[0]{Magnetic Mixture Policy Optimization}
\newcommand{\AlgorithmShort}[0]{MMPO}

\maketitle
\begin{abstract}
Most successes of superhuman game-playing algorithms are in games with discrete actions, yet in auctions, robotics, sports, or trading, actions are nearly continuous. Prior techniques either rely on expert-designed discretizations or are sample inefficient. We present a scalable policy-gradient algorithm for large sequential games with continuous or mixed discrete and continuous actions. It combines \textit{magnetic mirror descent} with a mixture of Gaussians reparametrization, trained via self-play. We show that it approximates equilibrium in games where gradient descent fails. In sequential games, it outperforms \textit{neural fictitious self-play} and matches or outperforms the final strategies of \textit{policy space response oracles} with 3.5--5.5$\times$ fewer samples. In heads-up no-limit Texas hold'em, it performs on par with Slumbot.
    % Prior techniques are either sample inefficient if they try to approximate equilibrium, or they only use the single-player algorithms and their strategies are prone to exploitation.
    % Prior scalable techniques for these continuous action spaces are sample inefficient as they either need to train a vast number of best response strategies, or they require the network to learn the whole distribution from just self-play.  % How to explain the inefficiency of randomized policy networks somehow elegantly?
    % is a combination and extension of two prior works, Magnetic Mirror Descent, which uses KL-divergence regularization to ensure strong monotonicity of the gradient updates to converge to Quantal Response Equilibrium in two-player zero-sum games and Soft Actor critic, which reparameterizes the continuous action policy by Gaussian mean and standard deviation, which are trained through policy-gradient updates. Our approach trains a mixture of Gaussians, where Projected Gradient Descent is used to train the means and standard deviations and Magnetic Mirror Descent is used to train the categorical distribution over these Gaussians. We show that including the magnet regularization from the Magnetic Mirror Descent into the Gaussian space also enables convergence in games, where the standard Gradient Descent fails. We then demonstrate our algorithm on several domains with varying sizes including Heads-up No-limit Texas Hold'em.
\end{abstract}
\section{Introduction}
Games can model many situations where multiple agents are trying to achieve possibly different goals over a long horizon. Many successes in this domain focus on games with discrete action spaces, like Chess, Go, or Stratego \citep{silver2018alphazero,schrittwieser2020muzero,perolat2022stratego,sokota2025stratego,zhang2026general}. However, some domains, like robotics or economics, use continuous or nearly-continuous actions \citep{todorov2012mujoco,haarnoja2018sac,vickrey1961auctions,zheng2022economics}. 

In single-player and cooperative multi-player reinforcement learning, many studies have demonstrated the effectiveness of policy-gradient algorithms, even for continuous action problems \citep{lillicrap2016ddpg,haarnoja2018sac,yu2022mappo}. However, these works often rely on the fact that, in those environments, an optimal pure (i.e., deterministic) strategy exists. It is well known that adversarial games with hidden information or simultaneous moves lack pure optimal strategies and instead require mixing among several actions to prevent the player from becoming predictable \citep{von1947theory,blackwell1979theory,roberson2006blotto}. Moreover, mixing can be beneficial even in fully observable physical systems. Consider a penalty kick in football or a tennis serve; players need to commit to their decision before observing what their opponent does. If players commit to the same movement all the time, they will become highly exploitable. 

Only recent developments in regularized policy-gradient algorithms have demonstrated how to converge to mixed strategies using scalable neural policy representations, without resorting to impractical and costly policy averaging \citep{perolat2021rnad,sokota2022mmd}. These algorithms then achieved superhuman performance in very large games, such as Stratego \citep{perolat2022stratego,sokota2025stratego} and Generals.io \citep{straka2026generals}, relying on their discrete action spaces.

The common approach to continuous action spaces is to discretize and then solve the resulting discretized game \citep{kroer2015discretization,tavakoli2018discrete,tang2020discretizing}. However, this disregards the relationships among discrete actions, and the algorithm must learn them from gameplay. Moreover, the solution quality depends on the chosen discretization, which often requires expert knowledge \citep{kroer2015discretization}. Alternatively, some approaches are directly applicable to continuous action spaces, but they either cannot sufficiently model mixed equilibria \citep{schulman2017ppo}, are not easily extended to large sequential games \citep{hsieh2019gan,enrich2020gan,martin2023rpn,martin2025jpspg}, or are sample inefficient because they train an ensemble of strategies \citep{heinrich2016nfsp,lanctot2017psro}.

% Policy-gradient algorithms have shown incredible performance in very large games, like a superhuman performance in Stratego \citep{perolat2022stratego,sokota2025stratego} or Generals.io \citep{straka2026generals}. However, in adversarial games, policy-gradient algorithms have so far been limited to discrete action settings. On the other hand, in single-player and cooperative multi-player reinforcement learning, policy-gradient algorithms are widely used, even in continuous action settings \citep{lillicrap2016ddpg,haarnoja2018sac}. This gap in usage is mainly because optimal strategies in adversarial games often require mixing among several actions. This is well known in games with imperfect information or simultaneous moves, but mixing can be beneficial even in fully observable physical systems. Consider a penalty kick in football or a tennis serve: for some actions, players need to commit to their decision before observing what their opponent does. If players commit to the same movement all the time, they will become highly exploitable. Similarly, in cybersecurity, predictable security patterns may be exploited by an adversary. Only recent developments in regularized policy-gradient algorithms have shown how to converge to these mixed strategies without costly averaging \citep{perolat2021rnad,sokota2022mmd}.

In this work, we introduce \AlgorithmName{}, an algorithm for approximating Nash equilibria in large sequential zero-sum games with continuous actions. It parametrizes the strategy for each decision by a mixture of Gaussians. The training jointly learns the means and standard deviations of several Gaussians, as well as the distribution over them. Recent advancements in regularized policy gradients for games were necessary to enable this method, because the mixture strategy over the fixed Gaussian distributions is itself a matrix game, where standard policy-gradient algorithms may not converge. We show that the regularization can be extended even to training the Gaussian parameters, and that it converges in games, where standard gradient descent does not.

We evaluate our method across several games of varying size and type. Our method is applicable to games that mix discrete and continuous actions, as we show in several Poker games, where we treat bets as continuous actions. We also demonstrate that this method is scalable to large games by applying it in full-size heads-up no-limit Texas hold'em. We show that, using our method in self-play on a single GPU, we trained a strategy that is on par with latest Slumbot, an improved version of the winner of the 2018 annual computer poker competition.
\section{Background}
In this work we consider two-player zero-sum games represented by a triplet $(\Actions{1}, \Actions{2}, \Utility)$, $\Actions{\Player} \subseteq \RealNumbers^{N_{\Player}}$ is a compact non-empty action space of the player $\Player \in \{1, 2\}$ and $\Utility: \Actions{1} \times \Actions{2} \to \RealNumbers$ is a continuous payoff function of player $\Player$, that is bounded and Borel measurable. We will use $\Utility_{1} = \Utility$ and $\Utility_{2} = -\Utility$ In the game both players choose simultaneously $\Action{\Player} \in \Actions{\Player}$. Player 1 is trying to select an action such that it maximizes the payoff, whereas Player 2 is minimizing it. 

$\BorelSet(\Actions{\Player})$ is a Borel probability measure in $\Action{\Player}$, which we call a mixed strategy. The expected payoff is then

$$\ExpectedUtility{\Player}(\Strategy{1}, \Strategy{2}) = \E_{\Strategy{1} \otimes \Strategy{2}} [\Utility_{\Player}] = \int_{\Actions{1}} \int_{\Actions{2}} \Utility_{\Player}(a, b) \,d\Strategy{2}(a) \, d\Strategy{1}(b), \;\;\;\; (\Strategy{1}, \Strategy{2}) \in \BorelSet(\Actions{1}) \times \BorelSet(\Actions{2})$$

% [Order of integration can be swapped because of Fubini-Tonelli theorem]

% Let us define effective landscape $\Landscape{\Strategy{\Player}}$ for player $\Player$ as 

% $$\Landscape{\Strategy{-\Player}} (\Action{\Player}) = \int_{\Actions{-\Player}} \Utility_{\Player}(\Action{\Player}, \Action{-\Player}) \,d\Strategy{-\Player}(\Action{-\Player}) $$

Best response to a strategy $\Strategy{-\Player}$ is defined as $\Strategy{\Player}^{BR} = \argmax_{\Strategy{\Player} \in \BorelSet(\Actions{\Player})} \ExpectedUtility{\Player}(\Strategy{\Player}, \Strategy{-\Player}) $. We use the traditional game-theory notation of $-\Player$ denotes the other player than $\Player$. Since $\ExpectedUtility{\Player}$ is affine, there exists a pure strategy that is a best response. Nash equilibrium is a strategy pair $\Strategy{}^\Nash = (\Strategy{1}^\Nash, \Strategy{2}^\Nash)$, such that it is a saddle point of $\ExpectedUtility{1}$, where neither can improve by unilaterally deviating from its strategy

% $$\sup_{\Strategy{\Player} \in \BorelSet(\Actions{\Player})} \ExpectedUtility{\Player}(\Strategy{\Player}, \Strategy{-\Player}) = \sup_{\Action{\Player} \in \BorelSet(\Actions{\Player})} \Landscape{\Strategy{-\Player}}(\Actions{\Player})$$

$$\ExpectedUtility{1}(\Strategy{1}, \Strategy{2}^\Nash) \leq \ExpectedUtility{1}(\Strategy{1}^\Nash, \Strategy{2}^\Nash) \leq \ExpectedUtility{1}(\Strategy{1}^\Nash, \Strategy{2})\;\;\;\;\; \forall \Strategy{1} \in \BorelSet(\Actions{1}), \Strategy{2} \in \BorelSet(\Actions{2})$$

We also use NashConv $\Exploitability$, sometimes called exploitability, to compare different baselines, and it serves as a distance from Nash equilibrium
$$\Exploitability(\Strategy{}) = \max_{\Strategy{1}'}\ExpectedUtility{1}(\Strategy{1}', \Strategy{2}) - \min_{\Strategy{2}'}\ExpectedUtility{1}(\Strategy{1}, \Strategy{2}') $$

\section{\AlgorithmName{}}
Gradient descent-style algorithms in continuous action games often serve to find a pure (deterministic) strategy. As such, most of those algorithms are limited to settings with pure Nash equilibria. A common approach to finding these mixed equilibria is to discretize the action spaces of both players and then compute a mixed strategy in the resulting matrix game. However, this approach will always have an unavoidable worst-case discretization error. Moreover, in an online setting with function approximators, the finer abstraction increases both the number of required samples and the overall model size.

In this work, we focus on a discretization that is optimized during training. Some prior work on \textit{generative adversarial networks} has already used discretization, with $\AtomStrategies$ pure actions in the action space \citep{hsieh2019gan,enrich2020gan}. Then they optimized the value of those pure actions using gradient descent on the exact payoff gradient and also the distribution over them. However, these techniques were not extended into sequential settings. We propose using a Gaussian distribution instead of pure actions, allowing us to easily estimate the gradient and apply entropy regularization. Moreover, in \cref{sec:gaussian}, we show that when we add secondary KL-divergence loss term, it enables convergence even in some games where standard gradient descent fails.
\subsection{Categorical head}
First, let us assume that oracle provides $\AtomStrategies$ actions $\GaussianStrategy^\AtomStrategy$, and the task is only to find a distribution over those. We do not impose constraints on the parametrization of the actions, except that each pair of actions has a uniquely defined expected value. This then results in a matrix game, where each player has $\AtomStrategies$ actions, and the utility given to the players is the expected value when the action is played.

The standard gradient descent methods are known to cycle around the equilibrium point and, as such, may not converge in the last iterate. Several techniques have been developed that converge to the Nash equilibrium in the last iterate \citep{korpelevich1976extragradient,popov1980optimistic,daskalakis2018optimistic}. In this work, we focus on KL-regularization-based methods, such as \textit{regularized Nash dynamics (RNaD)} and \textit{magnetic mirror descent (MMD)}, as they relate to our use of KL-regularization in Gaussian reparametrization \citep{perolat2021rnad,sokota2022mmd}.

Consider a strategy $\Strategy{\Player}$ of player $\Player$, which is a categorical distribution over the pure actions $\Action{\Player}^\AtomStrategy$. With the magnet regularization, the goal of each player is to find a solution to the following optimization problem
\begin{equation}
\label{eq:mmd_update}
\Strategy{\Player}' = \argmax_{\Strategy{\Player}} \sum_{\AtomStrategy = 1}^\AtomStrategies \Strategy{\Player}(\Action{\Player}^\AtomStrategy) \ActionValue{\Player}(\Action{\Player}^\AtomStrategy) - \RegularizationStrength \KLDiv(\Strategy{\Player}, \Strategy{\Player}^{M})% - \LearningRate \KLDiv(\Strategy{\Player}, \Strategy{\Player}^{\Timestep})
\end{equation}
$\ActionValue{\Player}(\Action{\Player}^\AtomStrategy)$ is a value player $\Player$ receives after playing action $\Action{\Player}^\AtomStrategy$, and $\Strategy{\Player}^M$ is a magnet strategy, which can be arbitrary in the game. It has been proven by prior work that this regularized objective has a unique fixed point for fixed magnet regularization $\RegularizationStrength$ and strategy $\Strategy{\Player}^M$, which is a quantal response equilibrium \citep{sokota2022mmd}. It has also been shown that periodically updating the $\Strategy{\Player}^M$ or annealing $\RegularizationStrength$ results in convergence to the Nash equilibrium \citep{perolat2021rnad}.

We are not making any improvements to these algorithms for discrete action games; we merely use them as an effective algorithm to approximate the Nash equilibrium in the matrix game. Other modifications of gradient descent that converge to the Nash equilibrium, such as optimism or extragradient \citep{korpelevich1976extragradient,popov1980optimistic}, are also viable. We used the magnet regularization for its compatibility with function approximators in large sequential games, which has been demonstrated in Stratego and Generals.io \citep{perolat2022stratego,sokota2025stratego,straka2026generals}. 

\subsection{Gaussian head}
\label{sec:gaussian}
In single-player settings, it is common to find strategies in continuous action spaces by parametrizing the strategy as a Gaussian distribution defined by its mean $
\Mean$ and a log of standard deviation $\log(\Variance)$ \citep{haarnoja2018sac}. During training, it samples from this parametrized Gaussian to estimate the gradient. In addition to gradient estimation, this reparametrization enables entropy regularization, which has been shown to improve exploration in discrete-action settings. Gradient descent is known not to converge in some simple continuous action games, such as $f(x, y) = xy$ \citep{martin2023rpn}. We show in experiments that this prevails even when the strategy is parametrized by a Gaussian. In this section, we focus only on strategies parametrized by a single Gaussian and show that adding the magnet KL regularization improves the convergence properties of gradient descent.

Consider a strategy $\Strategy{\Player}$ parametrized by the Gaussian mean $\Mean_{\Player}$ and standard deviation $\Variance_\Player$. Throughout the paper, when the mean $\Mean{}$, standard deviation $\Variance$, and strategy $\Strategy{}$ share the same subscripts and superscripts, they represent the same strategy. The expected utility of the strategy is then 
$$\ExpectedUtility{\Player}^G(\Strategy{}) = \E_{\Action{1} \sim \Gaussian(\Mean_{1}, \Variance_{1}^2), \Action{2} \sim \Gaussian(\Mean_{2}, \Variance_{2}^2)} \Utility_{\Player}(\Action{1}, \Action{2})$$
As with discrete games, gradient descent does not converge because the gradient field is not strongly monotonic. Adding a sufficiently strong regularizer to the field then makes the game monotonic, and therefore the gradient descent will converge to a fixed point. 

Consider a magnet strategy $\Strategy{}^M$. We will use $\Strategy{} := (\Strategy{1}, \Strategy{2})$. The gradient dynamics on this reparametrized strategy with the KL regularization are
$$\GameField^\RegularizerField(\Strategy{}) = \GameField(\Strategy{}) + \RegularizationStrength \RegularizerField(\Strategy{})  = \begin{pmatrix}
    -\nabla_{\Strategy{1}} \ExpectedUtility{1}^G(\Strategy{})\\
    \nabla_{\Strategy{2}} \ExpectedUtility{1}^G(\Strategy{}) \\
\end{pmatrix} + \RegularizationStrength \begin{pmatrix}
    \nabla_{\Strategy{1}} \KLDiv(\Strategy{1}, \Strategy{1}^M)\\
    \nabla_{\Strategy{2}} \KLDiv(\Strategy{2}, \Strategy{2}^M)\\
\end{pmatrix} 
$$

\begin{assumption}
\label{as:lip}
     $\GameField(\Strategy{})$ is $\LipConst_V$-Lipschitz continuous on the Gaussian strategy $\Strategy{}$,
    $$\|\GameField(\Strategy{}) - \GameField(\Strategy{}')\| \leq \LipConst_V \|\Strategy{} - \Strategy{}'\| \quad \quad \forall \Strategy{}, \Strategy{}' $$
\end{assumption}

\begin{restatable}[]{theorem}{MmdConvergence}
\label{thm:mmd_convergence}
     Let \cref{as:lip} hold and  $\RegularizationStrength > \frac{\LipConst_V}{\MonConst_\RegularizerField}$. If $\LearningRate < \frac{2\MonConst_M}{\LipConst_M^2}$ then the Projected Gradient Descent Ascent (PGDA) converges to unique solution $\Strategy{}^*$ of variational inequalities $\VI(\GameField^\RegularizerField, \Strategies{})$ and for each strategy $\Strategy{}^k$ in iteration $k$ following holds
    \begin{equation}
        \|\Strategy{}^k - \Strategy{}^*\|^2 \leq r(\RegularizationStrength)^k\|\Strategy{}^0     - \Strategy{}^*\|^2 \quad \quad r(\RegularizationStrength) = 1 - 2 \LearningRate \MonConst_M + \LearningRate^2 \LipConst_M^2 
    \end{equation}
\end{restatable}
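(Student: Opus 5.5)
The plan is the standard contraction argument for projected gradient methods on strongly monotone, Lipschitz variational inequalities. Here $\MonConst_M$ and $\LipConst_M$ are the strong monotonicity and Lipschitz constants of the regularized field $\GameField^\RegularizerField$. The step is $\PGDA(\Strategy{}) = \ProjStrategies(\Strategy{} - \LearningRate \GameField^\RegularizerField(\Strategy{}))$, where $\Strategies{}$ is the closed convex set of admissible Gaussian parameters (means in a box, standard deviations bounded below and above). I would show that $\PGDA$ is a contraction with modulus $\sqrt{r(\RegularizationStrength)}$. Then both the existence and uniqueness of the VI solution and the geometric rate follow at once from Banach's fixed point theorem.

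\textbf{Step 1: properties of the regularizer.} For Gaussians, $\KLDiv(\Gaussian(\Mean,\Variance^2),\Gaussian(\Mean^M,(\Variance^M)^2)) = \log\frac{\Variance^M}{\Variance} + \frac{\Variance^2 + (\Mean-\Mean^M)^2}{2(\Variance^M)^2} - \frac12$, and it is coordinatewise in the players. Hence $\RegularizerField$ is the gradient of a separable function. On the compact parameter set $\Strategies{}$, with $\Variance$ bounded away from zero, this function is strongly convex with some modulus $\MonConst_\RegularizerField$ and smooth with some constant $\LipConst_\RegularizerField$. In the $(\Mean,\log\Variance)$ parametrization, the $\Mean$ part is the quadratic $(\Mean-\Mean^M)^2/(2(\Variance^M)^2)$. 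The $s=\log\Variance$ part is $-s + e^{2s}/(2(\Variance^M)^2)$, whose second derivative $2e^{2s}/(\Variance^M)^2$ is bounded above and below on the compact set. So $\RegularizerField$ is $\MonConst_\RegularizerField$-strongly monotone and $\LipConst_\RegularizerField$-Lipschitz.

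\textbf{Step 2: monotonicity and Lipschitzness of the regularized field.} The game part $\GameField$ satisfies $\langle \GameField(\Strategy{})-\GameField(\Strategy{}'),\Strategy{}-\Strategy{}'\rangle \ge -\LipConst_V\|\Strategy{}-\Strategy{}'\|^2$ by Cauchy--Schwarz and \cref{as:lip}. This is where the analysis does not rely on any monotonicity of the game itself. Therefore $\GameField^\RegularizerField$ is $\MonConst_M$-strongly monotone with $\MonConst_M := \RegularizationStrength\MonConst_\RegularizerField - \LipConst_V > 0$, precisely because $\RegularizationStrength > \LipConst_V/\MonConst_\RegularizerField$. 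It is also $\LipConst_M$-Lipschitz with $\LipConst_M := \LipConst_V + \RegularizationStrength\LipConst_\RegularizerField$.

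\textbf{Step 3: contraction.} Let $\Strategy{}^*$ solve $\VI(\GameField^\RegularizerField,\Strategies{})$. Equivalently, $\Strategy{}^* = \PGDA(\Strategy{}^*)$ by the projection characterization of VIs. Existence holds since $\Strategies{}$ is compact convex and the field is continuous; uniqueness holds by strong monotonicity. Using nonexpansiveness of $\ProjStrategies$,
\begin{equation*}
\|\PGDA(\Strategy{})-\Strategy{}^*\|^2 \le \|\Strategy{}-\Strategy{}^*\|^2 - 2\LearningRate\langle \GameField^\RegularizerField(\Strategy{})-\GameField^\RegularizerField(\Strategy{}^*),\Strategy{}-\Strategy{}^*\rangle + \LearningRate^2\|\GameField^\RegularizerField(\Strategy{})-\GameField^\RegularizerField(\Strategy{}^*)\|^2 .
\end{equation*}
Strong monotonicity bounds the inner product term and Lipschitzness bounds the last term. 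This gives the factor $1-2\LearningRate\MonConst_M+\LearningRate^2\LipConst_M^2 = r(\RegularizationStrength)$. The factor satisfies $r<1$ exactly when $\LearningRate < 2\MonConst_M/\LipConst_M^2$, and $r\ge 0$ because $\MonConst_M\le\LipConst_M$. Iterating $k$ times yields the claimed bound and convergence.

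\textbf{Main obstacle.} The delicate step is Step 1. The Gaussian KL is not globally strongly convex or smooth in $(\Mean,\Variance)$: its curvature in $\Variance$ blows up as $\Variance\to 0$, and strong convexity degrades for large $\Variance$. One must therefore work on a compact feasible set with $\Variance\in[\Variance_{\min},\Variance_{\max}]$, bounded means, and a fixed magnet, and read the constants $\MonConst_\RegularizerField$ and $\LipConst_\RegularizerField$ off the Hessian there. One also has to check that the Hessian is positive definite, which it is, being diagonal with positive entries. I would state these constants explicitly as depending on the feasible box and the magnet's variance.
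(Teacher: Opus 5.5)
Your proposal is correct and follows the paper's proof step for step: the Cauchy--Schwarz bound on the non-monotonicity of $V$, strong monotonicity and Lipschitzness of $R$ from the diagonal Hessian of the Gaussian KL, $\delta_M=\eta\delta_R-L_V$ and $L_M=\eta L_R+L_V$, nonexpansiveness of the projection giving $r=1-2\alpha\delta_M+\alpha^2L_M^2$, and the fixed-point/VI equivalence followed by iteration. The one flaw is in Step~1: the paper stays in $(\mu,\sigma)$ coordinates, where the Hessian is $\mathrm{diag}(1/\sigma_i^{2,M},\,1/\sigma_i^{2}+1/\sigma_i^{2,M})$, so $\delta_R=\min_i 1/\sigma_i^{2,M}$ holds with no upper bound on $\sigma$ (your claim that strong convexity degrades for large $\sigma$ is false in these coordinates) and $L_R=1/\sigma_{\min}^2+\max_i 1/\sigma_i^{2,M}$ requires only $\sigma\ge\sigma_{\min}$; your switch to $\log\sigma$ is what forces $\sigma\le\sigma_{\max}$, and it certifies monotonicity of a different vector field than the $(\mu,\sigma)$-gradient $R$ that enters $V^R$ (existence also follows from Banach's theorem on a closed set, so compactness is not needed).
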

Proof to \Cref{thm:mmd_convergence} follows a similar structure to that of MMD \citep{sokota2022mmd}. Similarly to that result, the solution of the regularized problem is unique, but this solution is not the Nash equilibrium of the original game. Contrary to the MMD, where the solution is a Quantal Response Equilibrium, the Gaussian reparametrization does not have this guarantee, because the space of feasible strategies is limited by hyperparameter $\Variance_{\min}$, which we use because standard deviation has to be $\Variance > 0$, such that the KL-divergence is defined.. 

Even if the fixed point is not a Nash equilibrium, several approaches can ensure convergence, such as annealing the regularization or updating the magnet. In the remainder of this work, we use the magnet replacement strategy, which was first applied in \textit{Regularized Nash Dynamics} \citep{perolat2021rnad, perolat2022stratego}, where the previous fixed point replaces the magnet. If we assume only a convex-concave games, which have a pure Nash equilibrium \citep{sion1958minimax}, then the sequence of the fixed point converges to an $\epsilon$-equilibrium, where $\epsilon$ is bounded by the $\Variance_{\min}$, which is a tunable hyperparameter. 
\begin{restatable}[]{theorem}{FinalConvergence}
\label{thm:final_convergence}
    In a convex-concave game with payoff $\Utility$ that has continuous second derivative. Assume only strategies $\Strategies{}^{\Variance_{\min}} \subseteq \Strategies{}$, which use $\Variance = \Variance_{\min}$ and let $\Variance_{\min} > 0$, $\RegularizationStrength > \LipConst_\GameField \Variance_{\min}^2$ and $M_{\Player} = \sup |\frac{\partial^2 \Utility}{\partial \Action{\Player}^2} |$. Let $\Strategy{}^{M, t+1} = \Inner(\Strategy{}^{M, t})$ be the exact solution for the fixed magnet at iteration $t$ from arbitrary $\Strategy{}^{M, 0} \in \Strategies{}^{\Variance_{\min}}$. Then the sequence of $\Strategy{}^{M, t}$ converges to a fixed solution $\Strategy{}^*$ with exploitability at most
    \begin{equation}
        \Exploitability(\Strategy{}^*) \leq \frac{M_1 + M_2}{2} \Variance_{\min}^2
    \end{equation}
\end{restatable}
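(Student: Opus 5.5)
With $\Variance$ fixed at $\Variance_{\min}$, a strategy is identified with its mean $\Mean_\Player$. The KL divergence between two Gaussians of equal variance is $\KLDiv(\Strategy{\Player},\Strategy{\Player}^M)=\|\Mean_\Player-\Mean_\Player^M\|^2/(2\Variance_{\min}^2)$. The regularized problem for a fixed magnet is therefore a proximal-point step on the smoothed game $\ExpectedUtility{1}^G(\Mean_1,\Mean_2)=\E\,\Utility(\Mean_1+\Variance_{\min}\xi_1,\Mean_2+\Variance_{\min}\xi_2)$ with step size $\Variance_{\min}^2/\RegularizationStrength$.

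The plan has three steps.
\begin{enumerate}
\item \emph{The smoothed game is convex-concave.} Convexity and concavity in each argument are preserved under averaging with a fixed noise distribution. So $\ExpectedUtility{1}^G$ is convex-concave and its field $\GameField$ is monotone.
\item \emph{The magnet iteration converges.} The regularizer field is $\RegularizerField(\Mean)=(\Mean-\Mean^M)/\Variance_{\min}^2$, which is strongly monotone with constant $\MonConst_\RegularizerField=1/\Variance_{\min}^2$. The condition $\RegularizationStrength>\LipConst_\GameField\Variance_{\min}^2$ is exactly the hypothesis of \cref{thm:mmd_convergence}, so each $\Inner(\Strategy{}^{M,t})$ is well defined and unique. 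The map $\Inner$ is the resolvent $(\IdentityM+\tfrac{\Variance_{\min}^2}{\RegularizationStrength}\GameField)^{-1}$ of a monotone operator. It is therefore firmly nonexpansive, and its fixed points are exactly the zeros of $\GameField$, i.e.\ the saddle points of $\ExpectedUtility{1}^G$.
\item \emph{Identify the limit.} A saddle point of the smoothed game exists by Sion's theorem and compactness. By the classical proximal-point argument (Fejér monotonicity plus the Krasnosel'skii/Opial lemma), the iterates $\Strategy{}^{M,t}$ converge to such a fixed point $\Strategy{}^*=(\Mean_1^*,\Mean_2^*)$. At this point the regularizer vanishes, so $\Strategy{}^*$ is a Nash equilibrium of the restricted game on $\Strategies{}^{\Variance_{\min}}$.
\end{enumerate}

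For the exploitability bound, note that best responses in the original game can be taken to be pure actions. Against $\Strategy{2}^*$, player 1's best response value is $\max_{a}\E\,\Utility(a,\Mean_2^*+\Variance_{\min}\xi_2)$.

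We compare each payoff to its smoothed counterpart using a second-order Taylor expansion. The first-order term vanishes in expectation and the second-order term is at most $\tfrac12 M_\Player\Variance_{\min}^2$. This gives, for every $a$,
\[
\E\,\Utility(a+\Variance_{\min}\xi_1,\Mean_2^*+\Variance_{\min}\xi_2)\ \ge\ \E\,\Utility(a,\Mean_2^*+\Variance_{\min}\xi_2)-\tfrac{M_1}{2}\Variance_{\min}^2 .
\]
The left-hand side is the payoff of the Gaussian strategy with mean $a$, so it is at most the restricted-equilibrium value $v^G=\ExpectedUtility{1}^G(\Strategy{}^*)$. Hence player 1's best-response value is at most $v^G+\tfrac{M_1}{2}\Variance_{\min}^2$. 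Symmetrically, player 2's best-response value is at least $v^G-\tfrac{M_2}{2}\Variance_{\min}^2$. Subtracting the two bounds gives
\[
\Exploitability(\Strategy{}^*)\le\tfrac{M_1+M_2}{2}\Variance_{\min}^2 .
\]
In one dimension the Taylor remainder gives exactly $M_\Player/2$ times the variance. In several dimensions, $M_\Player$ is read as a bound on the Hessian operator norm, and an extra factor of the dimension appears.

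The main obstacle is step 3, establishing convergence of the outer magnet iteration rather than just existence of a fixed point. I would first try to show that $\|\Strategy{}^{M,t}-\Strategy{}^*\|$ is nonincreasing, using firm nonexpansiveness of $\Inner$ and monotonicity of $\GameField$, and then conclude with Opial. Compactness of the action sets requires handling projections. If a Gaussian is evaluated only on the compact set, I would treat the projection as part of a constrained VI; the resolvent of a monotone operator plus a normal cone is still firmly nonexpansive, so the same argument goes through.
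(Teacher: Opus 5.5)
Your proposal is correct and follows essentially the same route as the paper. Firm nonexpansiveness of the resolvent $\Psi$ plus Opial's lemma is exactly what the paper derives by hand: a three-point identity gives $\|\mu^*-\mu^+\|^2\le\|\mu^*-\mu^M\|^2-\|\mu^+-\mu^M\|^2$ for every solution $\mu^*$ of $\mathrm{VI}(V,\Pi^{\sigma_{\min}})$, then successive differences are summable, a cluster point solves the VI, and Fejér monotonicity gives convergence of the whole sequence. Your exploitability bound is the paper's Taylor-remainder lemma combined with the restricted saddle-point property, and your explicit check that Gaussian smoothing preserves convex-concavity (which is what makes $V$ monotone) fills in a step the paper only asserts.
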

The proofs of \Cref{thm:mmd_convergence,thm:final_convergence} are in \Cref{app:proof}. Proof of \Cref{thm:final_convergence} first shows that the fixed point is the magnet itself only when the magnet is a Nash equilibrium. Then, it shows that the fixed point for a specific magnet is closer to the Nash equilibrium than the magnet itself, in terms of geometric distance. However, this does not say that each strategy in the sequence is less exploitable than the previous one. But, it says that the sequence will eventually converge.
\subsection{Mixture of Gaussians}
The prior section focused on a single Gaussian, but in simultaneous-move or imperfect-information games, the equilibria are often not pure, requiring randomization among several actions. Consider a game where equilibrium mixes between two real values, -1 and 1, uniformly. Using a single Gaussian can never approximate such a strategy. In this section, we propose an algorithm, \AlgorithmName{} (\AlgorithmShort{}), that parametrizes the strategy as a Gaussian mixture. It uses the KL divergence between the current and magnet strategies as a secondary loss term. Since the KL divergence between a mixture of Gaussians does not have a closed-form solution, we use an upper bound, which is a sum of the KL divergence between the categorical distribution and the weighted sum of all the Gaussian KL divergences. The final loss is then combined by two PPO-like losses with magnet regularization, one for the categorical distribution and the second for the Gaussian distribution \citep{schulman2017ppo,sokota2022mmd}.

MMPO parametrizes the strategy space of player $\Player$ by $K$ Gaussian distributions represented by $\Mean_{\Player, k}, \Variance_{\Player, k}$ and a categorical distribution over them $\CategoricalStrategy_{\Player, k}$
% \begin{equation}
%     \Strategy{\Player} = \{\Mean_1, \dots \Mean_K\} \cup \{\Variance_1, \dots \Variance_K\} \cup \{\CategoricalStrategy_1, \dots \CategoricalStrategy_K\} 
% \end{equation}
\begin{equation}
    \Strategy{\Player} = \sum_{k = 1}^{K} \CategoricalStrategy_{\Player, k}\cdot \Gaussian(\Mean_{\Player, k}, \Variance_{\Player, k}^2)
\end{equation}
\begin{equation}
    \sum_{k = 1}^K \CategoricalStrategy_{\Player, k} = 1 \quad \quad \CategoricalStrategy_{\Player, k} \geq 0, \Variance_{\Player, k} \geq \Variance_{\min} \quad  \forall k \in \{1,\dots,K\} 
\end{equation}
The expected utility of this strategy profile is then
$$\ExpectedUtility{\Player}^G(\Strategy{}) = \sum_{k = 1}^K  \sum_{l = 1}^K \CategoricalStrategy_{1, k} \CategoricalStrategy_{2, l}  \E_{\Action{1} \sim \Gaussian(\Mean_{1, k}, \Variance_{1, k}^2), \Action{2} \sim \Gaussian(\Mean_{2, l}, \Variance_{2, l}^2)} \Utility_{\Player}(\Action{1}, \Action{2})$$
We store the strategy $\Strategy{\Neural}$ implicitly in the weights of the neural network $\Neural$. Consider that the algorithm samples action $\Action{}^t$, from a $(\CategoricalStrategy^t, \Mean^t, \Variance^t)$ in state $\State^t$. We will overload the notation $\Gaussian^t := \Gaussian(\Mean^t, \Variance^{2, t})$. The loss is then combined from the loss for the categorical head $\Loss_\CategoricalStrategy$ and for the Gaussian components $\Loss_\Gaussian$  
\begin{equation}
    \begin{split}
        \PPOSurrogate_w &= \min\Big(\frac{\Strategy{\Neural}(\State^t, \CategoricalStrategy^t)}{\Strategy{\text{old}}(\State^t, \CategoricalStrategy^t)}\Advantage (\State^t, \Action{}^t), \text{clip}\big(\frac{\Strategy{\Neural}(\State^t, \CategoricalStrategy^t)}{\Strategy{\text{old}}(\State^t, \CategoricalStrategy^t)}, 1 - \epsilon_w, 1+\epsilon_w \big)\Advantage (\State^t, \Action{}^t) \Big)\\
        \PPOSurrogate_\Gaussian &= \min\Big( \frac{\Strategy{\Neural}(\Action{}^t | \State^t, \Gaussian^t)}{\Strategy{\text{old}}(\Action{}^t | \State^t, \Gaussian^t)}\Advantage (\State^t, \Action{}^t), \text{clip} \big( \frac{\Strategy{\Neural}(\Action{}^t | \State^t, \Gaussian^t)}{\Strategy{\text{old}}(\Action{}^t | \State^t, \Gaussian^t)}, 1 - \epsilon_\Gaussian, 1+\epsilon_\Gaussian \big)\Advantage (\State^t, \Action{}^t) \Big)\\
        \Loss_w &= -\E_t \PPOSurrogate_w + \RegularizationStrength \KLDiv(\Strategy{\Neural}(\State^t, \cdot), \Strategy{}^M(\State^t, \cdot)) - \EntropyCoefficient_{w}\Entropy(\Strategy{\Neural}(\State^t, \cdot))\\
        \Loss_\Gaussian &= -\E_t\PPOSurrogate_\Gaussian+ \RegularizationStrength \KLDiv( \Gaussian^t, \Gaussian^M) - \EntropyCoefficient_{\Gaussian}\Entropy(\Gaussian^t) \\
        \Loss_{\Strategy{}} &= \Loss_\CategoricalStrategy + \Loss_\Gaussian
    \end{split}
\end{equation}
both $\Loss_w$ and $\Loss_\Gaussian$ include the same terms. PPO clipped loss, KL-divergence regularization to a magnet, and entropy loss. We also train a critic, with the generalized advantage estimate \citep{schulman2015gae}. The entire training is done in self-play, using only the game simulator.

Increasing the number of Gaussian components $\AtomStrategies$ broadens the set of strategies that can be modeled, but it also increases the number of samples required for training. Despite that, one advantage of this model is that it can learn to ignore certain components if they are not useful for the current decision. 

\AlgorithmShort{} is also applicable to games that combine discrete and continuous actions, such as some continuous Poker games that combine the fold and call actions with continuous betting. The only change to the \AlgorithmShort{} is that for the discrete actions, there will be no corresponding Gaussian $\Mean$ and~$\Variance$.  
\section{Experiments}
We have evaluated \AlgorithmShort{} across several game settings. We first evaluate the effect of the magnet regularization in isolation. Then we compare our algorithm to several baselines that approximate Nash equilibrium in large games with parametrized strategies, in both one-shot and sequential settings. Lastly, we demonstrate that the approach is applicable even in large games, as we use it in Heads-up no-limit Texas Hold 'em, 

\subsection{Convergence of the regularized Gaussian head}
\label{sec:exp_gaussian}
This experiment validates \Cref{thm:mmd_convergence} and explores MMPO's behavior in very simple games. We consider 3 games, with rules in \Cref{app:games}, in which there is a pure equilibrium, but gradient descent fails to converge to it. In \Cref{fig:single_ideal}, we show a comparison of runs across the three games, either with usual gradient descent or with regularized projected gradient descent. For the exploitability computation, we used $\sigma = 0$. With the additional regularization, the strategies indeed converge to the Nash equilibrium.

In \Cref{fig:single_ppo}, we show the same comparison, but a neural network now parameterizes the strategy, and the updates are estimated via on-policy sampling. Because of the additional noise from the training, the strategy does not converge to the exact Nash equilibrium, but the regularization improves performance by at least an order of magnitude. In the rotational game, the exploitability remains quite high, but that is due to the payoff function of the game, where the exploitability is high, even when strategies are geometrically very close to the Nash equilibrium. In \Cref{app:add_gradient}, we show the same graphs when gradient descent is applied to tabular representations of the strategy, with an update using on-policy sampling. The behavior is between the two presented here.

\begin{figure*}[!h]
    \centering    
    \begin{subfigure}[b]{0.32\textwidth}
        \centering
    \includegraphics[width=0.99\linewidth]{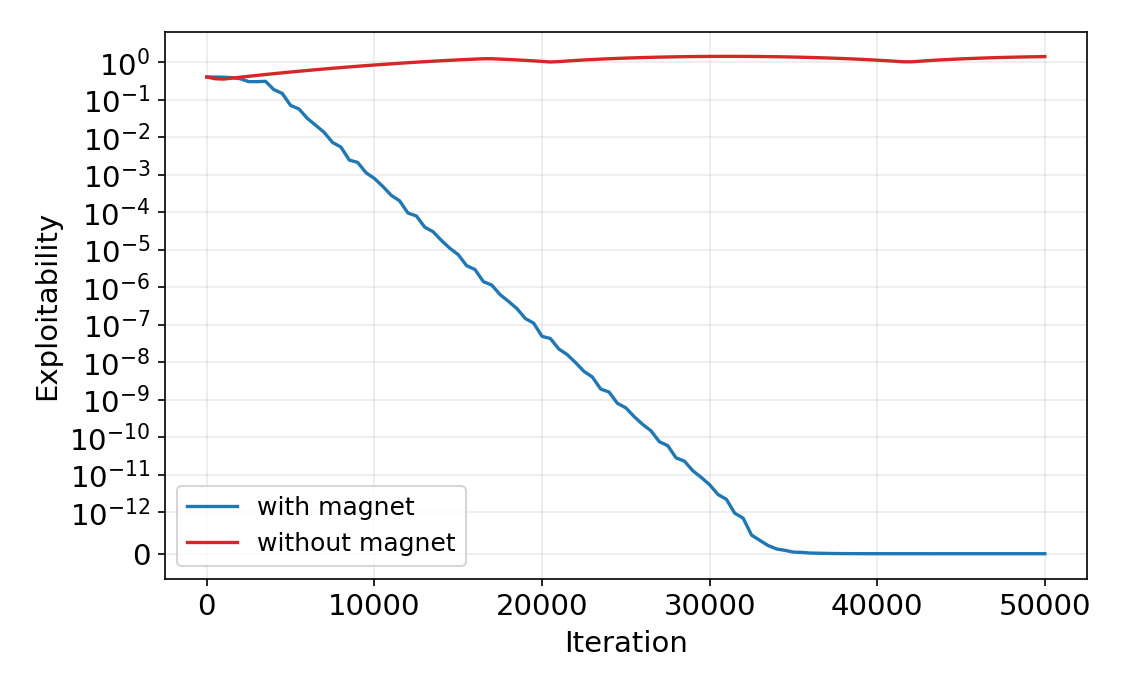}
        \caption{Continuous Matching Pennies}
        \label{fig:single_ideal_mp}
    \end{subfigure}
    \hfill % Adds horizontal space between subfigures
    \begin{subfigure}[b]{0.32\textwidth}
        \centering
    \includegraphics[width=0.99\linewidth]{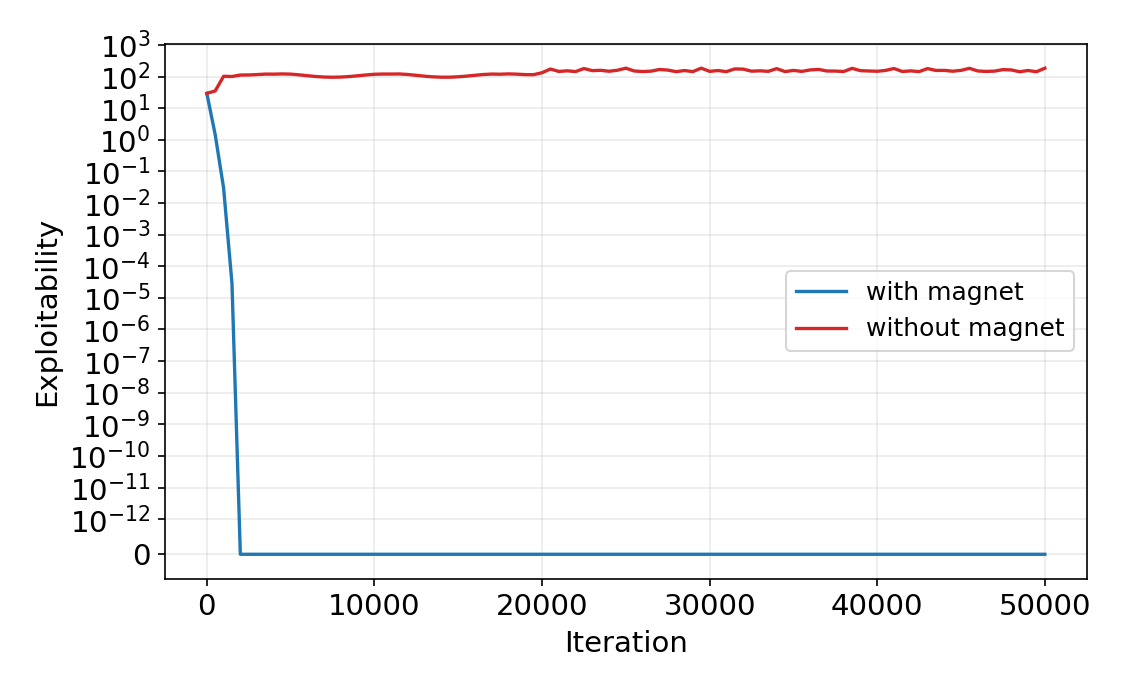}
        \caption{Rotational game, dim=2}
        \label{fig:single_ideal_rot2}
    \end{subfigure} 
    \hfill % Adds horizontal space between subfigures
    \begin{subfigure}[b]{0.32\textwidth}
        \centering
    \includegraphics[width=0.99\linewidth]{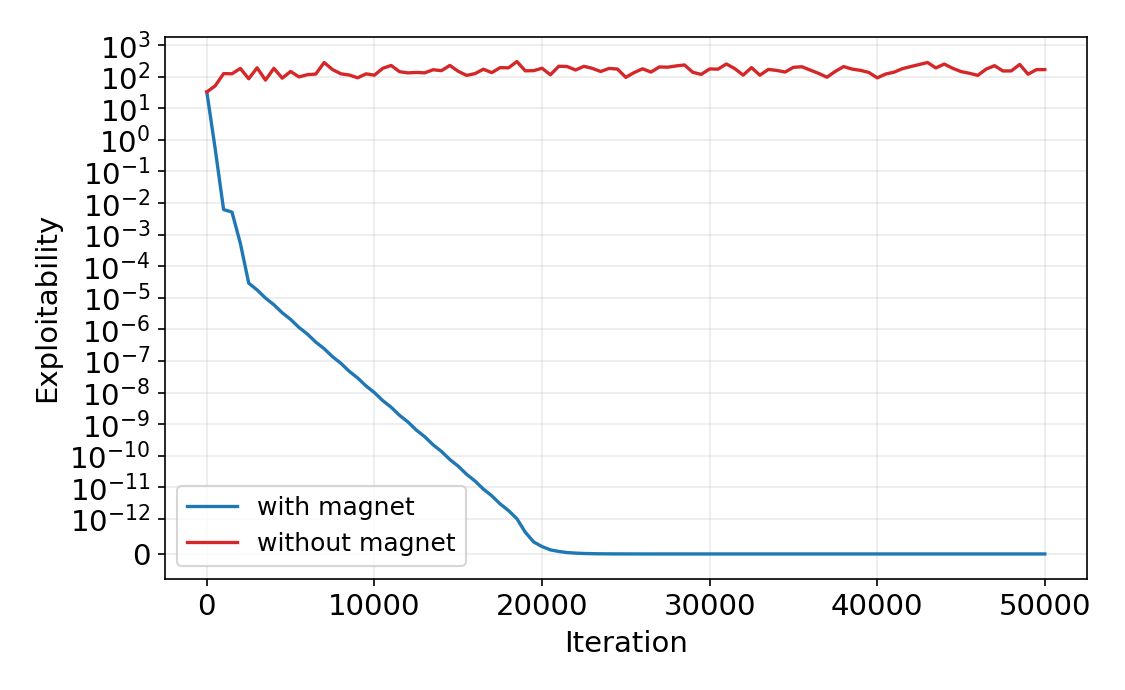}
        \caption{Rotational game, dim=3}
        \label{fig:single_ideal_rot3}
    \end{subfigure} 
    \caption{Exploitability in one-shot games with pure Nash equilibria using different gradient descent with or without magnet.}
    \label{fig:single_ideal} 
\end{figure*}  

\begin{figure*}[!h]
    \centering    
    \begin{subfigure}[b]{0.32\textwidth}
        \centering
    \includegraphics[width=0.99\linewidth]{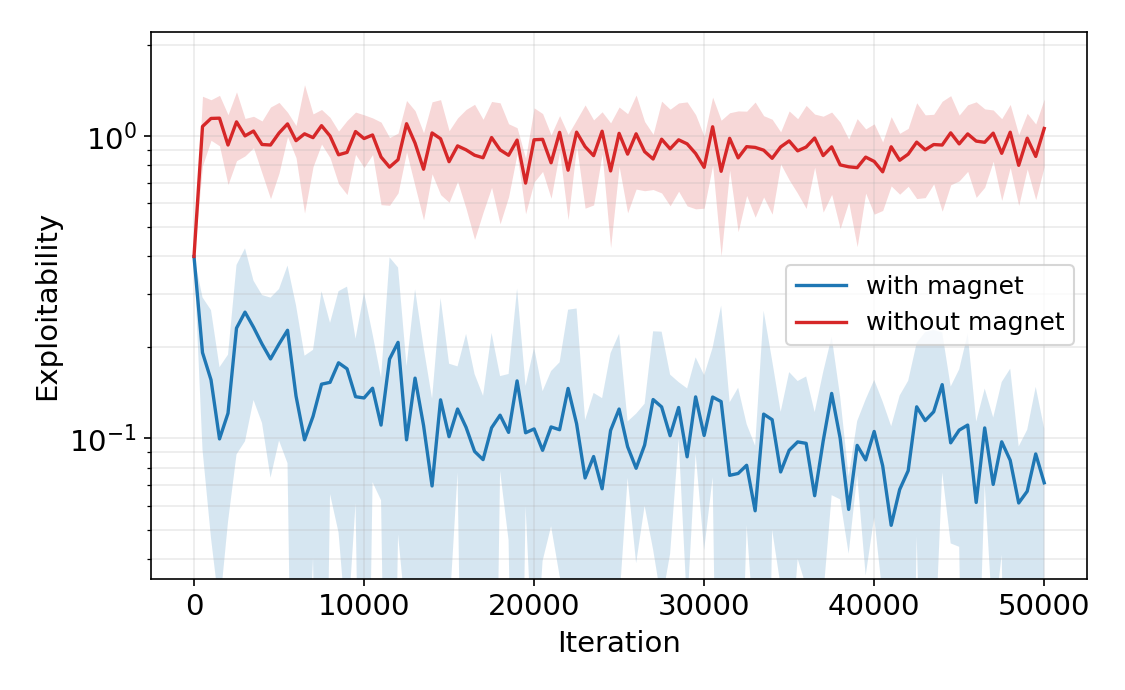}
        \caption{Continuous Matching Pennies}
        \label{fig:single_ppo_mp}
    \end{subfigure}
    \hfill % Adds horizontal space between subfigures
    \begin{subfigure}[b]{0.32\textwidth}
        \centering
    \includegraphics[width=0.99\linewidth]{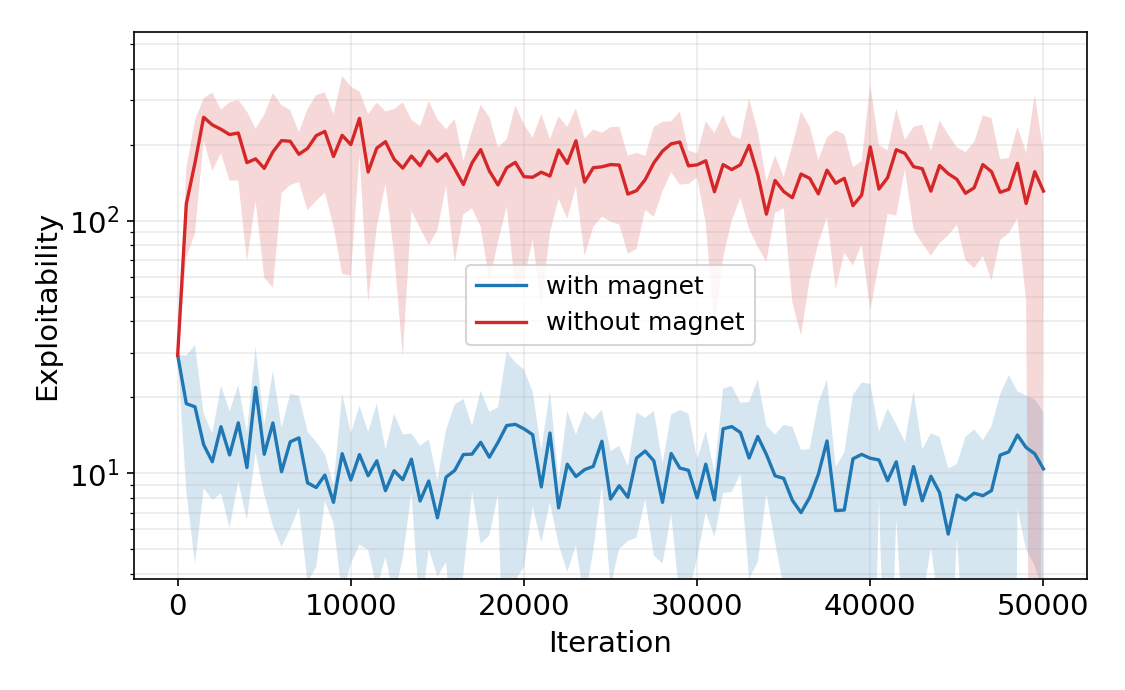}
        \caption{Rotational game, dim=2}
        \label{fig:single_ppo_rot2}
    \end{subfigure} 
    \hfill % Adds horizontal space between subfigures
    \begin{subfigure}[b]{0.32\textwidth}
        \centering
    \includegraphics[width=0.99\linewidth]{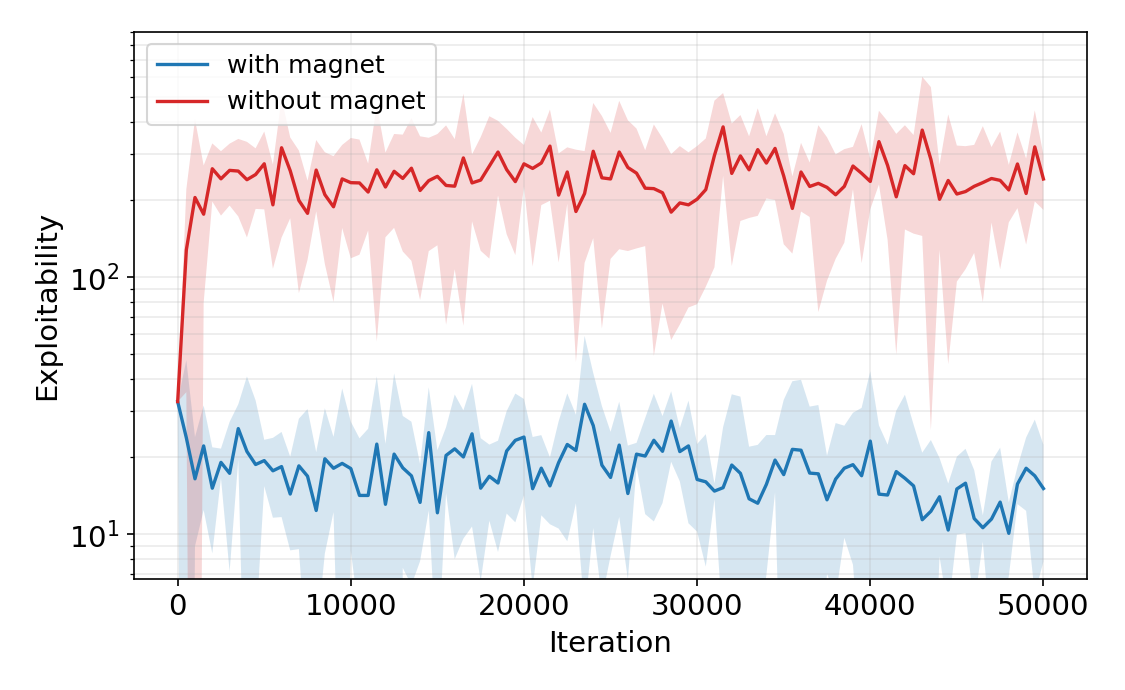}
        \caption{Rotational game, dim=3}
        \label{fig:single_ppo_rot3}
    \end{subfigure} 
    \caption{Exploitability in one-shot games with pure Nash equilibria, using PPO with and without magnet regularization}
    \label{fig:single_ppo} 
\end{figure*}  
\subsection{One-shot games}
\label{sec:exp_one_shot}
We compare our methods with several neural baselines on one-shot games. Some baselines we use in this work use vastly different training, so to keep the comparison fair, we compare them in number of environment interactions required. In \cref{app:walltime}, we show the same comparison using the wall-time required. We do not compare with tabular baselines because our work aims to enable efficient training in large games; as such, we do not claim that \AlgorithmShort{} should be used in games where tabular techniques are applicable. We believe that in these tractable games, the tabular techniques would outperform our method because they contain less noise and therefore can yield more precise gradient estimates.

The baselines we have used are Proximal Policy Optimization (PPO) \citep{schulman2017ppo}, Neural Fictitious Self-play (NFSP) \citep{heinrich2016nfsp}, Policy Space Response Oracles (PSRO) \citep{lanctot2017psro}, and Randomized Policy Networks (RPN) \citep{martin2023rpn}. We provide more details about the baselines and their hyperparameter setting in \Cref{app:exp_details}.

The results of this experiment are in \Cref{fig:one_shot}. We have used some games that contain non-finite support equilibria. The finite mixture model cannot approximate an arbitrary strategy. It is limited by the number of Gaussians. PSRO adds a single Gaussian strategy to the mix at each iteration, so it gradually mixes among more strategies than the mixture model. This can be seen in the Glicksberg-Gross game, where PSRO converges closer to the equilibrium. In the Appendix, we present an ablation study on the same games, varying the number of Gaussian components to show this effect in detail.

\begin{figure*}[!h]
    \centering    
    \hfill % Adds horizontal space between subfigures
    \begin{subfigure}[b]{0.245\textwidth}
        \centering
    \includegraphics[width=0.99\linewidth]{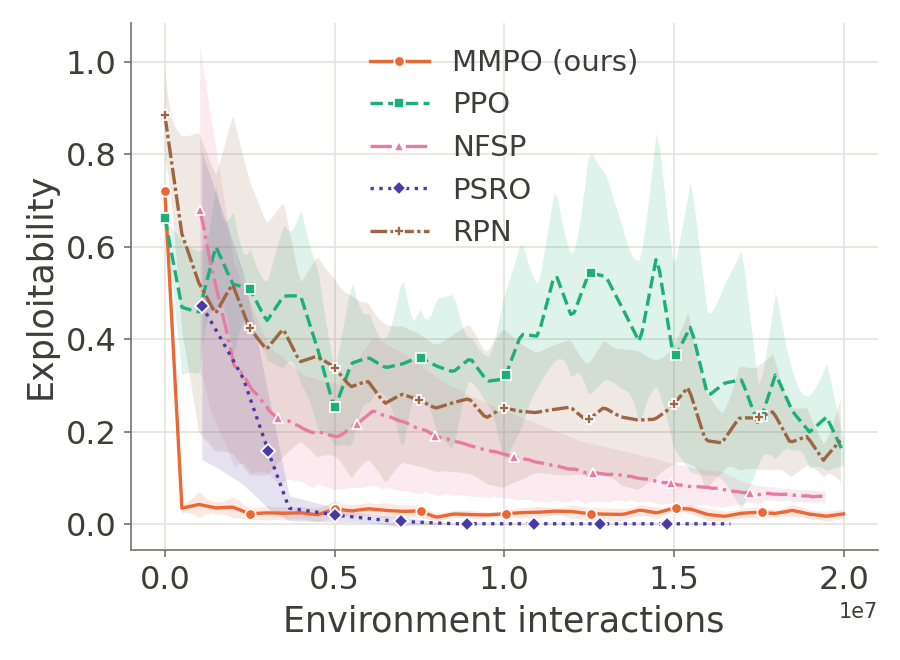}
        \caption{Glicksberg-Gross game}
        \label{fig:one_shot_glicks}
    \end{subfigure}  
    \begin{subfigure}[b]{0.245\textwidth}
        \centering
    \includegraphics[width=0.99\linewidth]{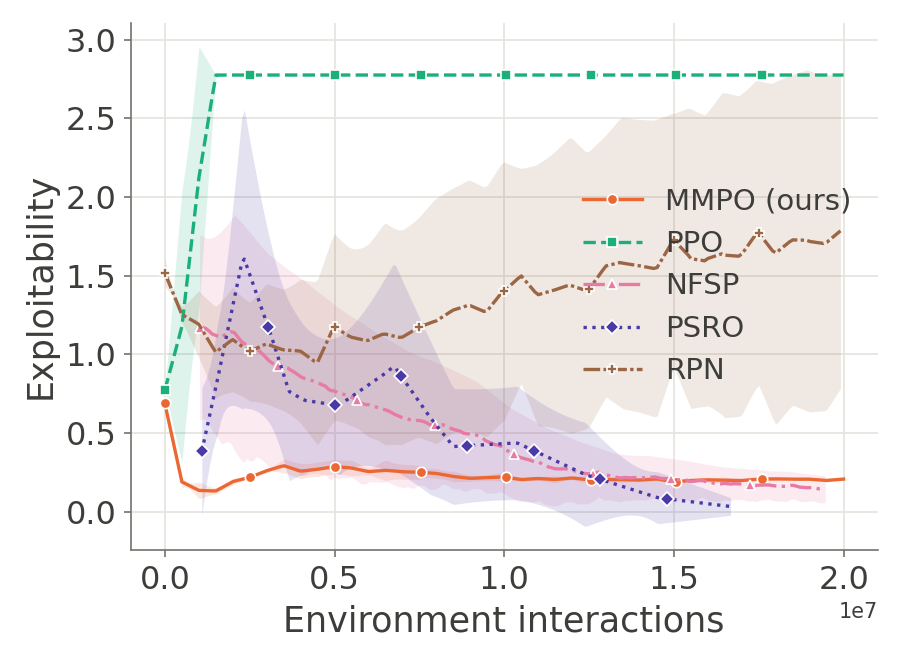}
        \caption{Circle game}
        \label{fig:one_shot_circle}
    \end{subfigure}
    \hfill % Adds horizontal space between subfigures
    \begin{subfigure}[b]{0.245\textwidth}
        \centering
    \includegraphics[width=0.99\linewidth]{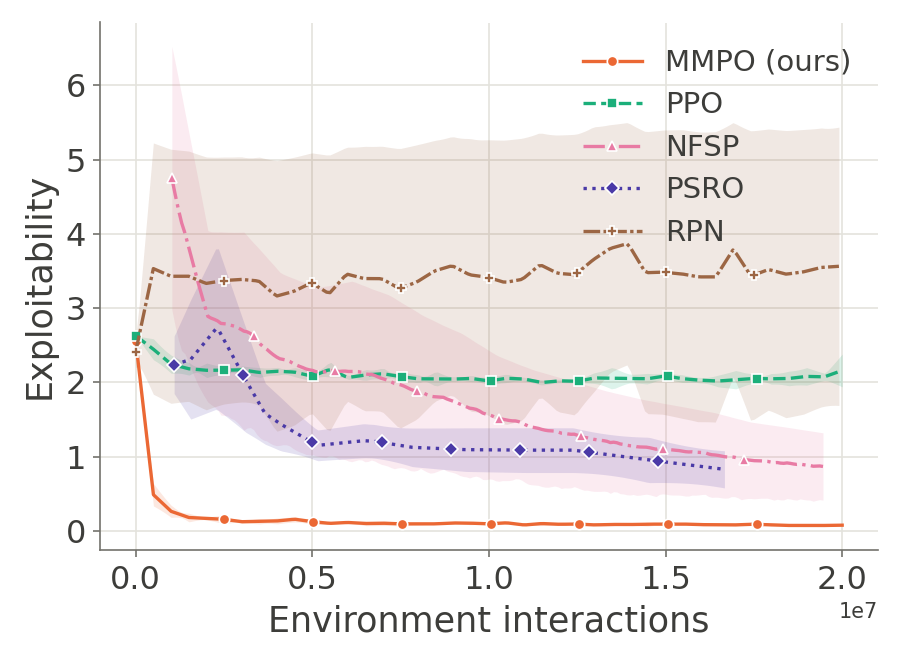}
        \caption{Two-point game}
        \label{fig:one_shot_two}
    \end{subfigure}  
    \begin{subfigure}[b]{0.245\textwidth}
        \centering
    \includegraphics[width=0.99\linewidth]{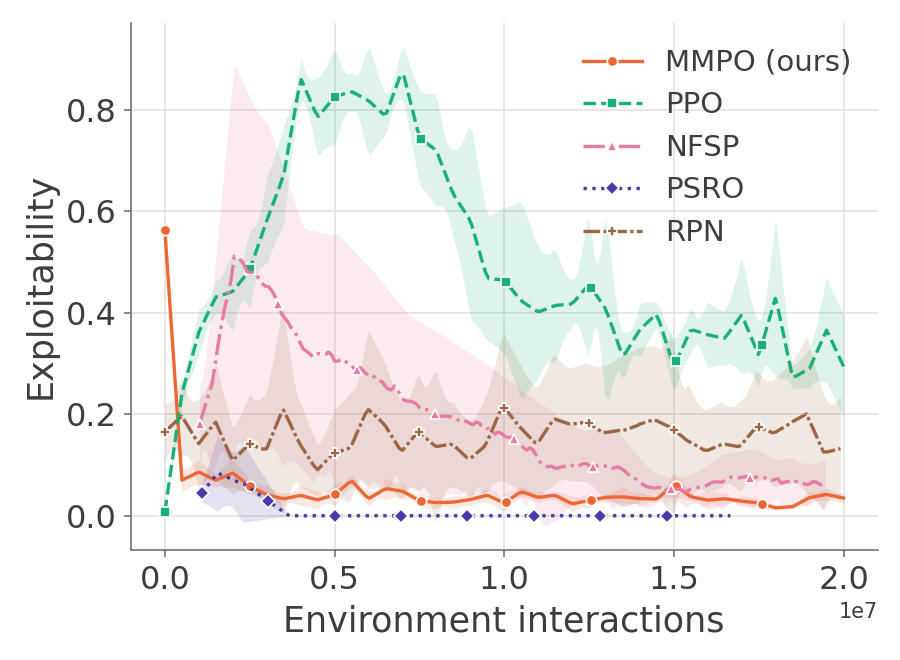}
        \caption{Matching Pennies}
        \label{fig:one_shot_mmd}
    \end{subfigure}
    \caption{Exploitability with 95\% confidence intervals of different baselines in one-shot games}
    \label{fig:one_shot} 
\end{figure*}  

\subsection{Sequential games}
\label{sec:exp_sequential}
We also evaluated \AlgorithmShort{} in sequential games, which is its intended setting. We have used four different games: Kuhn Poker, Leduc Poker, and Sequential Blotto with either 3 or 5 turns. We provide more details on each game in the Appendix. In \Cref{fig:seq}, we show exact exploitability in Kuhn Poker and approximate exploitability computed via PPO for other games, as a function of environment interactions and wall time.

In all games except Leduc, PPO cannot sufficiently model the Nash equilibrium because it uses only a single Gaussian. \AlgorithmShort{} converges slightly slower than PPO, but it always converges close to the equilibrium. With the increased training time, PSRO converges closer to the Nash equilibrium than \AlgorithmShort{}, which is because PSRO adds a single Gaussian strategy to the mix at every iteration, so eventually it contains more Gaussians than \AlgorithmShort{}. However, \AlgorithmShort{} converges much quicker to strong strategies both in terms of wall-time and environment interactions, which is crucial, especially in large games.

The approximate exploitabilities of some strategies in Blotto are negative because we compute a best response using PPO, so the resulting value is only a lower bound on the true exploitability. 

\begin{figure*}[!h]
    \centering    
    \begin{subfigure}[b]{0.245\textwidth}
        \centering
    \includegraphics[width=0.99\linewidth]{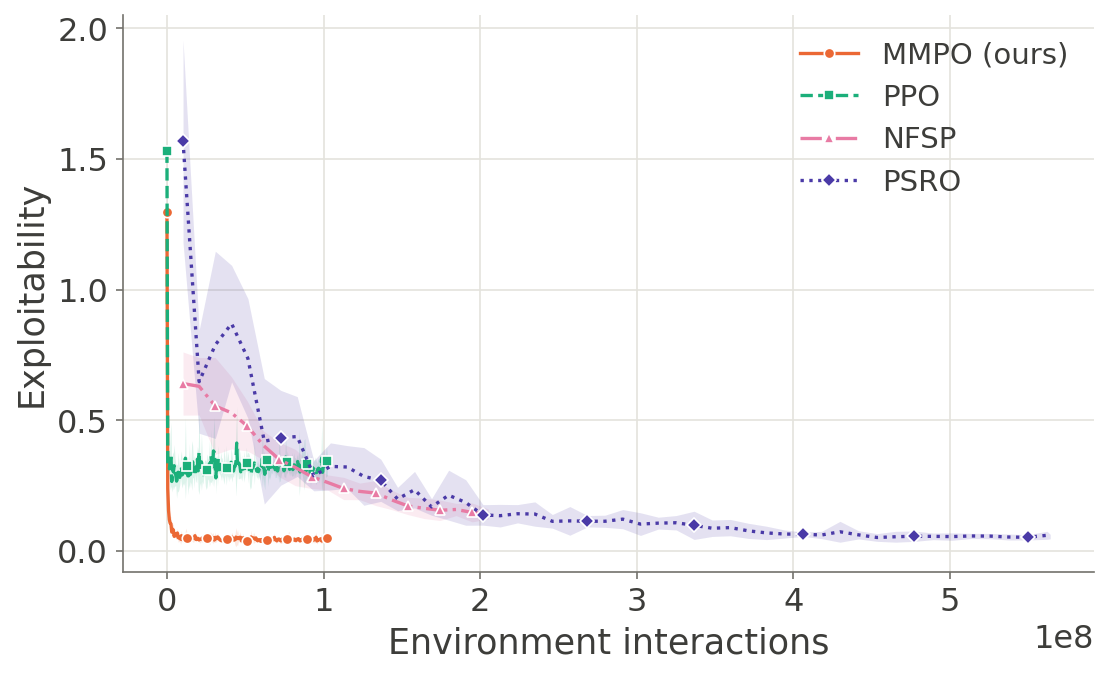}
        \caption{Kuhn Poker}
        \label{fig:seq_kuhn}
    \end{subfigure}
    \hfill % Adds horizontal space between subfigures
    \begin{subfigure}[b]{0.245\textwidth}
        \centering
    \includegraphics[width=0.99\linewidth]{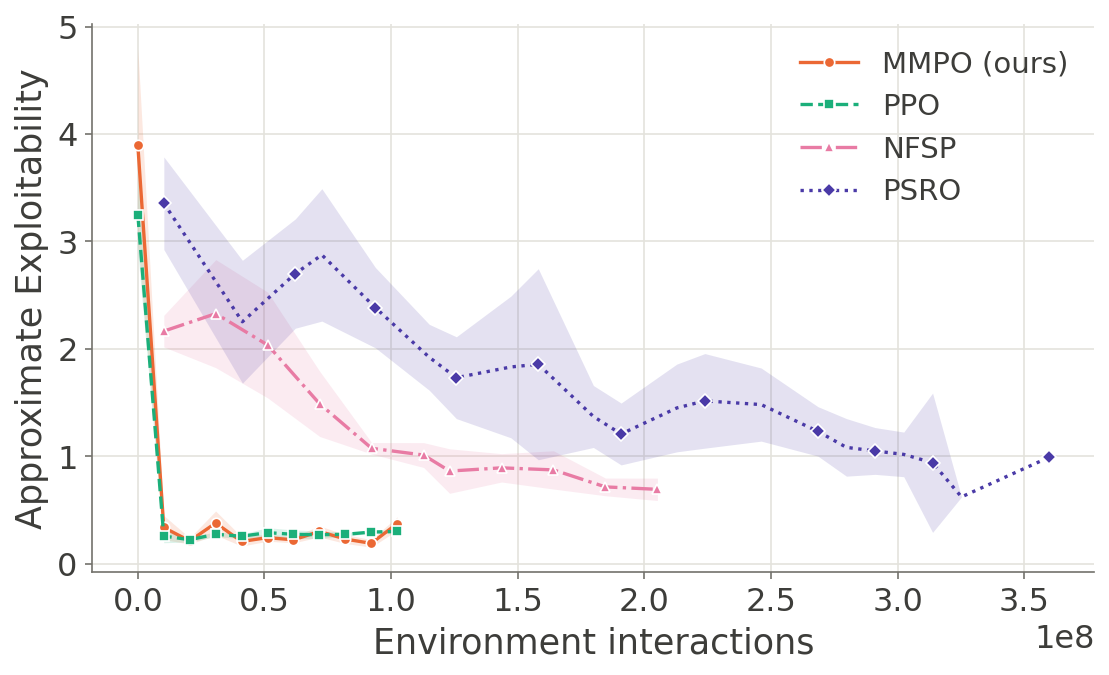}
        \caption{Leduc Poker}
        \label{fig:seq_leduc}
    \end{subfigure}  
    \begin{subfigure}[b]{0.245\textwidth}
        \centering
    \includegraphics[width=0.99\linewidth]{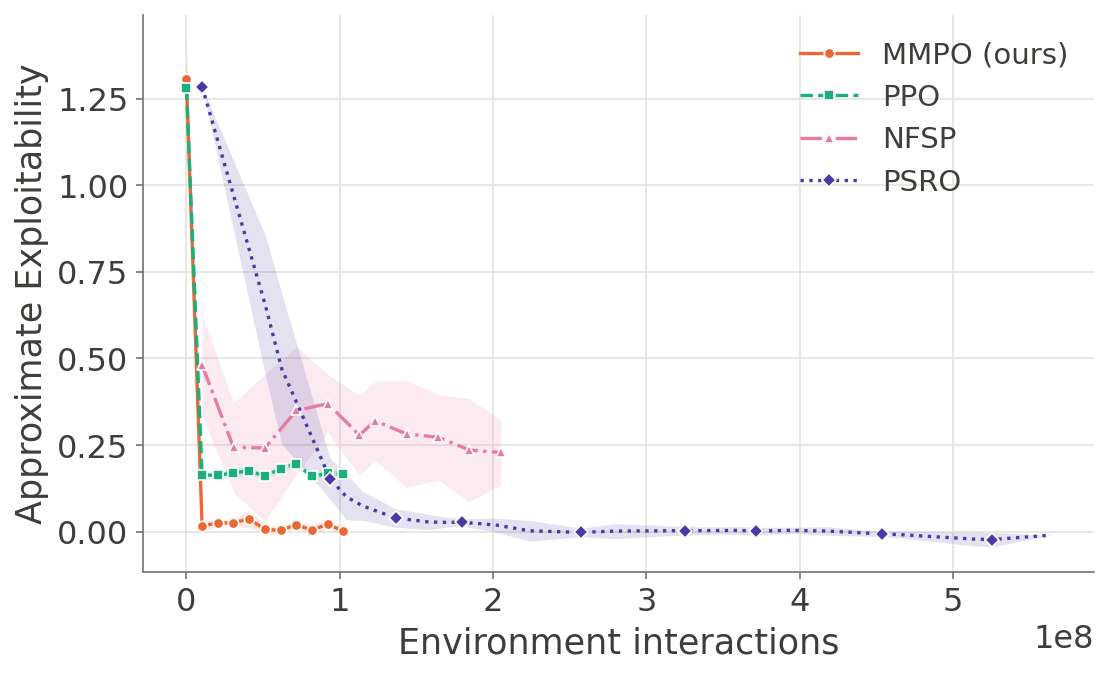}
        \caption{Blotto 3}
        \label{fig:seq_blotto3}
    \end{subfigure}
    \hfill % Adds horizontal space between subfigures
    \begin{subfigure}[b]{0.245\textwidth}
        \centering
    \includegraphics[width=0.99\linewidth]{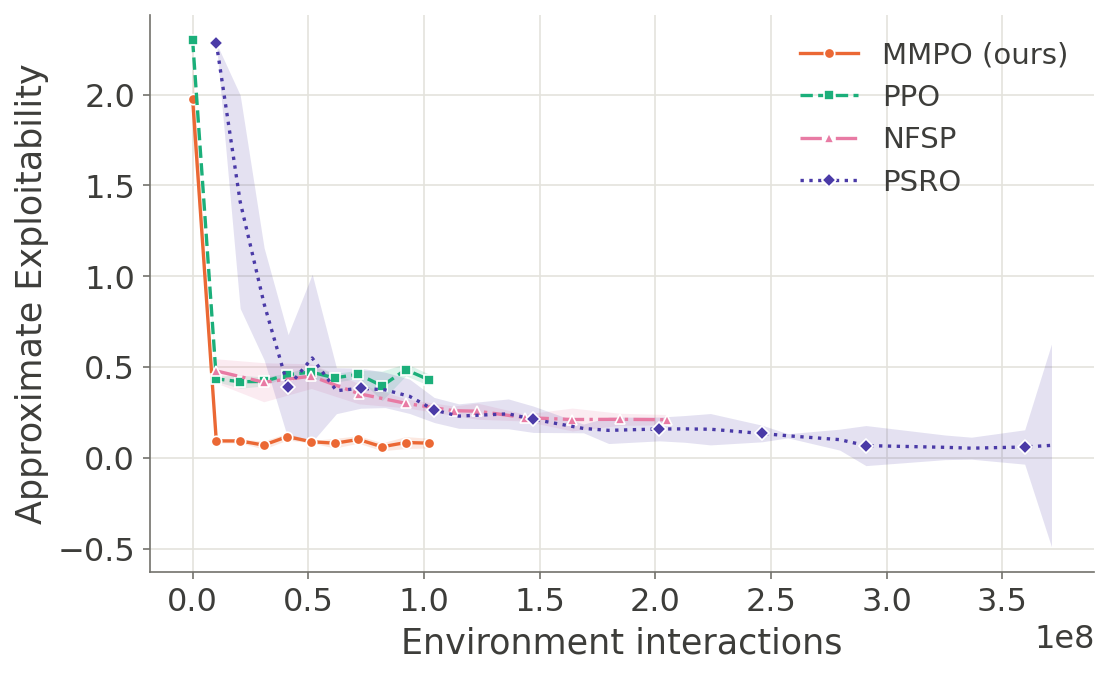}
        \caption{Blotto 5}
        \label{fig:seq_blotto5}
    \end{subfigure}
    \caption{Approximate exploitability with 95\% confidence intervals of different baselines in sequential games}
    \label{fig:seq} 
\end{figure*}  

\subsection{Heads-up no limit Texas hold'em}
\label{sec:exp_hunl}
In the last experiment, we evaluate the scalability of our approach by applying it to Heads-up no-limit Texas Hold'em. An advantage of mixture training is that the algorithm explores parts of the game that would not be visited with static discretization. If a discretized bot is matched against a bot trained with a different discretization, it may encounter out-of-distribution game states. This problem has been observed in Poker bots in the past, and at first, it was solved by interpolating between the closest legal states and later by subgame solving. \AlgorithmShort{} avoids both of these as it naturally explores different bets during the training.

Recently, subgame solving has been combined with policy-gradient algorithms \citep{kubicek2026ttrl}. It is possible to combine our training with this subgame-solving approach. We decided not to apply it here because our goal is not to create the strongest Poker bot, but to evaluate the policy gradient approach with our reparametrization. 

We trained the strategy network for 4 days and during training we used two evaluation metrics, the first is local best response \cite{lisy2017lbr}, which is a Poker-specific evaluation tool that is commonly used as a lower bound for exploitability in Texas Hold'em, the second is head-to-head performance against Slumbot, winner of the 2018 annual computer poker competition \citep{Jackson2013Slumbot}, which is the strongest available bot which does not use any subgame solving. Both are shown in \Cref{fig:hunl_main}. We have also used the last checkpoint for more matches against Slumbot, and after playing $900000$ hands, the average win-rate was $13.5 \pm 36.7$ mbb/hand. Based on the information provided by the Slumbot, \AlgorithmShort{} scored $28.2$ mbb/hand better than Slumbot would score against itself with the same cards. Local best response value of the final checkpoint is $-1622.7 \pm 158.9$.  

Based on these results, our approach, trained via self-play for few days with minimal domain knowledge, can stay on par with a bot developed over five years specifically for the Annual Computer Poker Competition. We use no subgame solving, card abstraction, or fixed bet abstraction; the only poker-specific choices are expressing bets as a fraction of the pot, a common practice in Poker AI, and initializing the Gaussian means to cover the interval $[\frac{1}{3}, 3]$ of the pot, from which they are free to move during training (\Cref{app:hunl}).

% We have also matched the latest checkpoint against the strongest available Poker bot from GTO wizard which uses subgame solving \citep{provost2026gto}. In \Cref{tab:h2h} we show the head-to-head win-rate against both Slumbot and GTO Wizard with 95\% Confidence intervals, along with the games played.

\begin{figure*}[!h]
    \centering    
    \begin{subfigure}[b]{0.49\textwidth}
        \centering
    \includegraphics[width=0.99\linewidth]{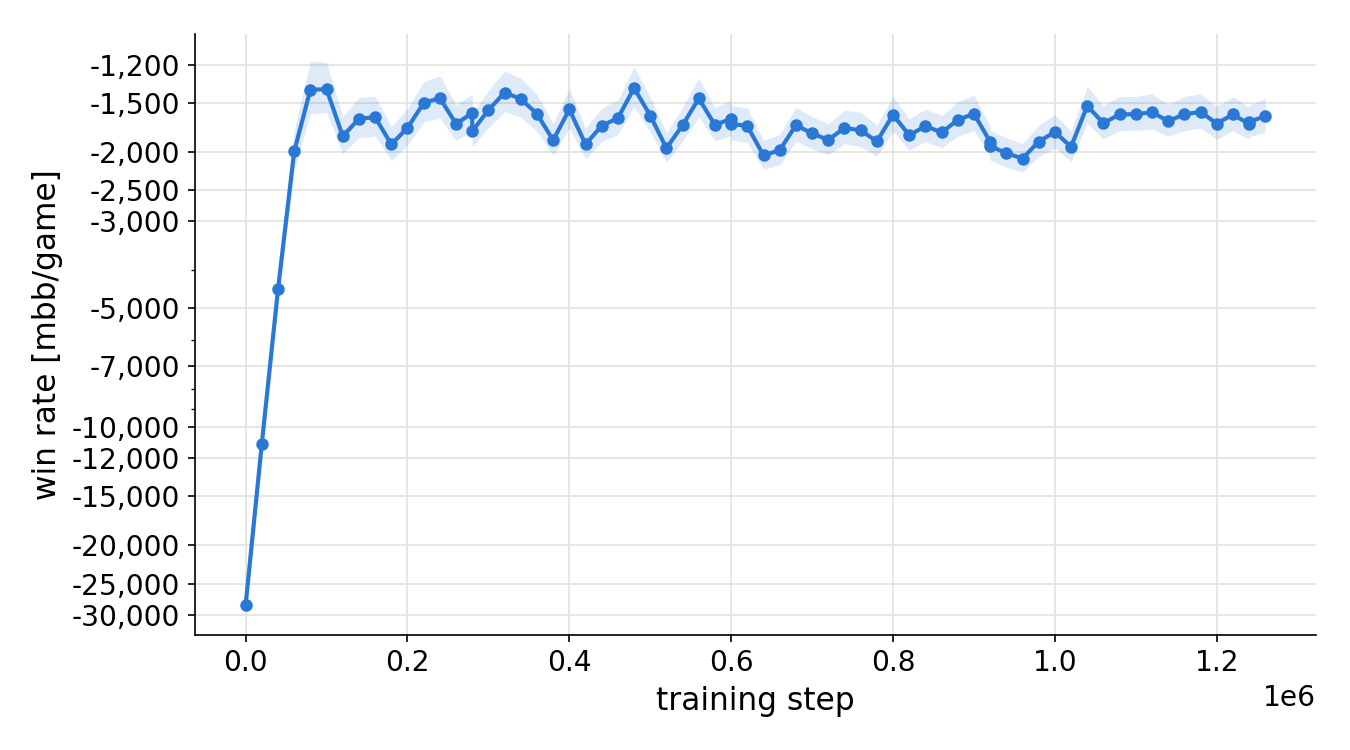}
        \caption{Local best response}
        \label{fig:hunl_lbr}
    \end{subfigure}
    \hfill % Adds horizontal space between subfigures
    \begin{subfigure}[b]{0.49\textwidth}
        \centering
    \includegraphics[width=0.99\linewidth]{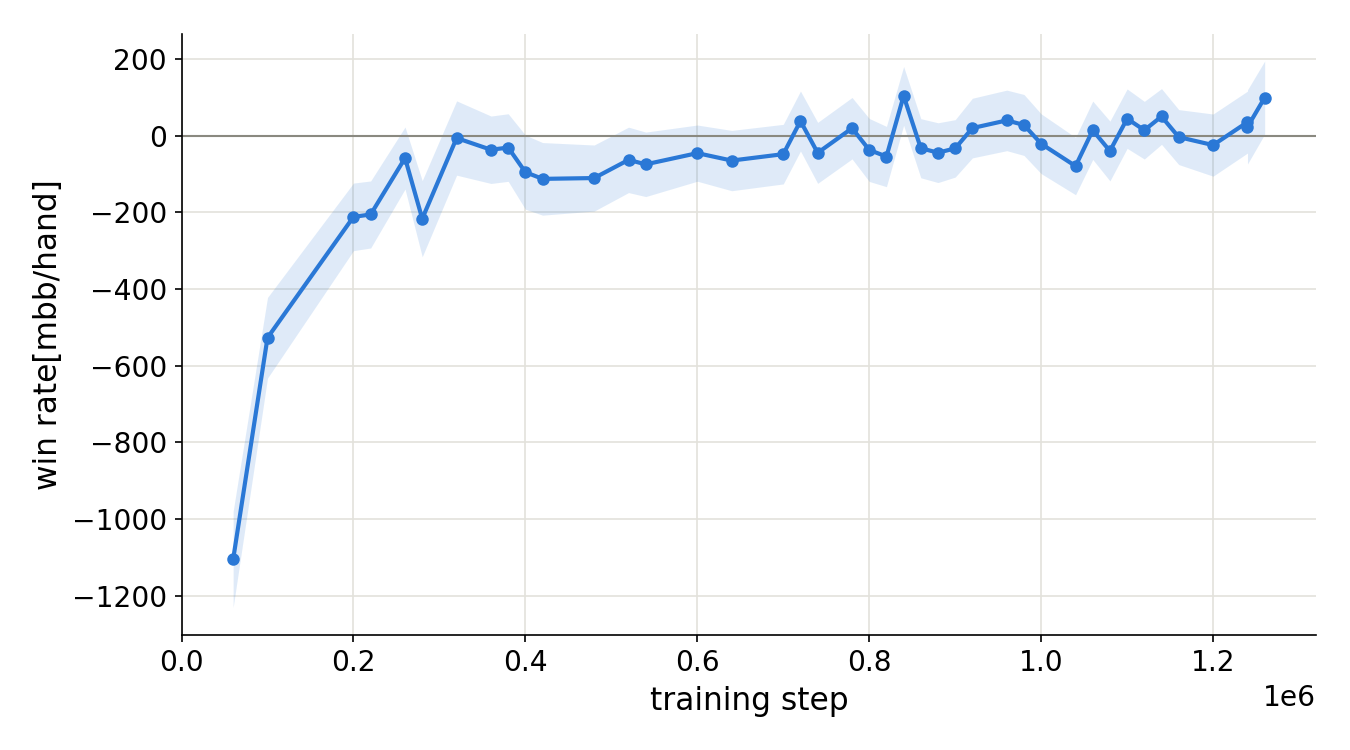}
        \caption{Slumbot}
        \label{fig:hunl_slumbot}
    \end{subfigure}  
    \caption{Head-to-head performance with 95\% confidence intervals against different methods during the training.}
    \label{fig:hunl_main} 
\end{figure*}  

% \begin{table}[]
%     \centering
%     \begin{tabular}{l|c|c|c}
%          & Slumbot & GTO Wizard & GTO Wizard AIVAT  \\
%          Hands & $20000$  & $50$ & $50$ \\
%          Win rate [mbb/hand] & $31 \pm$ & $-100 \pm$  & $-100 \pm$
%     \end{tabular}
%     \caption{Caption}
%     \label{tab:h2h}
% \end{table}
\section{Related Work} 
\textit{Counterfactual regret minimization (CFR)} is a tabular algorithm, which minimizes counterfactual regret at each decision point for both players \citep{zinkevich2007cfr,tammelin2015cfrp}. It has had tremendous success in Texas Hold'em, especially in its application with subgame solving \citep{burch2014cfrd,brown2018libratus,moravcik2017deepstack}. The subgame-solving version of CFR was extended to function-approximation settings with Deep CFR, and its modern extensions, including ReBeL or Dream \citep{brown2019deepcfr,brown2020rebel,steinberger2020dream}. However, these extensions rely on discrete action space, and to the best of our knowledge, scalable counterfactual regret minimization for continuous action spaces has not yet been developed.

% However, CFR only converges in average strategies, which in function approximation settings requires keeping all the checkpoint strategies and then expensively sampling from them. Even if some modern variants of CFR converge in the last iterate, these techniques have not been successfully scaled to function-approximation regimes that maintain the last-iterate guarantee \citep{farina2019pcfr,zhang2026sipcfr}. However, these algorithms rely on discrete action spaces, and, to the best of our knowledge, a scalable counterfactual regret minimization algorithm for continuous action spaces has not yet been developed.

\textit{Fictitious Self-play} and \textit{Double oracle} are tabular algorithms in which each player maintains an ensemble of strategies and then mixes among them \citep{brown1951fp,heinrich2015fsp,mcmahan2003do,bosansky2013do}. The new strategies are computed as a best response to the current ensemble. These algorithms are easily combined with deep learning, because the best response can be trained via reinforcement learning \citep{heinrich2016nfsp,lanctot2017psro}. Also, they can be easily used for games with continuous action spaces by modifying the algorithm to find the best response. AlphaStar has used a League system, in which different types of agents were used in the ensemble, with some trying to exploit specific weaknesses of other strategies \citep{vinyals2019alphastar}. However, these algorithms are often sample-inefficient because training the best response is the bottleneck, and they typically require hundreds of best responses.

\textit{Policy-gradient algorithms} have been central to the successes in single-player reinforcement learning. Stochastic value gradients (SVG) parametrize the strategy as a Gaussian and then train its mean and standard deviation directly from observed traces \citep{heess2015svg}. This same reparametrization trick was later used in proximal policy optimization (PPO) or soft actor critic (SAC) \citep{schulman2017ppo,haarnoja2018sac}.  Prior work has already introduced a mixture of Gaussians as a reparametrization of the strategy, but those focused on increasing the variability of trained strategies and did not take adversarial opponents into account \citep{baram2021mixture,ren2021mixture,islam2025mixture}. Multi-agent PPO (MAPPO) has extended PPO into a multi-agent setting, but it primarily focused on cooperative settings \citep{yu2022mappo}. Some extensions have been applied in adversarial settings, but they use only a single Gaussian to parametrize the strategy \citep{li2019robust,rahman2022robust}. PPO and MAPPO have achieved surprisingly strong performance in some zero-sum games, exemplified by OpenAI Five \citep{openai2019five,rudolph2026reevaluating} or AlphaHoldem \citep{zhao2022alphaholdem}. Still, they do not provide any convergence guarantees and may converge to highly exploitable strategies when used in self-play \citep{sokota2022mmd}. Magnetic mirror descent has shown that in games with discrete actions, using the KL-regularization with respect to a magnet strategy guarantees convergence to a Quantal Response Equilibrium \citep{sokota2022mmd}. Regularized Nash dynamics have shown that, by decreasing the regularization strength or by moving the magnet, these algorithms converge to a Nash equilibrium \citep{perolat2021rnad,sokota2022mmd,kalogiannis2026convergence}. Our work falls into this category of algorithms, as \AlgorithmShort{} parametrizes the strategy as a mixture of Gaussian and categorical distributions, with both components trained via a KL-regularized policy-gradient algorithm. %However based on our results, we can

\textit{Action abstraction} shrinks down the action space of the game, by either removing some actions or by discretizing the action space. A naive approach to discretization is to use a uniform grid for the actions. This approach is often quite effective, but it scales poorly in extensive-form games because errors can propagate. However, it has been shown that if the game utilities are Lipschitz, then the error of this uniform abstraction can be bounded \citep{kroer2015discretization}. Abstractions were a major part of Poker solvers, reducing the size of the game tree and making standard approaches applicable \citep{Jackson2013Slumbot,brown2018libratus,moravcik2017deepstack}. Several techniques have been developed to dynamically adjust the abstraction in Poker bets, but they require the game itself to be tractable for running CFR \citep{hawkin2012discretization,brown2014regret}. A different perspective on double oracle algorithms is that the pure best responses can be seen as a discretization \citep{adam2021do}. Our approach can be seen as finding an action abstraction while also solving the game using this abstraction. And since \AlgorithmShort{} does not explicitly construct the game, it is not limited by the size of the (abstracted) game tree.
\section{Conclusion}
We have extended the regularized policy-gradient techniques from discrete-action games to continuous-action games with algorithm \AlgorithmName{} (\AlgorithmShort{}). Many real-world adversarial situations in economics, robotics, or sports are continuous, and as such, it is important to develop scalable and efficient algorithms for these settings. \AlgorithmShort{} parametrizes the strategy as a mixture of Gaussians, enabling it to be used even in games that combine discrete and continuous actions. We show that even with a single Gaussian, KL regularization enables convergence in games where plain gradient descent diverges. We compare our method across several domains and baselines, demonstrating that it is both efficient and robust. We also scaled \AlgorithmShort{} to Heads-up No Limit Texas Hold'em, where we show that training in self-play without any subgame-solving and with minimal domain-specific knowledge manages to stay on par with a strong Poker bot.
\newpage
\section*{AI use statement} 
In this work, we used these AI tools for these tasks: 
\begin{itemize}
    \item Experiments: LLM-assisted programming, scheduling experiments, managing the results of the experiments, such as plotting and summarization. Models: Claude Opus 4.8, Claude Sonnet 5, Claude Opus 5, Claude Fable 5 through Claude Code. GPT5.6-Sol, GPT5.6-Terra, GPT6-Astra through OpenAI Codex. Composer 2.5 and Cursor Grok 4.6 through Cursor
    \item Help with proofs: Fleshing out details and formalizing the proofs in the paper. Models: Claude Opus 4.8 and Claude Opus 5 through Claude Code.
    \item Exploration of the related literature. Models: Gemini 3.6 Flash and Gemini 3.6 Thinking. Claude Opus 4.8 and Claude Opus 5 through Claude Code.
    \item Writing: Improving formulation and overall flow of the paper. Models: Gemini 3.6 Flash, Claude Opus 5 through Claude Code.
\end{itemize}
We have not used AI tools to generate the main ideas in the paper, draft the sections of the paper, produce data for experiments, or generate citations. We have reviewed all AI-assisted work by testing the generated implementation and verifying the generated parts of the proofs. We take responsibility for the final content of this work, including text, claims, or artifacts produced with the aid of generative AI.  

\bibliography{iclr2026_conference}
\bibliographystyle{iclr2027_conference}
\newpage
\appendix
\crefalias{section}{appsec} 
\section{Proofs}
\label{app:proof}

\begin{fact} 
    \begin{enumerate}
        \item KL divergence between two univariate Gaussians $p, q$ with corresponding means $\Mean_p, \Mean_q$ and standard deviations $\Variance_p, \Variance_q$ is
        $$\KLDiv(p, q) = \log(\frac{\Variance_q}{\Variance_p}) + \frac{\Variance_p^2 + (\Mean_p - \Mean_q)^2}{2\Variance_q^2} - \frac{1}{2}$$
        \item First order derivatives of KL divergence between two univariate Gaussians $p, q$ wrt. $\Mean_p$ and $\Variance_p$ are
        $$\frac{\partial }{\partial \Mean_p} \KLDiv(p, q) = \frac{\Mean_p - \Mean_q}{\Variance_q^2}$$
        $$\frac{\partial }{\partial \Variance_p} \KLDiv(p, q) = \frac{\Variance_p}{\Variance_q^2} - \frac{1}{\Variance_p} $$
        \item Second order derivatives of KL divergence between two univariate Gaussians $p, q$ wrt. $\Mean_p$ and $\Variance_p$ are
        $$\frac{\partial^2 }{\partial \Mean_p^2}  \KLDiv(p, q) =  \frac{1}{\Variance_q^2}$$
        $$\frac{\partial^2 }{\partial \Variance_p^2}  \KLDiv(p, q) =  \frac{1}{\Variance_q^2} + \frac{1}{\Variance_p^2}$$
        $$\frac{\partial^2 }{\partial \Mean_p \partial \Variance_p}  \KLDiv(p, q) = 0$$
         
        % \item KL divergence between two multivariate $d$-dimensional Gaussians $p, q$ with corresponding means $\Mean_p, \Mean_q$ and covariance matrices $\Covariance_p, \Covariance_q$ is
        % $$\KLDiv(p, q) = \frac{1}{2} \big(\log(\frac{|\Covariance_q|}{|\Covariance_p|}) + \text{tr}(\Covariance_q^{-1} \Covariance_p) + (\Mean_p - \Mean_q)^\top \Covariance_q^{-1} (\Mean_p - \Mean_q) - d\big)$$
        % \item First order derivatives of KL divergence between two multivariate Gaussians $p, q$ wrt. $\Mean_p$ and $\Covariance_q$ are
        % $$\nabla_{\Mean_p} \KLDiv(p, q) =  \Covariance_q^{-1} (\Mean_p - \Mean_q) $$
        % $$\nabla_{\Covariance_p} \KLDiv(p, q) = \frac{1}{2}  (\Covariance_q^{-1} - \Covariance_p^{-1}) $$
        % \item Second order derivatives of KL divergence between two multivariate Gaussians $p, q$ wrt. $\Mean_p$ and $\Covariance_p$ are
        % $$\nabla^2_{\Mean_p} \KLDiv(p, q) =  \Covariance_q^{-1}$$
        % $$\nabla^2_{\Covariance_p} \KLDiv(p, q) =  \Covariance_p^{-1} \otimes \Covariance_p^{-1}$$
        % $$\nabla_{\Mean_p} \nabla_{\Covariance_p} = \nabla_{\Covariance_p} \nabla_{\Mean_p} = 0$$
        
    \end{enumerate}
\end{fact}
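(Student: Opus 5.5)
The plan is a direct computation. For item 1, write $\KLDiv(p,q) = \E_{x\sim p}[\log p(x) - \log q(x)]$. Then substitute the univariate Gaussian log-densities
$$\log p(x) = -\log \Variance_p - \tfrac12\log(2\pi) - \frac{(x-\Mean_p)^2}{2\Variance_p^2},$$
and analogously for $q$. The $\log(2\pi)$ terms cancel, and the difference of the log-normalizers gives $\log(\Variance_q/\Variance_p)$.

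It remains to evaluate two second moments under $p$:
\begin{itemize}
\item $\E_p[(x-\Mean_p)^2] = \Variance_p^2$, so the $p$-quadratic term contributes $-\tfrac12$.
\item Decomposing $x - \Mean_q = (x-\Mean_p) + (\Mean_p - \Mean_q)$, the cross term has zero mean. Hence $\E_p[(x-\Mean_q)^2] = \Variance_p^2 + (\Mean_p-\Mean_q)^2$, and the $q$-quadratic term contributes $\frac{\Variance_p^2 + (\Mean_p-\Mean_q)^2}{2\Variance_q^2}$.
\end{itemize}
Summing the three pieces yields exactly the stated closed form.

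For item 2, differentiate the closed form from item 1, treating $\Mean_q$ and $\Variance_q$ as constants. The only term depending on $\Mean_p$ is $\frac{(\Mean_p-\Mean_q)^2}{2\Variance_q^2}$, whose derivative is $\frac{\Mean_p-\Mean_q}{\Variance_q^2}$. For $\Variance_p$, the term $\log\Variance_q - \log\Variance_p$ contributes $-1/\Variance_p$, and $\frac{\Variance_p^2}{2\Variance_q^2}$ contributes $\Variance_p/\Variance_q^2$.

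For item 3, differentiate these first derivatives once more:
\begin{itemize}
\item $\partial_{\Mean_p}$ of $\frac{\Mean_p-\Mean_q}{\Variance_q^2}$ gives $1/\Variance_q^2$.
\item $\partial_{\Variance_p}$ of $\frac{\Variance_p}{\Variance_q^2} - \frac{1}{\Variance_p}$ gives $\frac{1}{\Variance_q^2} + \frac{1}{\Variance_p^2}$.
\item The mixed partial vanishes, because the $\Mean_p$-derivative does not depend on $\Variance_p$.
\end{itemize}

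There is no real obstacle. The only step that needs care is the expectation in item 1: one must expand $(x-\Mean_q)^2$ around $\Mean_p$ and note that the cross term has zero mean under $p$. One should also keep in mind that the derivatives are taken with respect to the standard deviation $\Variance_p$, not the variance. This parametrization is what later gives the Hessian entry $\frac{1}{\Variance_q^2}+\frac{1}{\Variance_p^2}$, which in turn gives the strong convexity constant $\MonConst_\RegularizerField$ used in \Cref{thm:mmd_convergence}.
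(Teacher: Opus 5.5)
Your direct computation is correct and is the standard derivation; the paper simply asserts this as a Fact without proof, so there is no different route to compare. One small correction to your closing aside: in \cref{lem:regularizer_mono} the strong-monotonicity constant $\MonConst_\RegularizerField$ comes only from the $1/\Variance_q^2$ (magnet) terms, while the extra $1/\Variance_p^2$ term is what makes the Lipschitz constant $\LipConst_\RegularizerField$ depend on $\Variance_{\min}$.
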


\begin{restatable}[]{lemma}{BoundedNonMonotonicity}
\label{lem:bound_non_mono}
    Under \cref{as:lip}, for all $\Strategy{}, \Strategy{}'$ the non-monotonicity of $V$ is bounded
     $$\langle \GameField(\Strategy{}) - \GameField(\Strategy{}'), \Strategy{} - \Strategy{}' \rangle \geq - \LipConst_V \|\Strategy{} - \Strategy{}'\|^2 $$
\end{restatable}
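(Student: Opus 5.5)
This is a direct consequence of the Cauchy--Schwarz inequality combined with \cref{as:lip}. I view the Gaussian strategy profiles $\Strategy{}, \Strategy{}'$ as vectors of parameters (means and standard deviations) in a Euclidean space. The inner product $\langle \cdot, \cdot \rangle$ and the norm $\|\cdot\|$ are the ones used in \cref{as:lip}, so the norm is induced by the inner product.

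First, I bound the inner product from below by minus its absolute value and apply Cauchy--Schwarz:
$$\langle \GameField(\Strategy{}) - \GameField(\Strategy{}'), \Strategy{} - \Strategy{}' \rangle \geq -\big|\langle \GameField(\Strategy{}) - \GameField(\Strategy{}'), \Strategy{} - \Strategy{}' \rangle\big| \geq -\|\GameField(\Strategy{}) - \GameField(\Strategy{}')\|\,\|\Strategy{} - \Strategy{}'\|.$$

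Second, I use the Lipschitz bound $\|\GameField(\Strategy{}) - \GameField(\Strategy{}')\| \leq \LipConst_V \|\Strategy{} - \Strategy{}'\|$ from \cref{as:lip}. Since $\|\Strategy{} - \Strategy{}'\| \geq 0$, multiplying preserves the inequality. The right-hand side above is therefore at least $-\LipConst_V\|\Strategy{} - \Strategy{}'\|^2$, which is the claim.

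There is no real obstacle. The one point to check is consistency: the norm in \cref{as:lip} must be the Euclidean norm induced by the same inner product used in the statement. Otherwise Cauchy--Schwarz would need a dual-norm version, $\langle x, y\rangle \geq -\|x\|_*\|y\|$, with the Lipschitz assumption stated in the dual norm. I will note explicitly that both live in the standard Euclidean parameter space, which is also what the PGDA analysis in \Cref{thm:mmd_convergence} requires.
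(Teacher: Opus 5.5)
Your proposal is correct and matches the paper's proof: both bound the inner product below by $-\|\GameField(\Strategy{}) - \GameField(\Strategy{}')\|\,\|\Strategy{} - \Strategy{}'\|$ using Cauchy--Schwarz, then apply the Lipschitz bound from \cref{as:lip}. Your remark that the norm must be the Euclidean norm induced by the same inner product is a useful check that the paper leaves implicit.
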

\begin{proof}
    The \cref{as:lip} states that the function is $\LipConst_V$-Lipschitz continuous. By using Cauchy-Schwarz inequality $\langle u, v \rangle \leq \|u\| \|v\|$
    \begin{align}
        \langle \GameField(\Strategy{}) - \GameField(\Strategy{}'), \Strategy{} - \Strategy{}' \rangle \geq - \|\GameField(\Strategy{}) - \GameField(\Strategy{}')\| \cdot \|\Strategy{} - \Strategy{}'\| \geq  -\LipConst_V \|\Strategy{} - \Strategy{}'\|^2
    \end{align}
\end{proof}

\begin{restatable}[]{lemma}{RegularizerMonotonicity}
\label{lem:regularizer_mono}
    Assume a strategy space $\Strategies{}$, where $\Variance_1, \Variance_2 \geq \Variance_{\text{min}} > 0$.  The regularizer $\RegularizerField(\Strategy{})$ is a continuously differentiable, $\MonConst_R$-strongly monotone and $\LipConst_R$-Lipschitz on $\Strategy{}$ for a magnet strategy $\Strategy{}^M \in \Strategies{}$ with
    
    \begin{equation}
        \MonConst_R = \min(\frac{1}{\Variance_1^{2,M}}, \frac{1}{\Variance_2^{2,M}}) > 0
    \end{equation}
    \begin{equation}
        \LipConst_R = \frac{1}{\Variance_{\text{min}}^2} + \max(\frac{1}{\Variance_1^{2,M}}, \frac{1}{\Variance_2^{2,M}}) < \infty
    \end{equation}
    
\end{restatable}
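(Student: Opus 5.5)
The plan is to write $\RegularizerField$ as the gradient of a separable convex potential, compute its Jacobian (a Hessian) explicitly using the Fact, and read off both constants from the eigenvalues of that Hessian. Concretely, set $\Phi(\Strategy{}) = \KLDiv(\Strategy{1}, \Strategy{1}^M) + \KLDiv(\Strategy{2}, \Strategy{2}^M)$, viewed as a function of the parameters $(\Mean_1, \Variance_1, \Mean_2, \Variance_2)$. Then $\RegularizerField(\Strategy{}) = \nabla \Phi(\Strategy{})$. By part 2 of the Fact, the components of $\RegularizerField$ are $\frac{\Mean_\Player - \Mean_\Player^M}{\Variance_\Player^{2,M}}$ and $\frac{\Variance_\Player}{\Variance_\Player^{2,M}} - \frac{1}{\Variance_\Player}$. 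Both are smooth on the region $\Variance_\Player > 0$, which contains $\Strategies{}$ because $\Variance_\Player \geq \Variance_{\min} > 0$. This gives continuous differentiability immediately.

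Next I compute the Jacobian $\Jacobian_\RegularizerField = \nabla^2 \Phi$ using part 3 of the Fact. Since $\Phi$ is a sum of terms, each depending only on one player's parameters, the Hessian is block diagonal across players. Within each player's block the mixed derivative $\frac{\partial^2}{\partial \Mean \partial \Variance}\KLDiv$ vanishes. Hence the Hessian is diagonal with entries
\[
\frac{1}{\Variance_\Player^{2,M}} \quad (\text{mean coordinates}), \qquad \frac{1}{\Variance_\Player^{2,M}} + \frac{1}{\Variance_\Player^2} \quad (\text{standard-deviation coordinates}).
\]
If the action space is multidimensional with independent coordinates, the same diagonal structure repeats per coordinate, so nothing changes. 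Every diagonal entry is at least $\min(\frac{1}{\Variance_1^{2,M}}, \frac{1}{\Variance_2^{2,M}}) = \MonConst_R$. Using $\Variance_\Player \geq \Variance_{\min}$, every entry is at most $\frac{1}{\Variance_{\min}^2} + \max(\frac{1}{\Variance_1^{2,M}}, \frac{1}{\Variance_2^{2,M}}) = \LipConst_R$. Finiteness of $\LipConst_R$ uses $\Variance_{\min} > 0$ together with $\Strategy{}^M \in \Strategies{}$, so $\Variance_\Player^M \geq \Variance_{\min}$. Positivity of $\MonConst_R$ holds because the magnet's standard deviations are finite.

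To pass from these pointwise Hessian bounds to global statements, I use that $\Strategies{}$ is convex: it is a product of real lines for the means and half-lines $[\Variance_{\min}, \infty)$ for the standard deviations, intersected with any convex box. So for $\Strategy{}, \Strategy{}' \in \Strategies{}$ the segment $\Strategy{}_s = \Strategy{}' + s(\Strategy{} - \Strategy{}')$ stays in $\Strategies{}$, and
\[
\RegularizerField(\Strategy{}) - \RegularizerField(\Strategy{}') = \int_0^1 \Jacobian_\RegularizerField(\Strategy{}_s)\,(\Strategy{} - \Strategy{}')\,ds .
\]
Taking the inner product with $\Strategy{} - \Strategy{}'$ and using $\Jacobian_\RegularizerField \succeq \MonConst_R \IdentityM$ gives $\langle \RegularizerField(\Strategy{}) - \RegularizerField(\Strategy{}'), \Strategy{} - \Strategy{}' \rangle \geq \MonConst_R \|\Strategy{} - \Strategy{}'\|^2$. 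Taking norms and using $\|\Jacobian_\RegularizerField\|_{op} \leq \LipConst_R$, which holds because the matrix is diagonal with nonnegative entries, gives $\|\RegularizerField(\Strategy{}) - \RegularizerField(\Strategy{}')\| \leq \LipConst_R \|\Strategy{} - \Strategy{}'\|$.

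The only delicate point is the upper bound on the $\frac{1}{\Variance_\Player^2}$ term. It blows up as $\Variance \to 0$, so the Lipschitz claim genuinely needs the constraint $\Variance \geq \Variance_{\min}$ along the entire segment. This is exactly why convexity of $\Strategies{}$ must be checked before invoking the integral form. The rest is routine bookkeeping of the diagonal entries.
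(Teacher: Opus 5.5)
Your proposal is correct and takes essentially the same route as the paper. Both compute the diagonal Jacobian of $R$ (your Hessian of the separable KL potential) and sandwich it between $\delta_R \mathbf{I}$ and $L_R \mathbf{I}$ using $\sigma_i \geq \sigma_{\min}$. Both then integrate along segments in the convex strategy space to obtain strong monotonicity and Lipschitz continuity.
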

\begin{proof}
    Using the fact that 
    
    \begin{equation}
        \RegularizerField(\Strategy{}) = \begin{pmatrix}
            \frac{\partial }{\partial \Mean_1} \KLDiv(\Strategy{}, \Strategy{}^M) \\
            \frac{\partial }{\partial \Variance_1} \KLDiv(\Strategy{}, \Strategy{}^M) \\
            \frac{\partial }{\partial \Mean_2} \KLDiv(\Strategy{}, \Strategy{}^M) \\
            \frac{\partial }{\partial \Variance_2} \KLDiv(\Strategy{}, \Strategy{}^M) 
        \end{pmatrix} 
        = \begin{pmatrix}
            \frac{\Mean_1 - \Mean_1^M}{\Variance_1^{2,M}} \\
            \frac{\Variance_1}{\Variance_1^{2,M}} - \frac{1}{\Variance_1}\\
            \frac{\Mean_2 - \Mean_2^M}{\Variance_2^{2,M}} \\
            \frac{\Variance_2}{\Variance_2^{2,M}} - \frac{1}{\Variance_2}
        \end{pmatrix}
    \end{equation}

The Jacobian of $\RegularizerField(\Strategy{})$ is
    \begin{equation}
        \Jacobian_\RegularizerField(\Strategy{}) = \text{diag}\Big(\frac{1}{\Variance_1^{2,M}}, \frac{1}{\Variance_1^{2}} + \frac{1}{\Variance_1^{2,M}}, \frac{1}{\Variance_2^{2,M}}, \frac{1}{\Variance_2^{2}} + \frac{1}{\Variance_2^{2,M}}  \Big)
    \end{equation}

The Jacobian $\Jacobian_\RegularizerField(\Strategy{})$ is a diagonal matrix, and each diagonal entry is positive, because $\Variance_1, \Variance_1^M, \Variance_2, \Variance_2^M > 0$.

It holds that $\MonConst_\RegularizerField \leq \frac{1}{\Variance_{\Player}^{2, M}} \leq \frac{1}{\Variance_{\Player}^{2, M}} + \frac{1}{\Variance_{\Player}^{2}} \leq \LipConst_\RegularizerField $  for $\Player \in \{1, 2\}$ and all strategies $\Strategy{}$. Out of this follows $\MonConst_\RegularizerField \IdentityM \preceq \Jacobian_\RegularizerField(\Strategy{}) \preceq \LipConst_\RegularizerField \IdentityM $, where $\preceq$ is an Loewner ordering $\textbf{A} \preceq \textbf{B} \Leftrightarrow x^\top (\textbf{B} - \textbf{A}) x  \geq 0$, for all $x$. In other words that $(\textbf{B} - \textbf{A})$ is positive semi-definite.

Strategy space $\Strategies{}$ is convex, therefore if $\Strategy{}, \Strategy{}' \in \Strategies{} $ then $\Strategy{}' + t(\Strategy{} - \Strategy{}'); t \in [0, 1]$  is also in $\Strategies{}$. From the fundamental theorem of calculus follows
\begin{equation}
    \langle \RegularizerField(\Strategy{}) - \RegularizerField(\Strategy{}'), \Strategy{} - \Strategy{}' \rangle = \int_0^1 ( \Strategy{} - \Strategy{}')^\top \Jacobian_\RegularizerField(\Strategy{}' + t( \Strategy{} - \Strategy{}')) ( \Strategy{} - \Strategy{}') dt \geq \MonConst_\RegularizerField \| \Strategy{} - \Strategy{}'\|^2
\end{equation}
\begin{equation}
\begin{split}
    \|\RegularizerField(\Strategy{}) - \RegularizerField(\Strategy{}')\| &= \|\int_0^1 \Jacobian_\RegularizerField(\Strategy{}' + t( \Strategy{} - \Strategy{}'))(\Strategy{} - \Strategy{}') dt\| \\
    &\leq \int_0^1 \|\Jacobian_\RegularizerField(\Strategy{}' + t( \Strategy{} - \Strategy{}'))(\Strategy{} - \Strategy{}')\| dt \\
    &\leq \int_0^1 \|\Jacobian_\RegularizerField(\Strategy{}' + t( \Strategy{} - \Strategy{}'))\|_{\text{op}} \|(\Strategy{} - \Strategy{}')\| dt\\
    &\leq \int_0^1 \LipConst_\RegularizerField \|(\Strategy{} - \Strategy{}')\|dt = \LipConst_\RegularizerField \|(\Strategy{} - \Strategy{}')\|
    \end{split}
\end{equation} 
\end{proof}

\begin{restatable}[]{lemma}{MmdMonotonicity}
\label{lem:mmd_mono}
    Under \cref{as:lip} and $\RegularizationStrength > \frac{\LipConst_V}{\MonConst_\RegularizerField}$ the $\GameField^\RegularizerField$ is $\MonConst_M$-strongly monotone and $\LipConst_M$-Lipschitz on strategy space $\Strategies{}$, where $\Variance_1, \Variance_2 \geq \Variance_{\text{min}} > 0$ with

    \begin{equation}
        \MonConst_M = \RegularizationStrength \MonConst_\RegularizerField - \LipConst_\GameField > 0
    \end{equation}
    \begin{equation}
        \LipConst_M = \RegularizationStrength \LipConst_\RegularizerField + \LipConst_\GameField < \infty
    \end{equation}
    \begin{equation}
        \LipConst_M \geq \MonConst_M
    \end{equation}
\end{restatable}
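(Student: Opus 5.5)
The plan is to combine \cref{lem:bound_non_mono} with \cref{lem:regularizer_mono} using the linear decomposition $\GameField^\RegularizerField = \GameField + \RegularizationStrength \RegularizerField$. Both monotonicity and Lipschitzness are inner-product or norm statements about differences, so they split additively across the two fields. The only subtlety is to work on the restricted strategy space $\Strategies{}$ with $\Variance_1, \Variance_2 \geq \Variance_{\min} > 0$. That is the space on which \cref{lem:regularizer_mono} supplies its constants, and it is convex, so the earlier lemma applies verbatim for any fixed magnet $\Strategy{}^M \in \Strategies{}$.

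\textbf{Strong monotonicity.} For $\Strategy{}, \Strategy{}' \in \Strategies{}$, I write
\[
\langle \GameField^\RegularizerField(\Strategy{}) - \GameField^\RegularizerField(\Strategy{}'), \Strategy{} - \Strategy{}' \rangle = \langle \GameField(\Strategy{}) - \GameField(\Strategy{}'), \Strategy{} - \Strategy{}' \rangle + \RegularizationStrength \langle \RegularizerField(\Strategy{}) - \RegularizerField(\Strategy{}'), \Strategy{} - \Strategy{}' \rangle .
\]
The first term is bounded below by $-\LipConst_\GameField \|\Strategy{} - \Strategy{}'\|^2$ via \cref{lem:bound_non_mono}. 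The second is bounded below by $\RegularizationStrength \MonConst_\RegularizerField \|\Strategy{} - \Strategy{}'\|^2$ via \cref{lem:regularizer_mono}, using $\RegularizationStrength > 0$. Summing the two gives the modulus $\MonConst_M = \RegularizationStrength\MonConst_\RegularizerField - \LipConst_\GameField$. This constant is strictly positive exactly because $\RegularizationStrength > \LipConst_\GameField/\MonConst_\RegularizerField$, after multiplying through by $\MonConst_\RegularizerField > 0$.

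\textbf{Lipschitzness.} The triangle inequality gives
\[
\|\GameField^\RegularizerField(\Strategy{}) - \GameField^\RegularizerField(\Strategy{}')\| \leq \|\GameField(\Strategy{}) - \GameField(\Strategy{}')\| + \RegularizationStrength \|\RegularizerField(\Strategy{}) - \RegularizerField(\Strategy{}')\| .
\]
Applying \cref{as:lip} to the first term and \cref{lem:regularizer_mono} to the second yields $\LipConst_M = \LipConst_\GameField + \RegularizationStrength \LipConst_\RegularizerField$. This is finite because $\LipConst_\RegularizerField < \infty$ for $\Variance_{\min} > 0$.

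\textbf{The inequality $\LipConst_M \geq \MonConst_M$.} There are two routes. The direct one compares the explicit constants: $\LipConst_\RegularizerField \geq \MonConst_\RegularizerField$ from their definitions in \cref{lem:regularizer_mono}, and $\LipConst_\GameField \geq -\LipConst_\GameField$. Hence $\RegularizationStrength\LipConst_\RegularizerField + \LipConst_\GameField \geq \RegularizationStrength\MonConst_\RegularizerField - \LipConst_\GameField$. Alternatively, Cauchy--Schwarz applied to the field itself gives $\MonConst_M\|\Strategy{} - \Strategy{}'\|^2 \leq \LipConst_M\|\Strategy{} - \Strategy{}'\|^2$ for any distinct pair. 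I would use the direct comparison, since it avoids needing two distinct points.

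There is no real obstacle here. The only care needed is the sign bookkeeping: $\RegularizationStrength > 0$ is required to scale the regularizer bounds, and all statements must be restricted to $\Strategies{}$ so that the regularizer constants are valid there.
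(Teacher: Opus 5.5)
Your proposal is correct and matches the paper's proof essentially step for step. The paper uses the same additive split $V^R = V + \eta R$, bounds the two inner-product terms with \cref{lem:bound_non_mono} and \cref{lem:regularizer_mono} to get strong monotonicity, applies the triangle inequality for the Lipschitz constant, and obtains $L_M \geq \delta_M$ by the same direct comparison $L_R \geq \delta_R$, $L_V \geq 0$.
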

\begin{proof}
    \begin{equation}
    \begin{split}
        \langle \GameField^\RegularizerField(\Strategy{}) - \GameField^\RegularizerField(\Strategy{}'), \Strategy{} - \Strategy{}' \rangle &= \langle \GameField(\Strategy{}) + \RegularizationStrength \RegularizerField(\Strategy{}) - \GameField(\Strategy{}') - \RegularizationStrength \RegularizerField(\Strategy{}'), \Strategy{} - \Strategy{}' \rangle \\
        &= \langle \GameField(\Strategy{}) - \GameField(\Strategy{}'), \Strategy{} - \Strategy{}' \rangle + \RegularizationStrength \langle \RegularizerField(\Strategy{}) - \RegularizerField(\Strategy{}'), \Strategy{} - \Strategy{}' \rangle\\
        &\geq  - \LipConst_V \|\Strategy{} - \Strategy{}'\|^2  + \RegularizationStrength \MonConst_\RegularizerField  \|\Strategy{} - \Strategy{}'\|^2  \\
        &= (\RegularizationStrength \MonConst_\RegularizerField - \LipConst_V )\|\Strategy{} - \Strategy{}'\|^2 = \MonConst_M \|\Strategy{} - \Strategy{}'\|^2
    \end{split}
    \end{equation}
    $\MonConst_M > 0 $ based on the assumption $\RegularizationStrength > \frac{\LipConst_\GameField}{\MonConst_\RegularizerField}$.
    \begin{equation}
        \begin{split}
            \|\GameField^\RegularizerField(\Strategy{}) - \GameField^\RegularizerField(\Strategy{}')\| &= \|\GameField(\Strategy{}) + \RegularizationStrength \RegularizerField(\Strategy{}) - \GameField(\Strategy{}') - \RegularizationStrength \RegularizerField(\Strategy{}')\| \\
            &\leq \|\GameField(\Strategy{}) - \GameField(\Strategy{}')\| + \RegularizationStrength \|\RegularizerField(\Strategy{}) - \RegularizerField(\Strategy{}')\| \\
            &\leq \LipConst_V \|\Strategy{} - \Strategy{}'\|  + \RegularizationStrength \LipConst_\RegularizerField \|(\Strategy{} - \Strategy{}')\| \\
            &= (\LipConst_V + \RegularizationStrength \LipConst_\RegularizerField ) \|\Strategy{} - \Strategy{}'\|  = \LipConst_{M}  \|\Strategy{} - \Strategy{}'\| 
        \end{split}
    \end{equation} 
    The fact that $\LipConst_M > \MonConst_M$ follows from \cref{lem:regularizer_mono}, specifically $\LipConst_R > \MonConst_R$ and from the fact that $\LipConst_\GameField > 0$.
\end{proof}

\begin{fact}
\label{fact:strategy}
    Let $\Strategies{}$ be nonempty, closed and convex. The Euclidean projection of any $\Strategy{}^+$ is defined as
    \begin{equation}
        \ProjStrategies(\Strategy{}^+) = \argmin_{\Strategy{} \in \Strategies{}}||\Strategy{} - \Strategy{}^+\|
    \end{equation}

    Then for any $\Strategy{}^+$
    \begin{enumerate}
        \item The $\ProjStrategies(\Strategy{}^+)$ is unique
        \item Feasible $\Strategy{}^\circ \in \Strategies{}$ satisfies $\Strategy{}^\circ = \ProjStrategies(\Strategy{}^+)$ if and only if
        \begin{equation}
        \langle \Strategy{}^+ - \Strategy{}^\circ, \Strategy{} - \Strategy{}^\circ  \rangle \leq 0 \quad\quad \forall \Strategy{} \in \Strategies{}
        \end{equation}
        \item $\ProjStrategies$ is non-expansive for any two such $\Strategy{}^+, \hat{\Strategy{}}^+$
        \begin{equation}
            \| \ProjStrategies(\Strategy{}^+) - \ProjStrategies(\hat{\Strategy{}}^+) \| \leq \| \Strategy{}^+ - \hat{\Strategy{}}^+ \|
        \end{equation}
    \end{enumerate}
\end{fact}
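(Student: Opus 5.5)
The plan is to treat $\Strategies{}$ as a nonempty, closed, convex subset of a finite-dimensional Euclidean space. This matches our setting: the constraint set is a product of boxes, half-lines $\Variance \geq \Variance_{\min}$ and simplices. The three claims then follow from the standard variational argument for Euclidean projections, and I would prove them in the order existence and uniqueness, then the characterization in item 2, then non-expansiveness derived from item 2.

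\emph{Existence and uniqueness.} Fix $\Strategy{}^+$ and any $\bar{\Strategy{}} \in \Strategies{}$, and set $r = \|\bar{\Strategy{}} - \Strategy{}^+\|$. Any minimizer must lie in $\Strategies{} \cap \{\Strategy{} : \|\Strategy{} - \Strategy{}^+\| \leq r\}$. This set is closed and bounded, hence compact in finite dimensions, and the continuous map $\Strategy{} \mapsto \|\Strategy{} - \Strategy{}^+\|^2$ attains its minimum there. For uniqueness, suppose $\Strategy{}^a \neq \Strategy{}^b$ are both minimizers with common value $d$. By convexity their midpoint lies in $\Strategies{}$, and the parallelogram law gives
\[
\big\|\tfrac{1}{2}(\Strategy{}^a + \Strategy{}^b) - \Strategy{}^+\big\|^2 = d^2 - \tfrac{1}{4}\|\Strategy{}^a - \Strategy{}^b\|^2 < d^2 ,
\]
which contradicts minimality.

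\emph{Characterization (item 2).} For the forward direction, let $\Strategy{}^\circ = \ProjStrategies(\Strategy{}^+)$ and take any $\Strategy{} \in \Strategies{}$. Consider $g(t) = \|\Strategy{}^\circ + t(\Strategy{} - \Strategy{}^\circ) - \Strategy{}^+\|^2$ for $t \in [0,1]$; the points it evaluates stay in $\Strategies{}$ by convexity. Since $g$ is minimized at $t=0$, we have $g'(0) = 2\langle \Strategy{}^\circ - \Strategy{}^+, \Strategy{} - \Strategy{}^\circ\rangle \geq 0$, which is the stated inequality. For the converse, assume the inequality holds and expand
\[
\|\Strategy{} - \Strategy{}^+\|^2 = \|\Strategy{} - \Strategy{}^\circ\|^2 + \|\Strategy{}^\circ - \Strategy{}^+\|^2 - 2\langle \Strategy{}^+ - \Strategy{}^\circ, \Strategy{} - \Strategy{}^\circ\rangle \geq \|\Strategy{}^\circ - \Strategy{}^+\|^2 .
\]
Hence $\Strategy{}^\circ$ is a minimizer, and by item 1 it equals $\ProjStrategies(\Strategy{}^+)$.

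\emph{Non-expansiveness (item 3).} Write $p = \ProjStrategies(\Strategy{}^+)$ and $\hat p = \ProjStrategies(\hat{\Strategy{}}^+)$. Apply item 2 at $p$ with test point $\hat p$, and at $\hat p$ with test point $p$, then add the two inequalities to get
\[
\|p - \hat p\|^2 \leq \langle \Strategy{}^+ - \hat{\Strategy{}}^+, p - \hat p\rangle \leq \|\Strategy{}^+ - \hat{\Strategy{}}^+\|\,\|p - \hat p\| .
\]
The second step is Cauchy--Schwarz. Dividing by $\|p - \hat p\|$, with the case $p = \hat p$ trivial, gives the claim.

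The only step needing any care is existence, which relies on closedness of $\Strategies{}$ and finite dimensionality. The lower bound $\Variance \geq \Variance_{\min}$ is a closed constraint, unlike the open requirement $\Variance > 0$; this is exactly why the paper imposes $\Variance_{\min}$ and why $\Strategies{}$ is closed. Everything after existence is routine algebra.
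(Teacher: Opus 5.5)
Your proof is correct and complete. The paper gives no proof of this statement. It is invoked as a standard fact about Euclidean projections and then used in Lemma~\ref{lem:pgda_contraction} and Proposition~\ref{prop:mmd_unique}, so there is no competing argument to compare with. What you wrote is the textbook route the paper implicitly relies on: existence by compactness of a sublevel set, uniqueness by the parallelogram law, the characterization by first-order optimality along segments together with the converse expansion, and non-expansiveness by adding the two variational inequalities.

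Two minor remarks. First, your item-3 computation actually gives the stronger firm non-expansiveness $\|p-\hat p\|^2 \le \langle \pi^+ - \hat\pi^+,\, p - \hat p\rangle$. Second, the finite-dimensionality you use for existence is harmless here, since the strategies are finite parameter vectors, but it is not needed. The same parallelogram identity shows that any minimizing sequence is Cauchy, so closedness alone gives existence in any Hilbert space.
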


\begin{restatable}[]{lemma}{PgdaContraction}
\label{lem:pgda_contraction}
Let  \cref{as:lip} hold and $\RegularizationStrength > \frac{\LipConst_\GameField}{\MonConst_\RegularizerField}$. Let $\LipConst_M, \MonConst_M$ be the constants from \cref{lem:mmd_mono} and $0 < \LearningRate < \frac{2\MonConst_M}{\LipConst_M^2}$. Define the Projected Gradient Ascent Descent operator as
\begin{equation}
    \PGDA(\Strategy{}) = \ProjStrategies(\Strategy{} - \LearningRate \GameField^\RegularizerField(\Strategy{}))
\end{equation}
Then  $\PGDA$ is a contraction
\begin{equation}
    \|\PGDA(\Strategy{}) - \PGDA(\Strategy{}')\|^2 \leq r(\LearningRate) \|\Strategy{} - \Strategy{}'\|^2
\end{equation}
\end{restatable}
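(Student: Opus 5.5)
The plan is the standard contraction argument for projected gradient steps on a strongly monotone, Lipschitz operator, using \cref{lem:mmd_mono} and \cref{fact:strategy}. Throughout, $r(\LearningRate) = 1 - 2\LearningRate\MonConst_M + \LearningRate^2\LipConst_M^2$ is the rate from \cref{thm:mmd_convergence}. Fix $\Strategy{}, \Strategy{}' \in \Strategies{}$. Note that $\Strategies{}$, the set of Gaussian parameters with $\Variance_1, \Variance_2 \geq \Variance_{\min}$, is a product of a full space and closed half-lines, so it is nonempty, closed and convex. Hence \cref{fact:strategy} applies.

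\textbf{Step 1 (non-expansiveness).} By item 3 of \cref{fact:strategy},
\begin{equation*}
\|\PGDA(\Strategy{}) - \PGDA(\Strategy{}')\|^2 \leq \|(\Strategy{} - \Strategy{}') - \LearningRate(\GameField^\RegularizerField(\Strategy{}) - \GameField^\RegularizerField(\Strategy{}'))\|^2 .
\end{equation*}
This removes the projection, so the remaining work is an unconstrained estimate.

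\textbf{Step 2 (expand the square).} The right-hand side equals
\begin{equation*}
\|\Strategy{} - \Strategy{}'\|^2 - 2\LearningRate\langle \GameField^\RegularizerField(\Strategy{}) - \GameField^\RegularizerField(\Strategy{}'), \Strategy{} - \Strategy{}'\rangle + \LearningRate^2\|\GameField^\RegularizerField(\Strategy{}) - \GameField^\RegularizerField(\Strategy{}')\|^2 .
\end{equation*}
For the cross term, use the $\MonConst_M$-strong monotonicity from \cref{lem:mmd_mono}; this is valid because $\RegularizationStrength > \LipConst_\GameField/\MonConst_\RegularizerField$ is assumed. It bounds the cross term by $-2\LearningRate\MonConst_M\|\Strategy{} - \Strategy{}'\|^2$. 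For the last term, use the $\LipConst_M$-Lipschitz bound from the same lemma, which gives at most $\LearningRate^2\LipConst_M^2\|\Strategy{} - \Strategy{}'\|^2$. Adding these yields exactly $r(\LearningRate)\|\Strategy{} - \Strategy{}'\|^2$.

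\textbf{Step 3 (the constant is a genuine contraction factor).} I will check that $0 \leq r(\LearningRate) < 1$. For the upper bound, $r(\LearningRate) < 1$ is equivalent to $\LearningRate(\LearningRate\LipConst_M^2 - 2\MonConst_M) < 0$, which is exactly the hypothesis $0 < \LearningRate < 2\MonConst_M/\LipConst_M^2$. For the lower bound, nonnegativity follows from $\LipConst_M \geq \MonConst_M$ in \cref{lem:mmd_mono}. Viewed as a quadratic in $\LearningRate$, $r$ has minimum $1 - \MonConst_M^2/\LipConst_M^2 \geq 0$. Alternatively, nonnegativity is automatic because $r(\LearningRate)$ upper-bounds a ratio of squared norms.

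\textbf{Main obstacle.} The argument itself is routine. The delicate points are bookkeeping: confirming that $\Strategies{}$ is convex and closed, and that the constants of \cref{lem:mmd_mono} hold uniformly on all of $\Strategies{}$. That uniformity relies on $\Variance \geq \Variance_{\min}$ to keep $\LipConst_\RegularizerField$ finite.
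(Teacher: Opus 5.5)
Your proof is correct and follows the paper's argument step for step: use non-expansiveness of the projection, expand the square, and bound the cross term and the quadratic term with the strong-monotonicity and Lipschitz constants from \cref{lem:mmd_mono}. Your extra check that $0 \le r(\alpha) < 1$, via the quadratic's minimum $1 - \delta_M^2/L_M^2 \ge 0$, is a slightly more careful version of the paper's closing remark on the admissible step-size range.
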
      
\begin{proof}

    \begin{equation}
    \begin{split}
        \| \PGDA(\Strategy{}) - \PGDA(\Strategy{}')\|^2& = \|\ProjStrategies(\Strategy{} - \LearningRate \GameField^\RegularizerField(\Strategy{})) - \ProjStrategies(\Strategy{}' - \LearningRate \GameField^\RegularizerField(\Strategy{}')) \|^2 \\
        &\leq \| \Strategy{} - \Strategy{}' - \LearningRate (\GameField^\RegularizerField(\Strategy{}) - \GameField^\RegularizerField(\Strategy{}')) \|^2 \\
        &= \| \Strategy{} - \Strategy{}' \|^2 - 2\LearningRate \langle \Strategy{} - \Strategy{}', \GameField^\RegularizerField(\Strategy{}) - \GameField^\RegularizerField(\Strategy{}') \rangle + \LearningRate^2 \| \GameField^\RegularizerField(\Strategy{}) - \GameField^\RegularizerField(\Strategy{}') \|^2
    \end{split}
    \end{equation}
    The first inequality follows from the fact that $\Strategies{}$ is closed and convex, which makes $\ProjStrategies$ non-expansive.
    We will use the $\MonConst_M$-strong monotonicity and $\LipConst_M$-Lipschitz continuity to get
    \begin{equation}
          - 2\LearningRate \langle \Strategy{} - \Strategy{}', \GameField^\RegularizerField(\Strategy{}) - \GameField^\RegularizerField(\Strategy{}') \rangle \leq -  2\LearningRate \MonConst_M \| \Strategy{} - \Strategy{}' \|^2
    \end{equation}
    \begin{equation}
        \LearningRate^2 \| (\GameField^\RegularizerField(\Strategy{}) - \GameField^\RegularizerField(\Strategy{}')) \|^2 \leq \LearningRate^2 \LipConst_M^2  \| \Strategy{} - \Strategy{}' \|^2
    \end{equation}
    This gets the following inequality
    \begin{equation}
         \| \PGDA(\Strategy{}) - \PGDA(\Strategy{}')\|^2 \leq (1 - 2\LearningRate \MonConst_M + \LearningRate^2 \LipConst_M^2) \|\Strategy{} - \Strategy{}' \|^2
    \end{equation}
    Then the PGDA is a contraction when $0 < 1 - 2\LearningRate \MonConst_M + \LearningRate^2 \LipConst_M^2 < 1$, which holds for learning rate $0 < \LearningRate <  \frac{2\MonConst_M}{\LipConst_M^2} $.    
\end{proof}

\begin{restatable}[]{proposition}{MmdUniqueFixedPoint}
\label{prop:mmd_unique}
     Under \cref{as:lip} the problem $\VI(\GameField^\RegularizerField, \Strategies{})$ has a unique solution $\Strategy{}^*$ for every $\RegularizationStrength > \frac{\LipConst_V}{\MonConst_\RegularizerField}$ and $\Strategy{}^M \in \Strategies{}$. For every $\LearningRate > 0$ the $\Strategy{}^*$ is a fixed point
     \begin{equation}
         \Strategy{}^* = \PGDA(\Strategy{}^*)
     \end{equation}
     If in addition $\Strategy{}^*$ is in interior of $\Strategies{}$, then $\GameField^\RegularizerField(\Strategy{}^*) = 0$
\end{restatable}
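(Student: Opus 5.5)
The plan is to combine the strong monotonicity from \cref{lem:mmd_mono} with the projection characterization in \cref{fact:strategy}. Existence comes from Banach's fixed point theorem applied to the operator $\PGDA$ of \cref{lem:pgda_contraction}. Uniqueness comes from strong monotonicity. The fixed-point claim for every $\LearningRate>0$ then follows from the variational characterization of the projection.

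\emph{Step 1 (equivalence).} For any $\LearningRate > 0$, I will show that $\Strategy{}^*$ solves $\VI(\GameField^\RegularizerField, \Strategies{})$ if and only if $\Strategy{}^* = \PGDA(\Strategy{}^*)$. Take $\Strategy{}^+ = \Strategy{}^* - \LearningRate \GameField^\RegularizerField(\Strategy{}^*)$ and $\Strategy{}^\circ = \Strategy{}^*$ in item 2 of \cref{fact:strategy}. Then $\Strategy{}^* = \ProjStrategies(\Strategy{}^+)$ holds exactly when $\langle -\LearningRate \GameField^\RegularizerField(\Strategy{}^*), \Strategy{} - \Strategy{}^* \rangle \le 0$ for all $\Strategy{} \in \Strategies{}$. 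Dividing by $\LearningRate>0$, this is the VI condition $\langle \GameField^\RegularizerField(\Strategy{}^*), \Strategy{} - \Strategy{}^*\rangle \ge 0$. This step needs $\Strategies{}$ to be nonempty, closed and convex. That holds because it is a product of copies of $\RealNumbers$ for the means and half-lines $[\Variance_{\min},\infty)$ for the standard deviations, possibly intersected with compact convex action sets.

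\emph{Step 2 (existence).} Fix one step size $0<\LearningRate_0<2\MonConst_M/\LipConst_M^2$. By \cref{lem:pgda_contraction}, $\PGDA$ is a contraction with factor $\sqrt{r(\LearningRate_0)}<1$ on the closed set $\Strategies{}$, which is complete as a closed subset of Euclidean space. Banach's theorem gives a fixed point, and by Step 1 it solves the VI. The one thing to verify is $r(\LearningRate_0) \in (0,1)$. This follows from $\LipConst_M \ge \MonConst_M$, which gives $1-2\LearningRate_0\MonConst_M+\LearningRate_0^2\LipConst_M^2 \ge (1-\LearningRate_0\MonConst_M)^2 \ge 0$.

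\emph{Step 3 (uniqueness and arbitrary $\LearningRate$).} Suppose $\Strategy{}^*$ and $\hat\Strategy{}$ both solve the VI. Test each VI at the other point and add the two inequalities. This gives $\langle \GameField^\RegularizerField(\Strategy{}^*) - \GameField^\RegularizerField(\hat\Strategy{}), \Strategy{}^* - \hat\Strategy{}\rangle \le 0$. Strong monotonicity from \cref{lem:mmd_mono} bounds the left side below by $\MonConst_M\|\Strategy{}^*-\hat\Strategy{}\|^2$, so $\Strategy{}^*=\hat\Strategy{}$. Since the solution does not depend on $\LearningRate$, Step 1 shows it is a fixed point of $\PGDA$ for every $\LearningRate>0$.

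\emph{Step 4 (interior).} If $\Strategy{}^*$ lies in the interior of $\Strategies{}$, choose $\Strategy{} = \Strategy{}^* - \epsilon \GameField^\RegularizerField(\Strategy{}^*)$ with $\epsilon>0$ small enough that $\Strategy{}$ stays in $\Strategies{}$. The VI then gives $-\epsilon\|\GameField^\RegularizerField(\Strategy{}^*)\|^2 \ge 0$, so $\GameField^\RegularizerField(\Strategy{}^*)=0$.

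I expect the main obstacle to be subtle rather than computational. \cref{lem:mmd_mono} and \cref{lem:pgda_contraction} rely on \cref{as:lip} holding on all of $\Strategies{}$ and on $\Strategies{}$ being convex and closed. I will state these explicitly as standing properties of the strategy space so that the Banach and projection arguments apply.
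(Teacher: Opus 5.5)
Your proposal is correct and follows essentially the same route as the paper. The paper also uses the projection characterization to show that fixed points of $T$ are exactly solutions of the VI, applies Banach's theorem at one admissible step size $0<\alpha<2\delta_M/L_M^2$ to get existence, uses strong monotonicity for uniqueness, and then uses the equivalence to extend the fixed-point property to every $\alpha>0$. You additionally prove the interior claim $V^R(\pi^*)=0$ by testing the VI at $\pi^*-\epsilon V^R(\pi^*)$, which the paper states but never argues, and your check that $r(\alpha)\ge(1-\alpha\delta_M)^2\ge 0$ is a harmless extra detail.
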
      
\begin{proof}
We start by showing that the fixed points $\Strategy{}^*$ of $\PGDA$ are solutions of the $\VI(\GameField^\RegularizerField, \Strategies{})$

\begin{equation}
\label{eq:fixed_point}
    \Strategy{}^* = \PGDA(\Strategy{}^*) \Leftrightarrow \langle \GameField^\RegularizerField(\Strategy{}^*), \Strategy{} - \Strategy{}^* \rangle \geq 0
\end{equation}

We will use \cref{fact:strategy}. The unprojected step is $\Strategy{}^+ = \Strategy{}^* - \LearningRate \GameField^\RegularizerField(\Strategy{}^*)$ and the feasible point is $\Strategy{}^* = \Strategy{}^\circ = \ProjStrategies(\Strategy{}^+)$. From fact 2 we know that it also needs to hold 

\begin{equation}
    \langle\Strategy{}^+ - \Strategy{}^\circ, \Strategy{} - \Strategy{}^\circ \rangle = \langle \Strategy{}^* - \LearningRate \GameField^\RegularizerField(\Strategy{}^*) - \Strategy{}^*, \Strategy{} - \Strategy{}^\circ \rangle = \langle  - \LearningRate \GameField^\RegularizerField(\Strategy{}^*), \Strategy{} - \Strategy{}^\circ \rangle \leq 0
\end{equation}

This means that the $ \langle \LearningRate \GameField^\RegularizerField(\Strategy{}^*), \Strategy{} - \Strategy{}^\circ \rangle \geq 0$.

Now we will show that fixed point exists. First the $0 < \LearningRate < \frac{2\MonConst_M}{\LipConst_M^2}$ is a non-empty because of \cref{lem:mmd_mono}. Let's fix any $\LearningRate$ from that feasible set. Based on \cref{lem:pgda_contraction}, we know that $\PGDA$ is a contraction and the $\Strategies{}$ is closed and nonempty. Through Banach fixed-point theorem, there is a fixed point for the fixed $\LearningRate$. This fixed point is $\Strategy{}^*$, when using \cref{eq:fixed_point} from left to right at that same $\LearningRate$ it is a solution of the $\VI(\GameField^\RegularizerField, \Strategies{})$. Then using the $\cref{eq:fixed_point}$ from right to left, this same fixed point is a solution for any feasible $\LearningRate$, not only the one used to find this point.

The uniqueness of the fixed point comes from the strong monotonicity.
\end{proof}

\MmdConvergence*
\begin{proof}
The strategy computed at iteration $k+1$ is
    \begin{equation}
        \Strategy{}^{k+1} = \PGDA(\Strategy{}^{k})
    \end{equation}
    By \cref{prop:mmd_unique}, there is a unique solution $\Strategy{}^* = \PGDA(\Strategy{}^*)$ for the same $\LearningRate$. Now let us use $\Strategy{}' = \Strategy{}^*$ in \cref{lem:pgda_contraction}
    \begin{equation}
        \| \Strategy{}^{k+1} - \Strategy{}^*\|^2 = \| \PGDA(\Strategy{}^{k}) - \PGDA(\Strategy{}^*)\|^2 \leq (1 - 2\LearningRate \MonConst_M + \LearningRate^2 \LipConst_M^2) \|\Strategy{}^{k} - \Strategy{}^* \|^2
    \end{equation}
    Iterating this procedure over $k$ steps results in
    \begin{equation}
         \|\Strategy{}^{k} - \Strategy{}^*\|^2 \leq r(\RegularizationStrength)^k\|\Strategy{}^{0} - \Strategy{}^*\|^2
    \end{equation}
\end{proof}

% This proves that the MMD converges to a fixed point. We will use this to also show that if the game is convex-concave, then updating magnet to the fixed point will gradually converge to a $\epsilon-$Nash equilibrium, limited only by the $\Variance_{\min}$. We will assume $\Strategies{}^{\Variance_{\min}} = \{ \Strategy{} \in \Strategies{}: \Variance = \Variance_{\min} \}$, which is that $\Strategies{}^{\Variance_{\min}}$ is a subset of strategies, that have fixed $\Variance$. The fact that the convex-concave games have a pure Nash follows from Sion's theorem. 

% Consider an oracle function $\Inner(\Strategy{}^M)$ which takes a magnet strategy $\Strategy{}^M$ as an input and returns a solution of $\VI(\GameField^\RegularizerField, \Strategies{}^{\Variance_{\min}})$ for this magnet. The goal of the algorithm is to find a solution which satisfies

% \begin{equation}
%     \Strategy{}^* = \langle \GameField(\Strategy{}^*), \Strategy{} - \Strategy{}^* \rangle \geq 0 \quad \forall \Strategy{} \in \Strategies{}^{\Variance_{\min}}
% \end{equation}

% For these following proofs let us assume $\Strategie?s{}^{\Variance_{\min}} = \{ \Strategy{} \in \Strategies{}: \Variance = \Variance_{\min} \}$, which is that $\Strategies{}^{\Variance_{\min}}$ is a subset of strategies, that have fixed $\Variance$.

\begin{restatable}[]{proposition}{OuterFixed}
\label{prop:outer_fixed}
For a convex-concave game, for strategy $\Strategy{}^M \in \Strategies{}^{\Variance_{\min}}$ it holds that
\begin{equation}
    \Strategy{}^M = \Inner(\Strategy{}^M) \Leftrightarrow  \langle \GameField(\Strategy{}^M), \Strategy{} - \Strategy{}^M \rangle \geq 0 \quad \forall \Strategy{} \in  \Strategies{}^{\Variance_{\min}}
\end{equation}

     % Under \cref{as:lip} the problem $\VI(\GameField^\RegularizerField, \Strategies{})$ has a unique solution $\Strategy{}^*$ for every $\RegularizationStrength > \frac{\LipConst_V}{\MonConst_\RegularizerField}$ and $\Strategy{}^M \in \Strategies{}$. For every $\LearningRate > 0$ the $\Strategy{}^*$ is a fixed point
     % \begin{equation}
     %     \Strategy{}^* = \PGDA(\Strategy{}^*)
     % \end{equation}
     % If in addition $\Strategy{}^*$ is in interior of $\Strategies{}$, then $\GameField^\RegularizerField(\Strategy{}^*) = 0$
\end{restatable}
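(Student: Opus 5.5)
The argument combines the variational characterization of $\Inner(\Strategy{}^M)$ with one structural fact: the regularizer vanishes at its own magnet. By definition (as in \cref{prop:mmd_unique}, restricted to $\Strategies{}^{\Variance_{\min}}$), $\Inner(\Strategy{}^M)$ is the unique solution $\hat\Strategy{}$ of
\begin{equation}
\langle \GameField(\hat\Strategy{}) + \RegularizationStrength \RegularizerField(\hat\Strategy{}), \Strategy{} - \hat\Strategy{}\rangle \geq 0 \quad \forall \Strategy{} \in \Strategies{}^{\Variance_{\min}},
\end{equation}
where $\RegularizerField$ is built with magnet $\Strategy{}^M$. Uniqueness comes from strong monotonicity of $\GameField^\RegularizerField$ on $\Strategies{}^{\Variance_{\min}}$. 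On this restricted set only the mean coordinates are free, so \cref{lem:regularizer_mono} and \cref{lem:mmd_mono} apply with $\MonConst_\RegularizerField = 1/\Variance_{\min}^2$. This is exactly the hypothesis $\RegularizationStrength > \LipConst_\GameField \Variance_{\min}^2$.

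First, evaluate the regularizer at $\Strategy{} = \Strategy{}^M$. By the explicit formula in the proof of \cref{lem:regularizer_mono}, the mean components are $(\Mean_\Player - \Mean_\Player^M)/\Variance_\Player^{2,M} = 0$. The $\Variance$-components are $\Variance_\Player/\Variance_\Player^{2,M} - 1/\Variance_\Player = 0$, since $\Variance_\Player = \Variance^M_\Player = \Variance_{\min}$. These $\Variance$-components do not matter anyway, because every direction $\Strategy{} - \Strategy{}^M$ in $\Strategies{}^{\Variance_{\min}}$ has zero $\Variance$-part. Hence $\RegularizerField(\Strategy{}^M) = 0$, and therefore $\GameField^\RegularizerField(\Strategy{}^M) = \GameField(\Strategy{}^M)$.

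For ($\Rightarrow$), suppose $\Strategy{}^M = \Inner(\Strategy{}^M)$. Then $\Strategy{}^M$ satisfies the regularized VI with $\hat\Strategy{} = \Strategy{}^M$. Substituting $\GameField^\RegularizerField(\Strategy{}^M) = \GameField(\Strategy{}^M)$ gives $\langle \GameField(\Strategy{}^M), \Strategy{} - \Strategy{}^M\rangle \geq 0$ for all $\Strategy{} \in \Strategies{}^{\Variance_{\min}}$.

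For ($\Leftarrow$), suppose the unregularized VI holds at $\Strategy{}^M$. By the same identity, $\Strategy{}^M$ solves the regularized $\VI(\GameField^\RegularizerField, \Strategies{}^{\Variance_{\min}})$. By uniqueness of that solution (strong monotonicity: two solutions $x, y$ give $\MonConst_M\|x-y\|^2 \le \langle \GameField^\RegularizerField(x)-\GameField^\RegularizerField(y), x-y\rangle \le 0$), we conclude $\Strategy{}^M = \Inner(\Strategy{}^M)$.

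The one delicate point is making sure uniqueness is legitimately available on $\Strategies{}^{\Variance_{\min}}$. This requires that $\Strategies{}^{\Variance_{\min}}$ is nonempty, closed and convex, which holds since it is an affine slice of $\Strategies{}$. It also requires that the strong-monotonicity constant there is governed only by the mean block, which is why the threshold $\LipConst_\GameField\Variance_{\min}^2$ suffices.

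Convex-concavity is not needed for the equivalence itself. Its role is interpretive: for a convex-concave $\Utility$ the smoothed payoff $\ExpectedUtility{1}^G$ is convex-concave in the means, so the unregularized VI condition characterizes a saddle point (Nash equilibrium) within $\Strategies{}^{\Variance_{\min}}$. I would note this remark but not rely on it in the proof.
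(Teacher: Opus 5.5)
Your proof is correct and takes the same approach as the paper: both observe that $\RegularizerField(\Strategy{}^M) = 0$ when $\Strategy{}^M \in \Strategies{}^{\Variance_{\min}}$, so the regularized and unregularized variational inequalities coincide at $\Strategy{}^M$. In the converse direction you also spell out the uniqueness of the regularized solution (strong monotonicity with $\MonConst_\RegularizerField = 1/\Variance_{\min}^2$, hence the threshold $\RegularizationStrength > \LipConst_\GameField \Variance_{\min}^2$), which the paper uses only implicitly when it concludes $\Inner(\Strategy{}^M) = \Strategy{}^M$.
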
      
\begin{proof}
    Since $\Variance_\Player = \Variance_{\Player}^M$, we know that the $\frac{\partial }{\partial \Variance_\Player} \KLDiv(\Strategy{}, \Strategy{}^M) = 0$, as such it is enough to work with the $\Mean$. Since the magnet is a fixed point of $\Inner$, we know that
    \begin{equation}
    \begin{split}
        \langle \GameField^\RegularizerField(\Strategy{}^M), \Strategy{} - \Strategy{}^M \rangle \geq 0  \quad \forall \Strategy{} \in  \Strategies{}^{\Variance_{\min}}\\
        \langle \GameField(\Strategy{}^M) + \RegularizationStrength \RegularizerField(\Strategy{}^M), \Strategy{} - \Strategy{}^M \rangle \geq 0  \quad \forall \Strategy{}  \in  \Strategies{}^{\Variance_{\min}}\\
        \langle \GameField(\Strategy{}^M), \Strategy{} - \Strategy{}^M \rangle \geq 0  \quad \forall \Strategy{}  \in  \Strategies{}^{\Variance_{\min}}\\
    \end{split}
    \end{equation}
    which comes from the fact that $\RegularizerField(\Strategy{}^M) = 0$ for fixed point. Conversely, the same approach shows that any solution of $\langle \GameField(\Strategy{}^M), \Strategy{} - \Strategy{}^M \rangle \geq 0$ solves also the $\VI(\GameField(\Strategy{}^M) + \RegularizationStrength \RegularizerField, \Strategies{})$ 
\end{proof}

\begin{restatable}[]{lemma}{OuterUpdateProxStep}
\label{lem:outer_update_prox_step}
    Let $\Strategy{}^M \in  \Strategies{}^{\Variance_{\min}}$ and $\Strategy{}^+ = \Inner(\Strategy{}^M)$ it holds that
    \begin{equation}
        \langle \GameField(\Strategy{}^+), \Strategy{}^+ - \Strategy{} \rangle \leq \frac{\RegularizationStrength}{2\Variance_{\min}^2}\big(\|\Mean - \Mean^M\|^2 - \|\Mean - \Mean^+\|^2 - \|\Mean^+ - \Mean^M \|^2 \big)
    \end{equation}
    Moreover, for $\Strategy{}^*$ holds
    \begin{equation}
        \| \Mean^* - \Mean^+ \|^2 \leq \|\Mean^* - \Mean^M\|^2 - \|\Mean^+ - \Mean^M \|^2
    \end{equation}
\end{restatable}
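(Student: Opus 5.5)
We start from the variational inequality that characterizes $\Strategy{}^+ = \Inner(\Strategy{}^M)$ on $\Strategies{}^{\Variance_{\min}}$. Because every strategy in $\Strategies{}^{\Variance_{\min}}$ has $\Variance = \Variance_{\min} = \Variance^M$, the $\Variance$-components of $\Strategy{}^+ - \Strategy{}$ vanish. Hence only the mean part of $\RegularizerField$ matters, and by the first-derivative formulas in the Fact it equals $(\Mean^+ - \Mean^M)/\Variance_{\min}^2$. The optimality condition of $\VI(\GameField^\RegularizerField, \Strategies{}^{\Variance_{\min}})$, as in \cref{prop:mmd_unique}, then reads
\begin{equation}
\langle \GameField(\Strategy{}^+), \Strategy{} - \Strategy{}^+ \rangle + \frac{\RegularizationStrength}{\Variance_{\min}^2}\langle \Mean^+ - \Mean^M, \Mean - \Mean^+ \rangle \geq 0 \quad \forall \Strategy{} \in \Strategies{}^{\Variance_{\min}}.
\end{equation}
Rearranging gives
\begin{equation}
\langle \GameField(\Strategy{}^+), \Strategy{}^+ - \Strategy{} \rangle \leq \frac{\RegularizationStrength}{\Variance_{\min}^2}\langle \Mean^+ - \Mean^M, \Mean - \Mean^+ \rangle.
\end{equation}

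Next we apply the three-point identity
\begin{equation}
2\langle a - b, c - a\rangle = \|c - b\|^2 - \|c - a\|^2 - \|a - b\|^2
\end{equation}
with $a = \Mean^+$, $b = \Mean^M$ and $c = \Mean$. This yields
\begin{equation}
\langle \Mean^+ - \Mean^M, \Mean - \Mean^+\rangle = \tfrac12\big(\|\Mean - \Mean^M\|^2 - \|\Mean - \Mean^+\|^2 - \|\Mean^+ - \Mean^M\|^2\big),
\end{equation}
which gives the first claimed inequality directly.

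For the second claim we take $\Strategy{} = \Strategy{}^*$, a Nash equilibrium of the game restricted to $\Strategies{}^{\Variance_{\min}}$. Such a point exists because the game is convex-concave (Sion), and it satisfies $\langle \GameField(\Strategy{}^*), \Strategy{} - \Strategy{}^*\rangle \geq 0$ for all $\Strategy{} \in \Strategies{}^{\Variance_{\min}}$. It then suffices to show that the left-hand side $\langle \GameField(\Strategy{}^+), \Strategy{}^+ - \Strategy{}^*\rangle$ is nonnegative. Dropping it and multiplying by $2\Variance_{\min}^2/\RegularizationStrength > 0$ gives the stated bound on $\|\Mean^* - \Mean^+\|^2$.

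The main obstacle is proving this nonnegativity, which requires monotonicity of $\GameField$ on $\Strategies{}^{\Variance_{\min}}$. We plan to argue as follows. With $\Variance$ fixed, $\ExpectedUtility{1}^G(\Mean_1,\Mean_2) = \E_{z_1,z_2}\,\Utility(\Mean_1 + \Variance_{\min} z_1, \Mean_2 + \Variance_{\min} z_2)$ is an average of translates of $\Utility$. It is therefore concave in $\Mean_1$ and convex in $\Mean_2$, up to the sign convention matching $\GameField$ and player~1 maximizing. Consequently $\GameField$ restricted to the means is the monotone saddle operator of a convex-concave function, giving $\langle \GameField(\Strategy{}^+) - \GameField(\Strategy{}^*), \Strategy{}^+ - \Strategy{}^* \rangle \geq 0$. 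Combining this with the equilibrium VI at $\Strategy{}^*$ gives
\begin{equation}
\langle \GameField(\Strategy{}^+), \Strategy{}^+ - \Strategy{}^*\rangle \geq \langle \GameField(\Strategy{}^*), \Strategy{}^+ - \Strategy{}^*\rangle \geq 0.
\end{equation}
Two points need care here: the domain of the actions (any truncation to compact $\Actions{\Player}$ should preserve convexity of the smoothed payoff under translation), and the fact that $\Strategy{}^*$ must be taken as the equilibrium of the smoothed game rather than of the original one.
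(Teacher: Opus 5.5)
Your proposal is correct and follows the paper's proof step for step. You write the VI at $\Strategy{}^+$ on $\Strategies{}^{\Variance_{\min}}$ in the mean coordinates, apply the three-point identity, and then combine monotonicity of $\GameField$ with the VI at $\Strategy{}^*$ (a solution of $\VI(\GameField, \Strategies{}^{\Variance_{\min}})$) to make the left-hand side nonnegative. Your translate-averaging argument justifies the monotonicity that the paper only asserts, but clipping samples to a compact $\Actions{\Player}$ can in general destroy concavity, so, as the paper implicitly does, you need $\Utility$ to be convex-concave on the whole support of the Gaussians.
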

\begin{proof}
    We know that $\Strategy{}^+$ is a fixed point of $\VI(\GameField^\RegularizerField, \Strategies{}^{\Variance_{\min}})$, so it holds
    \begin{equation} 
        \langle \GameField^\RegularizerField(\Strategy{}^+), \Strategy{} - \Strategy{}^+ \rangle \geq 0 \quad \forall \Strategy{} \in \Strategies{}^{\Variance_{\min}}
    \end{equation}

    Let us focus only on the $\Mean$ part of the inequality, because the $\Variance$ is fixed. We use $\GameField(\Mean)$ for only the mean part of the field

    \begin{equation}
        \begin{split}
        \langle \GameField^\RegularizerField(\Mean^+), \Mean - \Mean^+ \rangle &= \langle \GameField(\Mean^+) + \RegularizationStrength \RegularizerField(\Mean^+), \Mean - \Mean^+ \rangle\\
        &=  \langle \GameField(\Mean^+) + \frac{\RegularizationStrength}{\Variance_{\min}^2} (\Mean^+ - \Mean^M), \Mean - \Mean^+ \rangle\\
        &\geq 0
        \end{split}
    \end{equation}
We can further reorder this into
\begin{equation}
    \langle \GameField(\Strategy{}^+), \Strategy{}^+ - \Strategy{} \rangle = \langle \GameField(\Mean{}^+), \Mean{}^+ - \Mean{} \rangle \leq \frac{\RegularizationStrength}{\Variance_{\min}^2} \langle \Mean^+ - \Mean^M, \Mean - \Mean^+ \rangle
\end{equation}
Now we can now use following identity $2\langle q - s, p - q\rangle = \|p-s\|^2 - \|p - q\|^2 - \|q - s\|^2$ to get
\begin{equation}
     \frac{\RegularizationStrength}{\Variance_{\min}^2} \langle \Mean^+ - \Mean^M, \Mean - \Mean^+ \rangle =  \frac{\RegularizationStrength}{2\Variance_{\min}^2} (\| \Mean - \Mean^M\|^2 - \| \Mean - \Mean^+ \|^2 - \|\Mean^+  - \Mean^M\|^2)
\end{equation}
We know that for $\Strategy{}^*$ it holds $\langle \GameField(\Strategy{}^*), \Strategy{}^+ - \Strategy{}^* \rangle \geq 0$. Due to the monotonicity of the convex-concave game it also holds  $\langle \GameField(\Strategy{}) - \GameField(\Strategy{}'), \Strategy{} - \Strategy{}' \rangle \geq 0$. Therefore
\begin{equation}
    \begin{split}
        \langle \GameField(\Strategy{}^+) - \GameField(\Strategy{}^*), \Strategy{}^+ - \Strategy{}^* \rangle &= \langle \GameField(\Strategy{}^+), \Strategy{}^+ - \Strategy{}^* \rangle  - \langle  \GameField(\Strategy{}^*), \Strategy{}^+ - \Strategy{}^* \rangle \geq 0\\
        \langle \GameField(\Strategy{}^+), \Strategy{}^+ - \Strategy{}^* \rangle  &\geq \langle  \GameField(\Strategy{}^*), \Strategy{}^+ - \Strategy{}^* \rangle \geq 0
    \end{split}
\end{equation}
Putting it all together, we get
\begin{equation}
    0 \leq \langle \GameField(\Strategy{}^+), \Strategy{}^+ - \Strategy{}^* \rangle \leq \frac{\RegularizationStrength}{2\Variance_{\min}^2}\big(\|\Mean^* - \Mean^M\|^2 - \|\Mean^* - \Mean^+\|^2 - \|\Mean^+ - \Mean^M \|^2 \big)
\end{equation}
Since it holds that $\RegularizationStrength \geq 0$ and $\Variance_{\min} \geq 0$, it has to hold that
\begin{equation}
\begin{split}
    \|\Mean^* - \Mean^M\|^2 - \|\Mean^* - \Mean^+\|^2 - \|\Mean^+ - \Mean^M \|^2 &\geq 0\\
    \|\Mean^* - \Mean^M\|^2 - \|\Mean^+ - \Mean^M \|^2 &\geq  \|\Mean^* - \Mean^+\|^2
\end{split}
\end{equation}
\end{proof}

\begin{restatable}[]{theorem}{OuterConvergence}
\label{thm:outer_convergence_}
For a convex-concave game and $\RegularizationStrength > \LipConst_\GameField \Variance_{\min}^2$, let $\Strategy{}^{M, t+1} = \Inner(\Strategy{}^{M, t})$ from arbitrary $\Strategy{}^{M, 0} \in \Strategies{}^{\Variance_{\min}}$, then it holds that
\begin{equation}
\sum_{t \geq 0} \|\Strategy{}^{M, t+1} - \Strategy{}^{M,t} \|^2 \leq D^2
\end{equation} 
where $D$ is the diameter of the action space. Moreover, $\Mean^{M, t} \to \overline{\Mean}$, where $\overline{\Strategy{}} = (\overline{\Mean}, \Variance_{\text{min}})$ solves variational inequalities $\VI(\GameField, \Strategies{}^{\Variance_{\text{min}}})$
\end{restatable}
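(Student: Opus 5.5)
The plan is a standard proximal-point argument run on the means. Throughout, I work on $\Strategies{}^{\Variance_{\min}}$, where the $\Variance$ coordinates are frozen, so all distances reduce to distances between means.

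\textbf{Step 1 (well-posedness of each step).} First I check that each outer step is well defined. The condition $\RegularizationStrength > \LipConst_\GameField \Variance_{\min}^2$ is exactly $\RegularizationStrength > \LipConst_V/\MonConst_\RegularizerField$ when the magnet has $\Variance^M = \Variance_{\min}$. In that case $\MonConst_\RegularizerField = 1/\Variance_{\min}^2$ by \cref{lem:regularizer_mono}. So \cref{prop:mmd_unique}, applied on the closed convex set $\Strategies{}^{\Variance_{\min}}$, gives a unique $\Inner(\Strategy{}^{M,t}) \in \Strategies{}^{\Variance_{\min}}$. Hence the sequence stays in $\Strategies{}^{\Variance_{\min}}$ and is well defined.

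\textbf{Step 2 (square-summable increments).} Fix a solution $\Strategy{}^*$ of $\VI(\GameField, \Strategies{}^{\Variance_{\min}})$. It exists because the action space is compact and the game is convex-concave; Sion's theorem gives a pure saddle point, or one can use standard VI existence for a continuous monotone field on a compact convex set. Applying the second inequality of \cref{lem:outer_update_prox_step} with $\Strategy{}^M = \Strategy{}^{M,t}$ and $\Strategy{}^+ = \Strategy{}^{M,t+1}$ gives
\begin{equation}
\|\Mean^{M,t+1} - \Mean^{M,t}\|^2 \leq \|\Mean^* - \Mean^{M,t}\|^2 - \|\Mean^* - \Mean^{M,t+1}\|^2 .
\end{equation}
Summing over $t = 0,\dots,T$ telescopes the right-hand side to at most $\|\Mean^* - \Mean^{M,0}\|^2 \leq D^2$. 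The $\Variance$ coordinates are identical along the sequence, so this is the claimed bound on $\sum_t \|\Strategy{}^{M,t+1} - \Strategy{}^{M,t}\|^2$. In particular the increments tend to zero, and $\|\Mean^* - \Mean^{M,t}\|$ is nonincreasing for every VI solution $\Strategy{}^*$.

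\textbf{Step 3 (limit points solve the VI).} The means lie in a compact set, so some subsequence $\Mean^{M,t_j}$ converges to a point $\overline{\Mean}$. Since $\|\Mean^{M,t_j+1} - \Mean^{M,t_j}\| \to 0$, the shifted subsequence $\Mean^{M,t_j+1}$ converges to $\overline{\Mean}$ as well. Now apply the first inequality of \cref{lem:outer_update_prox_step} to the step $t_j \to t_j+1$, with an arbitrary $\Strategy{} \in \Strategies{}^{\Variance_{\min}}$. The right-hand side equals $\frac{\RegularizationStrength}{\Variance_{\min}^2}\langle \Mean^+ - \Mean^M, \Mean - \Mean^+\rangle$. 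This tends to $0$ because $\Mean^+ - \Mean^M \to 0$ and $\Mean - \Mean^+$ is bounded. The field $\GameField$ is continuous by \cref{as:lip}, so passing to the limit gives $\langle \GameField(\overline{\Strategy{}}), \overline{\Strategy{}} - \Strategy{}\rangle \leq 0$ for all $\Strategy{}$. That is, $\overline{\Strategy{}}$ solves $\VI(\GameField, \Strategies{}^{\Variance_{\min}})$.

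\textbf{Step 4 (convergence of the whole sequence).} This is an Opial-type argument, and I expect it to be the main obstacle, because Step 3 alone only identifies subsequential limits. By Step 3, $\overline{\Strategy{}}$ is itself a VI solution. Step 2 applied with $\Strategy{}^* := \overline{\Strategy{}}$ then shows that $\|\overline{\Mean} - \Mean^{M,t}\|$ is nonincreasing in $t$. Along the subsequence $t_j$ this distance tends to $0$, so by monotonicity the entire sequence converges to $\overline{\Mean}$. The one subtlety is that Step 2 must hold for every VI solution, not only the particular $\Strategy{}^*$ chosen there. This is fine, because \cref{lem:outer_update_prox_step} used only the VI property of $\Strategy{}^*$ together with monotonicity of $\GameField$ for convex-concave games.
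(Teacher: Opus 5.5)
Your proposal is correct and follows the paper's proof essentially step for step. You telescope the second inequality of \cref{lem:outer_update_prox_step} to get square-summability and monotone distance to every VI solution, use compactness and the shifted subsequence to show the limit point solves $\VI(\GameField, \Strategies{}^{\Variance_{\min}})$, and upgrade subsequential convergence to convergence of the whole sequence by the same Opial-type step; your Step 1 check that $\RegularizationStrength > \LipConst_\GameField \Variance_{\min}^2$ coincides with $\RegularizationStrength > \LipConst_V/\MonConst_\RegularizerField$ when $\Variance^M = \Variance_{\min}$ is a small well-posedness detail the paper leaves implicit.
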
      
\begin{proof}
    If we use the result from \Cref{lem:outer_update_prox_step}, we know that following holds for $\Strategy{}^*$
    \begin{equation}
        \|\Mean{}^* - \Mean{}^{M, t+1}\|^2 \leq \|\Mean{}^* - \Mean{}^{M, t}\|^2 - \| \Mean{}^{M, t+1} - \Mean{}^{M, t}\|^2
    \end{equation}
This means that $\|\Mean{}^* - \Mean{}^{M, t}\| $ is non-increasing, which gets
\begin{equation}
    \begin{split}
        \sum_{t \geq 0}\| \Mean{}^{M, t+1} - \Mean{}^{M, t}\|^2 &\leq \sum_{t \geq 0} \|\Mean{}^* - \Mean{}^{M, t}\|^2 - \|\Mean{}^* - \Mean{}^{M, t+1}\|^2 \\
        &= \|\Mean{}^* - \Mean{}^{M, 0}\|^2\\
        &\leq D^2 < \infty
    \end{split}
\end{equation} 
The solution set of $\VI(\GameField, \Strategies{}^{\Variance_{\text{min}}})$ is nonempty, because $\Strategies{}^{\Variance_{\text{min}}}$ is compact and convex and $\GameField$ is continuous. \Cref{lem:outer_update_prox_step} holds for every solution of the variation inequality solution $\Strategy{}^*$, so the sequence $\|\Strategy{}^* - \Strategy{}^{M, t}\|$ is non-increasing for all solutions $\Strategy{}^*$. The above bound shows  $\| \Mean{}^{M, t+1} - \Mean{}^{M, t}\|^2 \to 0$.

Since $\Strategy{}^{M, t+1}$ is a solution of $\VI(\GameField^\RegularizerField, \Strategies{}^{\Variance_{\min}})$, it holds
\begin{equation}
    \langle \GameField(\Mean{}^{M, t+1}) + \frac{\RegularizationStrength}{\Variance_{\min}^2} (\Mean{}^{M, t+1} - \Mean{}^{M, t}), \Mean - \Mean{}^{M, t+1} \rangle \geq 0
\end{equation}
Splitting the inner product and applying Cauchy-Schwarz inequality, we get
\begin{equation}
    \begin{split}
        \langle \GameField(\Mean{}^{M, t+1}), \Mean - \Mean{}^{M, t+1} \rangle &\geq -\frac{\RegularizationStrength}{\Variance_{\min}^2}\langle \Mean{}^{M, t+1} - \Mean{}^{M, t}, \Mean - \Mean{}^{M, t+1} \rangle\\
        &\geq -\frac{\RegularizationStrength}{\Variance_{\min}^2} \|\Mean{}^{M, t+1} - \Mean{}^{M, t}\| \|\Mean - \Mean{}^{M, t+1}\| \\
        &\geq -\frac{\RegularizationStrength}{\Variance_{\min}^2} \|\Mean{}^{M, t+1} - \Mean{}^{M, t}\|D
    \end{split}
\end{equation}

The sequence $\Mean^{M, t}$ lies in a compact set, so there is subsequence that starts at time $t_k$ and converges to fixed point $\overline{\Mean}$. Since we know $\|\Mean^{M, t_k+1} - \Mean^{M, t_k}\| \to 0$, we know that even the subsequence starting at $\Mean^{M, t_k+1}$ will converge to the same point. So, if we use $t = t_k$, the right hand side vanishes when $k \to \infty$, so

\begin{equation}
     \langle \GameField(\overline{\Mean{}}), \Mean - \overline{\Mean{}} \rangle \geq 0
\end{equation}
which is the solution of $\VI(\GameField, \Strategies{}^{\Variance_{\min}})$. Applying \cref{lem:outer_update_prox_step} on $\overline{\Strategy{}}$, the sequence $\| \overline{\Mean} - \Mean^{M, t}\|$ is non-increasing and it converges to 0 along wtih subsequence starting at $t_k$. Hence $\| \overline{\Mean} - \Mean^{M, t}\| \to 0$ and the whole sequence converges to $\overline{\Mean}$.
\end{proof}

\begin{restatable}[]{lemma}{GaussianBias}
\label{lem:gaussian_bias}
    Consider a function $f$, which has continuous second derivative with $\sup|f''| \leq M_f$, then $|\E_{x \sim \Gaussian(\Mean, \Variance^2)}[f(x)] - f(\Mean) | \leq \frac{M_f \Variance^2}{2}$ .
\end{restatable}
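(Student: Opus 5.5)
Since $f$ is $C^2$, the plan is a second-order Taylor expansion of $f$ around $\Mean$ with the Lagrange form of the remainder, followed by taking expectations under $x \sim \Gaussian(\Mean, \Variance^2)$. For every $x \in \RealNumbers$ there is a point $\xi_x$ between $\Mean$ and $x$ with
\begin{equation*}
    f(x) = f(\Mean) + f'(\Mean)(x - \Mean) + \tfrac{1}{2} f''(\xi_x)(x - \Mean)^2 .
\end{equation*}
The first two terms are handled by the first two moments of the Gaussian. The constant term integrates to $f(\Mean)$. The linear term vanishes in expectation because $\E[x - \Mean] = 0$.

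The remaining term is the remainder $\tfrac{1}{2} f''(\xi_x)(x-\Mean)^2$. Using the uniform bound $|f''| \le M_f$, it is dominated pointwise by $\tfrac{M_f}{2}(x - \Mean)^2$, which is integrable with expectation $\tfrac{M_f}{2}\Variance^2$. Moving the absolute value inside the expectation then gives
\begin{equation*}
    \big|\E[f(x)] - f(\Mean)\big| = \tfrac{1}{2}\big|\E[f''(\xi_x)(x-\Mean)^2]\big| \le \tfrac{M_f}{2}\E[(x-\Mean)^2] = \tfrac{M_f \Variance^2}{2},
\end{equation*}
which is the claimed bound.

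The main obstacle is justifying the integrals rather than the algebra. We need $\E[f(x)]$ to exist and to split into the three expectations above. The bound on $f''$ implies $|f'(x)| \le |f'(\Mean)| + M_f|x - \Mean|$ and $|f(x)| \le |f(\Mean)| + |f'(\Mean)||x-\Mean| + \tfrac{M_f}{2}(x-\Mean)^2$. Since Gaussians have finite moments of every order, each term is integrable and linearity of expectation applies.

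Measurability of $x \mapsto f''(\xi_x)$ is not an issue. The remainder equals $f(x) - f(\Mean) - f'(\Mean)(x-\Mean)$, which is continuous in $x$, so we never need $\xi_x$ to be chosen measurably. Alternatively, we can use the integral form of the remainder, $\int_0^1 (1-s) f''(\Mean + s(x-\Mean))\,ds\,(x-\Mean)^2$, which avoids $\xi_x$ entirely.

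If the application in \Cref{thm:final_convergence} requires it, the same argument goes through for a bounded domain $\Actions{\Player}$ on which $u$ is only defined there, by applying it coordinate-wise to $\Utility(\cdot, \Action{-\Player})$.
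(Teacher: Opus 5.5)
Your proof is correct and follows essentially the same route as the paper's: a second-order Taylor expansion about $\mu$ with Lagrange remainder, the linear term vanishing because $\E[x-\mu]=0$ (the paper writes $x=\mu+\sigma Z$ and uses $\E[Z]=0$), and the remainder bounded by $\tfrac{M_f}{2}\E[(x-\mu)^2]=\tfrac{M_f\sigma^2}{2}$. Your integrability and measurability remarks fill in details the paper leaves implicit, but one caution applies to your final aside: since a Gaussian has full support, the argument needs $f$ to be defined and $C^2$ with $|f''|\le M_f$ on all of $\mathbb{R}$, so it does not simply carry over to a utility defined only on a bounded action set $\mathcal{A}_i$ without specifying a suitable extension (clipping the action, for instance, generally destroys the $C^2$ property).
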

\begin{proof}
Let $Z \sim \Gaussian(0, 1)$ and $z$ is a single realization of $Z$, then it holds 
    \begin{equation}
        f(\Mean + \Variance z) = f(\Mean) + \Variance z f'(\Mean) + \frac{1}{2} \Variance^2 z^2 f''(\xi_z)
    \end{equation}
where $\xi_z$ is a point between $\Mean$ and $\Mean + \Variance z$. Expectation of this is
\begin{equation}
   \E_{x \sim \Gaussian(\Mean, \Variance^2)}(f(x)) = \E(f(\Mean + \Variance Z) ) =  f(\Mean) + \Variance f'(\Mean) \E(Z) + \frac{1}{2} \Variance^2 \E(Z^2 f''(\xi_Z))
\end{equation}
Because Z is Gaussian with mean 0, then $\E(Z) = 0$. For the second term it holds that
\begin{equation}
    |\frac{1}{2} \Variance^2 \E(Z^2 f''(\xi_Z))| \leq \frac{1}{2} \Variance^2 \E(Z^2 |f''(\xi_Z)|) \leq \frac{1}{2} \Variance^2 M_f \E(Z^2) = \frac{1}{2} \Variance^2 M_f.
\end{equation}

So Putting this together we get

\begin{equation}
    |\E_{x \sim \Gaussian(\Mean, \Variance^2)}[f(x)] - f(\Mean) | \leq | f(\Mean) + \frac{1}{2} \Variance^2 M_f - f(\Mean)| = \frac{1}{2} \Variance^2 M_f
\end{equation}
     
\end{proof}

\begin{restatable}[]{proposition}{GaussianExploitability}
\label{prop:gaussian_expl}
    In a convex-concave game, the exploitability for each strategy $\Strategy{} \in \Strategies{}^{\Variance_{\min}}$ is at most
\begin{equation}
    \Exploitability(\Strategy{}) \leq \max_{\Strategy{1}' \in \Strategies{1}^{\Variance_{\min}}}\ExpectedUtility{}(\Strategy{1}', \Strategy{2}) - \min_{\Strategy{2}' \in \Strategies{2}^{\Variance_{\min}}}\ExpectedUtility{}(\Strategy{1}, \Strategy{2}') + \frac{1}{2}(M_1 + M_2) \Variance^2_{\min}
\end{equation}
\end{restatable}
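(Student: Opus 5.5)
The plan is to split the exploitability into its two best-response terms and bound each one separately by comparing a pure deviation with the Gaussian deviation centred at the same point. Write
\[
\Exploitability(\Strategy{}) = \Big(\max_{\Strategy{1}'}\ExpectedUtility{}(\Strategy{1}', \Strategy{2}) - \max_{\Strategy{1}' \in \Strategies{1}^{\Variance_{\min}}}\ExpectedUtility{}(\Strategy{1}', \Strategy{2})\Big) + \Big(\min_{\Strategy{2}' \in \Strategies{2}^{\Variance_{\min}}}\ExpectedUtility{}(\Strategy{1}, \Strategy{2}') - \min_{\Strategy{2}'}\ExpectedUtility{}(\Strategy{1}, \Strategy{2}')\Big) + \text{(restricted exploitability)}.
\]
It then suffices to show that the first bracket is at most $\frac{M_1}{2}\Variance_{\min}^2$ and the second is at most $\frac{M_2}{2}\Variance_{\min}^2$.

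For the first bracket, fix the opponent's strategy $\Strategy{2}$ and define $g(a_1) = \E_{a_2\sim\Strategy{2}}\Utility(a_1,a_2)$. Because $\ExpectedUtility{}$ is affine in $\Strategy{1}'$, the unrestricted maximum is attained at a pure action $a_1^\star$. Now compare with the Gaussian deviation $\Gaussian(a_1^\star,\Variance_{\min}^2)\in\Strategies{1}^{\Variance_{\min}}$. Its value is $\E_{x\sim\Gaussian(a_1^\star,\Variance_{\min}^2)} g(x)$, which by \cref{lem:gaussian_bias} is at least $g(a_1^\star) - \frac{M_g}{2}\Variance_{\min}^2$.

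To apply that lemma I need $\sup|g''|\le M_1$. Differentiating under the integral (justified by the continuous, bounded second derivative) gives $|g''(a_1)| \le \E_{a_2}|\partial^2_{a_1}\Utility(a_1,a_2)| \le M_1$. Consequently the restricted maximum is at least $g(a_1^\star) - \frac{M_1}{2}\Variance_{\min}^2$, and the first bracket is bounded by $\frac{M_1}{2}\Variance_{\min}^2$. The second bracket is handled symmetrically: take a pure minimizer $a_2^\star$ against $\Strategy{1}$ and the Gaussian centred at it, using $M_2$.

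Adding the two bounds to the restricted exploitability yields the claim. Convex-concavity is not actually needed for this inequality; it only matters later, when the restricted term is shown to vanish at the fixed point $\overline{\Strategy{}}$ of \cref{thm:outer_convergence_}.

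The main obstacle is a technical caveat in the Gaussian comparison, which I would state explicitly. The point $a_1^\star$ may lie on the boundary of the compact set $\Actions{1}$, and a Gaussian has unbounded support. So either the payoff must be defined (with the same second-derivative bound) on all of $\RealNumbers^{N_\Player}$, as the Gaussian parametrization already implicitly assumes, or actions must be clipped or projected into $\Actions{\Player}$. I would adopt the first reading, consistent with $\ExpectedUtility{\Player}^G$ being defined as an expectation over untruncated Gaussians. For multidimensional actions, \cref{lem:gaussian_bias} needs to be applied with $M_\Player$ read as a bound on the Hessian operator norm, together with $\E\|Z\|^2$, which introduces a dimension factor. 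I would restrict to the univariate case matching the Fact above.
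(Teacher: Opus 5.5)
Your proposal is correct and is essentially the paper's proof: take a pure best response $a_1^{\mathrm{BR}}$ against $\pi_2$, compare it with the Gaussian $\mathcal{N}(a_1^{\mathrm{BR}},\sigma_{\min}^2)\in\Pi_1^{\sigma_{\min}}$ via \cref{lem:gaussian_bias}, and argue symmetrically for Player~2. Your three-term decomposition only makes explicit the bookkeeping the paper leaves implicit, and you can get $|E_{x}g(x)-g(a_1^{\mathrm{BR}})|\le M_1\sigma_{\min}^2/2$ more simply by applying \cref{lem:gaussian_bias} to $u(\cdot,a_2)$ for each fixed $a_2$ and integrating against $\pi_2$ (Fubini, as $u$ is bounded), which avoids justifying differentiation under the integral.
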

\begin{proof} Let $\Action{1}^{\text{BR}} \in \argmax_{\Action{1} \in \Actions{1}} \E_{\Action{2} \sim \Strategy{2}} \Utility(\Action{1}, \Action{2})$. We can apply \cref{lem:gaussian_bias} with $M_g = M_1$ to get
    \begin{equation}
        \E_{\Action{2} \sim \Strategy{2}} \Utility(\Action{1}^{\text{BR}}, \Action{2}) \leq \ExpectedUtility{}(\Action{1}^{\text{BR}}, \Strategy{2}) + \frac{M_1 \Variance_{\min}^2}{2} \leq \max_{\Strategy{1}'} \ExpectedUtility{}(\Strategy{1}', \Strategy{2}) + \frac{M_1 \Variance_{\min}^2}{2}
    \end{equation}
Symmetrically we could do the same process for the Player 2, to get the final bound.
\end{proof}

\FinalConvergence*
\begin{proof}
    The proof follows directly from the \cref{thm:outer_convergence_}, which shows that the sequence converges to a solution of $\VI(\GameField, \Strategies{}^{\Variance_{\min}})$ and from \cref{prop:gaussian_expl}, which bounds the exploitability of the strategy profile based on $\Variance_{\min}$.
\end{proof}
\section{Games}
\label{app:games}
In this section, we provide rules for all the games used in the experiments.
\subsection{One-shot games}
\label{app:os_games}
\paragraph{Continuous Matching Pennies} Both players are choosing a real value from the range [-1, 1]. This game contains a pure Nash equilibrium at $\Action{1} = \Action{2} = 0$, but the gradient descent dynamics fail to converge to it.
\begin{equation}
    \Utility_{1}(\Action{1}, \Action{2}) = \Action{1} \Action{2}
\end{equation}
\paragraph{Rotational game} Both players are choosing a real vector from $[-1, 1]^2$ or $[-1, 1]^3$. Similarly to continuous matching pennies, this game contains a pure Nash equilibrium for $\textbf{a}_{1} = \textbf{a}_{2} = \textbf{0}$ to which the gradient descent does not converge. This game shows that the result of \Cref{thm:final_convergence} can hold even beyond convex-concave games. Also it shows that even if the theorem is stated for single-dimension Gaussian, it holds even in multi-dimensional case.
\begin{equation}
    \Utility_{1}(\textbf{a}_{1}, \textbf{a}_{2}) = 20 (\textbf{a}_{1} + 0.3 \textbf{a}_{1}^3)^\top \textbf{A} (\textbf{a}_{2}+ 0.3 \textbf{a}_{2}^3) - \frac{1}{16} (\|\textbf{a}_{1}\|^4 - \|\textbf{a}_{2}\|^4)
\end{equation}
where $\textbf{A} = \begin{pmatrix}
    0 & 1 \\
    -1 & 0
\end{pmatrix}$ for 2D case and $\textbf{A} = \begin{pmatrix}
    0 & 1 & 0 \\
    -1 & 0 & -1 \\
    0 & 1 & 0
\end{pmatrix}$
% All pay Auction
% \begin{equation}
%    \Utility_{\Player}(x, y) = 0.5 \cdot \text{sign}(x - y) - x + y
% \end{equation}
% This game has a Nash equilibrium as a uniform distribution in the interval [0, 1]
\paragraph{Circle game} Both players are choosing a real value from the range [0, 1]. The game is parametrized by the number of harmonics; we have used $K=3$. For any $K$, this game has a Nash equilibrium of size $K+1$, with uniformly spread points on the interval.
\begin{equation}
    \Utility_{1}(\Action{1}, \Action{2}) = \sum_{k = 1}^K 2^{-(k - 1)} \sin(2\pi k (\Action{1} - \Action{2}))
\end{equation}

\paragraph{Glicksberg-Gross game} Both players are choosing a real value from the range [0, 1]. It contains a Nash equilbrium that is $\frac{4}{\pi} \cdot \arctan(\sqrt{t})$, for $t \in [0, 1]$. Again, the mixture of Gaussians cannot converge to this equilibrium with a finite number of Gaussians. Along with the circle game, these demonstrate how the mixture of Gaussians performs in games outside the scope of the algorithm.
\begin{equation}
    \Utility_{\Player}(\Action{1}, \Action{2}) = \frac{(1 + \Action{1})(1 + \Action{2})(1 - \Action{1}\Action{2})}{(1+\Action{1}\Action{2})^2}
\end{equation}
\paragraph{Two-point game} Both players are choosing a real value from the range [-2, 2]. This game has a Nash equilibrium in which both players mix between 30\% playing -1 and 70\% playing 1. This is the best-case scenario for the mixture of Gaussians, as it can approximate these 2 pure strategies very well.
\begin{equation}
\begin{split}
    \Utility_{\Player}(\Action{1}, \Action{2}) &= \exp\Big(-\frac{(\Action{1} + 1)^2}{0.02}\Big) + \exp\Big(-\frac{(\Action{1} - 1)^2}{0.02}\Big)\\
    &- \exp\Big(-\frac{(\Action{2} + 1)^2}{0.02}\Big) - \exp\Big(-\frac{(\Action{2} - 1)^2}{0.02}\Big)\\
    &+ (\Action{1} - 0.4)(\Action{2} - 0.4)
\end{split}
\end{equation}

\subsection{Sequential Games}
\paragraph{Continuous Kuhn Poker} The game is played with the deck of 3 cards, Jack (J), Queen (Q), and King (K). The game starts with each player receiving 1 card and paying an ante of 1. Then player 1 either checks or bets between 0.25 and 2. If player 1 checks, the other player can either check or bet again. After the bet, the other player can either fold and lose their ante or call the bet. If neither player folds, both players reveal their cards, and the player with the higher value wins. The game uses the following order of cards: K $>$ Q $>$ J.

\paragraph{Continuous Leduc Hold'em} The game is played with the deck of 6 cards, which contains 2 Jacks (J), 2 Queens (Q), and 2 Kings (K). The game starts with each player receiving 1 card and placing an ante of 1. The first player then either checks or bets between 0.25 and 2. After that, the players alternate by either folding, which results in a loss, calling the opponent's bet, or raising between 0.25 and 2. The game consists of 2 rounds, and in each, there can be at most 2 raises. The round ends when one of the players calls the other one, and between the first and second rounds, the public card is revealed. If neither player folds, then at the end of round 2, both show their hands, and either the player who matches the public card wins or the player with the higher-value card, with K $>$ Q $>$ J.

\paragraph{Sequential Blotto N} A sequential game over $N$ turns, where players are assigning resources simultaneously to the same base. The player who uses more resources "wins" that base. At the end of the turn, only the players who won the base are revealed, not how many resources they used. The player who wins more bases in the $N$ turns wins the game.

\paragraph{Heads-up No-limit Texas Hold'em} The game is played with the standard 52-card deck. Each player starts with 400 chips and receives 2 cards at the beginning of the game. The first player (small blind) makes an ante of 1 chip, while the second player (big blind) makes an ante of 2 chips. The game is played in 4 rounds. After the first round, 3 public cards are revealed; after the second round, 1 public card is revealed; and after the third round, the last public card is revealed. In the round, players alternate in betting until one of them calls. Either player can bet all their chips at any time, forcing the opponent to call or fold. At the end of the game, the players reveal their private cards, and the player with the highest combined value of their private and public cards wins. Even if Texas Hold'em has discrete bets, we treat them as continuous. So the algorithm samples a real-valued bet, and we round it to the nearest integer, which is then used in the game.

\section{Experimental details}
\label{app:exp_details}

In this section, we provide the hyperparameters for all algorithms across all experiments, along with the computational resources used for each experiment.% We also provide details about the specific implementations of the algorithms we used.
\subsection{Proximal policy optimization}
We use Proximal Policy Optimization (PPO) as one of the baselines, as an algorithm to compute the best response for evaluation, and as an oracle in PSRO and NFSP. We use Gaussian reparametrization for the PPO, where we train just a single Gaussian head through the following loss function
\begin{equation}
        \begin{split}
        \PPOSurrogate_{\text{PPO}} &= \min\Big( \frac{\Strategy{\Neural}(\Action{k}^t | \State^t, \Gaussian_{k}^t)}{\Strategy{\text{old}}(\Action{k}^t | \State^t, \Gaussian_{k}^t)}\Advantage (\State^t, \Action{k}^t), \text{clip}\big( \frac{\Strategy{\Neural}(\Action{k}^t | \State^t, \Gaussian_{k}^t)}{\Strategy{\text{old}}(\Action{k}^t | \State^t, \Gaussian_{k}^t)}, 1 - \epsilon_{{\text{PPO}}}, 1+\epsilon_{{\text{PPO}}}\big)\Advantage (\State^t, \Action{k}^t) \Big)\\ 
        \Loss_{\text{PPO}} &= -\E_t\PPOSurrogate_\Gaussian - \Entropy(\Gaussian_{k}^t) 
        \end{split}
\end{equation}
To train the critic, we use Generalized Advantage Estimate (GAE) \citep{schulman2015gae}. Due to the similarity of PPO to the mixture of Gaussians, we include the hyperparameters in \Cref{tab:ppo_hyper}
\subsection{Magnetic Mirror Descent}
Magnetic Mirror Descent is a modification to discrete PPO that introduces the KL divergence between the current strategy and the magnet strategy into the loss. In our mixture-of-Gaussians reparametrization, we use the MMD loss to train the categorical distribution $\Loss_\CategoricalStrategy$. As a baseline, we use MMD also as a solver for the discretized game. The discrete bins we use are uniformly spread through the action space. All the MMD hyperparameters are included in \Cref{tab:ppo_hyper}
\begin{table}[]
    \centering
    \begin{tabular}{l|c|c|c|c|c|c|c}
         & \multicolumn{3}{c|}{\textbf{One-shot games}} & \multicolumn{4}{c}{\textbf{Sequential games}} \\
         & \textbf{PPO} &  \textbf{MMD} & \textbf{MMPO} & \textbf{PPO} &  \textbf{MMD} & \textbf{MMPO} & \textbf{BR-PPO} \\ \hline
         PPO epochs & 2 & 2 & 2 & 1 & 1 & 1 & 1 \\
         Entropy weight & 0.05 & 0.05 & 0.05 & 0.05 & 0.02 & 0.02 & 0.01 \\
         Magnet weight $\RegularizationStrength$ & - & 0.2 & 0.2  & - & 0.2 & 0.2 & -  \\
         Magnet update & - & 500 & 500 & - & 1000 & 1000  & - \\
         Minimal std $\Variance_{\text{min}}$ & - & - & 0.001 & - & - & 0.1 & 0.001\\
         Components $K$ & 1 & - & * & 1 & - & 4 & 1\\
         Exploration $\epsilon$ & 0.0 & 0.0  & 0.0& 0.2 & 0.2  & 0.2 & 0.0  \\
         Value loss weight & 0.5 & 0.5 & 0.5 & 0.5 & 0.5 & 0.5 & 0.5 \\
         GAE lambda $\lambda$ & - & - & -  & 0.95 & 0.95 & 0.95 & 0.95 \\
         V-trace clipping $\overline{\rho}$ & - & - & -  & 2 & 2 & 2 & 2 \\
         V-trace clipping $\overline{c}$ & - & - & - & 1 & 1 & 1 & 1   \\
         Hidden sizes & (64,64)& (64,64)& (64,64)& (64,64)& (64,64)& (64,64) & (64,64) \\
         Activation & GELU & GELU & GELU & GELU & GELU & GELU & GELU\\
         Normalization & RMS & RMS & RMS & RMS & RMS & RMS & RMS\\
         Optimizer & Adam & Adam & Adam & Adam & Adam & Adam & Adam  \\
         Learning Rate $\LearningRate$ & $10^{-3}$ & $10^{-3}$ & $10^{-3}$ & $10^{-3}$ & $10^{-3}$ & $10^{-3}$  & $3\cdot 10^{-3}$ \\
         Max grad norm & 0.5 & 0.5 & 0.5 & 0.5 & 0.5 & 0.5 & 0.5 \\
         Target update $\tau$  & $10^{-3}$ & $10^{-3}$ & $10^{-3}$ & $10^{-3}$ & $10^{-3}$ & $10^{-3}$ & $10^{-3}$ \\
         Batch size & 256 & 256 & 256  & 512 & 512 & 512 & 512\\
    \end{tabular}
    \caption{Hyperparameters used in experiments for different policy-gradient algorithms and also for computing the approximate exploitability. The amount of components in one-shot games depend on game, in matching pennies it is 1, and in other games it is 4.}
    \label{tab:ppo_hyper}
\end{table}
\subsection{Neural fictitious self-play}
Neural fictitious self-play stores two strategy networks at each time, a BR strategy and an average strategy. The algorithm works in two alternating phases: in the first, it trains the best response against every strategy; then, in the second, it adds some trajectories sampled by the BR into the replay buffer, which is then used to train the average strategy. Compared to the vanilla NFSP, we use PPO to compute the best response and not DQN. We provide hyperparameters in \Cref{tab:br_hyper}
\subsection{Policy space response oracles}
The training of Policy space response oracles is also split into two phases. In the first phase, both players train best response against the meta strategy. Then this BR is added to each player's portfolio, and the expected utility of these strategies against all other strategies in the portfolio is computed. Then the resulting matrix game is solved, and the solution is used as the new meta-strategy, to which both players compute their best responses. Unlike the NFSP, the number of strategies PSRO stores grows with the training. The hyperparameters of PSRO are in \Cref{tab:br_hyper}

\begin{table}[]
    \centering
    \begin{tabular}{l|c|c|c|c}
         & \textbf{NFSP} &  \textbf{PSRO} &  \textbf{NFSP} &  \textbf{PSRO}  \\ \hline
         BR train steps & $10^3$ & $10^3$ & $10^4$ & $10^4$  \\
         BR batch size & 64 & 64 & 64 & 64 \\
         PSRO payoff samples & - & 256 & - & 20000 \\
         NFSP average steps & 400 & - & 1000 & - \\
         Reservoir capacity & $10^6$ & - &  $10^6$ & -\\
         NFSP reservoir batch & 256 & - & 256 & -\\
         NFSP anticipatory $\eta$ & 0.1 & - & 0.1 & - \\
    \end{tabular}
    \caption{Hyperparameters used in experiments for different best-response based algorithms. Training of the best response uses PPO hyperparameters from \Cref{tab:ppo_hyper}, besides a different batch size.}
    \label{tab:br_hyper}
\end{table}
\subsection{Randomized policy networks}
Unlike other techniques, randomized policy networks do not explicitly produce the strategy. Instead, they store it implicitly in the network's weights. The network takes random noise at the input and outputs the sampled action. This makes RPN similar to diffusion models, which perform a single step. Unlike diffusion models, which use supervised training, the randomized policy networks are used in reinforcement learning. Since the strategy is not parametrized in any way, the network can learn an arbitrary probability distribution given its architecture. We provide all hyperparameters in \Cref{tab:rpn_hyper}

\begin{table}[]
    \centering
    \begin{tabular}{l|c|c|c|c}
        Optimizer & Optimistic GD\\
        Optimism & 0.333 \\
        Learning rate & $10^{-3}$\\
        Max grad norm & 5 \\
        Batch size & 256 \\
        Noise dimension & 16 \\
        Activation & Mish \\
        Normalization & RMS\\
        Extragradient step & 0.01\\
    \end{tabular}
    \caption{Randomized policy networks hyperparameters used in one-shot games }
    \label{tab:rpn_hyper}
\end{table}

\subsection{Local best response}
Local best response (LBR) is a Poker-specific technique that estimates a lower bound on the exploitability of the strategy $\Strategy{}$. Let us describe it from the perspective of evaluating Player 1's strategy, so the local best response is Player 2. When Player 2 is about to make a decision, it computes the exact probabilities for each hand of Player 1. The local best response uses action abstraction; we have used fold, call, pot, all-in abstraction. For each action, local best-response estimates its value by assuming both players would only check from this point to the end, and then pick the best one. LBR is more effective if it lets the opponent make more mistakes and therefore, we use the setting where it always checks and calls in the first two betting rounds of the game and only estimates action values and plays the best of them in turn and river.

\subsection{Computational resources}
All of our experiments used Python and JAX as a machine learning framework \citep{jax2018}. Every experiment except Heads-up no limit Texas hold'em used a single CPU core of AMD EPYC 7543 with 16 GB of RAM and 16Gb of memory. In \Cref{fig:single_ppo,fig:one_shot,fig:seq,fig:single_sampled,fig:one_shot_capacity,fig:seq_capacity} we report means and 95\% confidence intervals computed from 5 different random seeds. Experiment in \Cref{fig:single_ideal} do not use an any randomness.
\paragraph{Convergence of the regularized Gaussian Head} All of the experiments in this section took less than 4 hours using when ran sequentially.
\paragraph{One-shot games} Single training run for each game took less than 10 minutes. We have used 5 baselines, 5 seeds, 4 games. Altogether this makes the experiment run 16 hours.
\paragraph{Sequential games} Single training run for each game took less than 4 hours in \cref{fig:seq}. However, the approximate exploitability computation took up to 12 hours. In total for each of the 4 baselines, 5 seeds and 4 games, the experiment took $4\cdot4\cdot5 + 3\cdot(4+12)\cdot5\cdot4=1040$ hours.
\paragraph{Capacity experiments} For one-shot games we have used the same limit, for sequential games, we have used longer time-frame for training, when number of components and bins was 16 or greater. Similarly for those runs the best response took up to 24 hours. In total this experiment took $64+11\cdot5\cdot4 + 7\cdot5\cdot3\cdot(4+12) + 4\cdot5\cdot3\cdot(12+24) = 4124$ hours.
\paragraph{Heads-up no limit Texas hold'em} This is the only experiment where we have used GPUs, and we have trained the strategy for 4 days on H200. Then for evaluation, each checkpoint comparison against Slumbot took roughly 120 CPU-hours, while the LBR computation took roughly 12 CPU hours per checkpoint. Altogether this gives $8976$ CPU-hours for the final evalutaion.
\section{Additional experiments}
\label{app:add}
We have run two additional experiments to evaluate different aspects of the \AlgorithmShort{}. First, we show that the \AlgorithmShort{} converges in the same games as in \cref{sec:exp_gaussian} when the neural network does not parametrize the strategy but the gradients are estimated from the samples. Second, we show how the strategy's exploitability varies with the number of Gaussians in the model. A different perspective on \AlgorithmShort{} is that it trains the abstraction during training. We also compare \AlgorithmShort{} with uniform discretization of the action space using the same number of components.
\subsection{Regularized gaussian head}
\label{app:add_gradient}
We have run the same experiment as in \Cref{sec:exp_gaussian}, but now we do not store the strategy in the neural network, but we use samples to estimate the gradient. This serves as a middle step between the \Cref{fig:single_ideal} and \Cref{fig:single_ppo}. We have used a learning rate of 0.05 and a batch size of 256. The results of this experiment are in \Cref{fig:single_sampled}. Even these results confirm that without a magnet, the gradient descent diverges to the most exploitable strategies. With the magnet, the strategies do not converge arbitrarily close to the Nash equilibrium because convergence is limited by noise in gradient estimation. Still, in all games, the exploitability is roughly $10^{-2}$, which is several orders of magnitude better than without a magnet.

We also show how the strategies in continuous matching pennies evolve during training across the three training regimes in \Cref{fig:conv_mp}. It also confirms that the strategies gradually diverge without the magnet regularization.

\begin{figure*}[!h]
    \centering    
    \begin{subfigure}[b]{0.32\textwidth}
        \centering
    \includegraphics[width=0.99\linewidth]{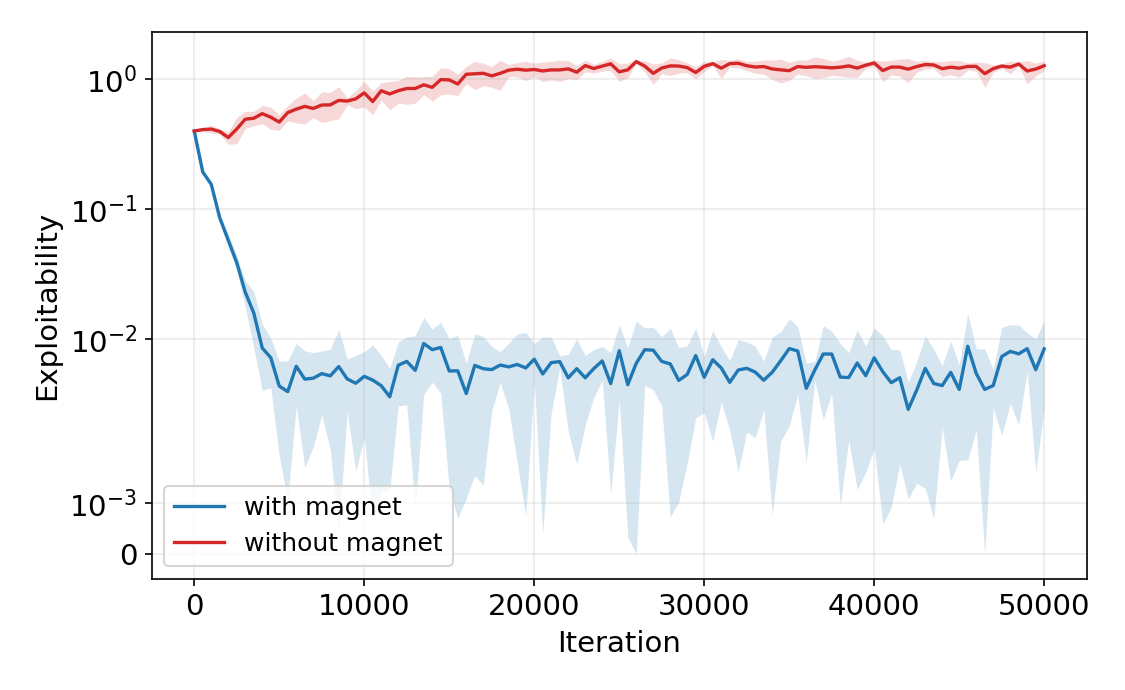}
        \caption{Continuous matching pennies}
        \label{fig:single_sampled_mp}
    \end{subfigure}
    \hfill % Adds horizontal space between subfigures
    \begin{subfigure}[b]{0.32\textwidth}
        \centering
    \includegraphics[width=0.99\linewidth]{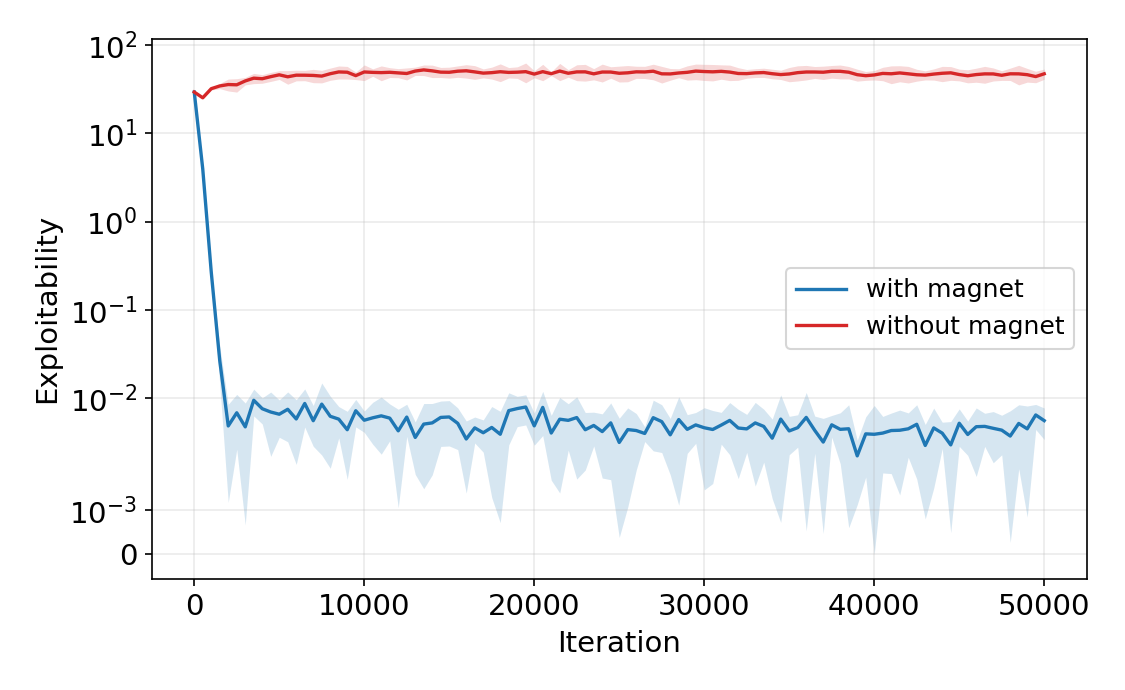}
        \caption{Rotational game, dim=2}
        \label{fig:single_sampled_rot2}
    \end{subfigure} 
    \hfill % Adds horizontal space between subfigures
    \begin{subfigure}[b]{0.32\textwidth}
        \centering
    \includegraphics[width=0.99\linewidth]{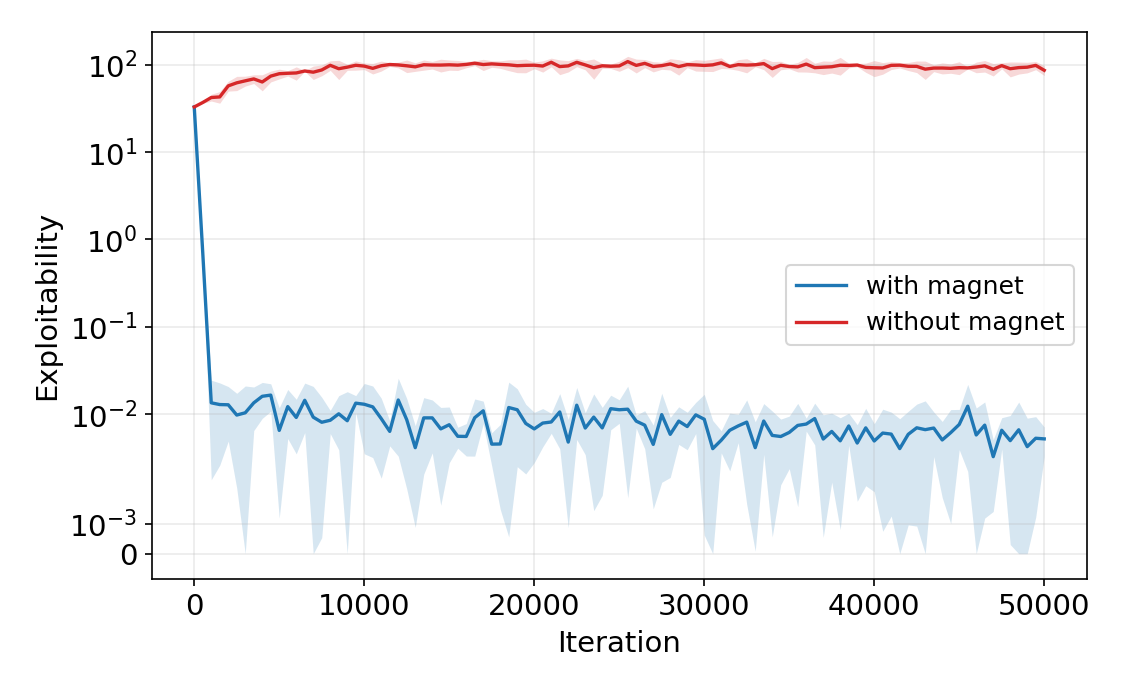}
        \caption{Rotational game, dim=3}
        \label{fig:single_sampled_rot3}
    \end{subfigure} 
    \caption{Exploitability in one-shot games with pure Nash equilibria using gradient descent with or without magnet, using only the gradient estimate.}
    \label{fig:single_sampled} 
\end{figure*}  

\begin{figure*}[!h]
    \centering    
    \begin{subfigure}[b]{0.32\textwidth}
        \centering
    \includegraphics[width=0.99\linewidth]{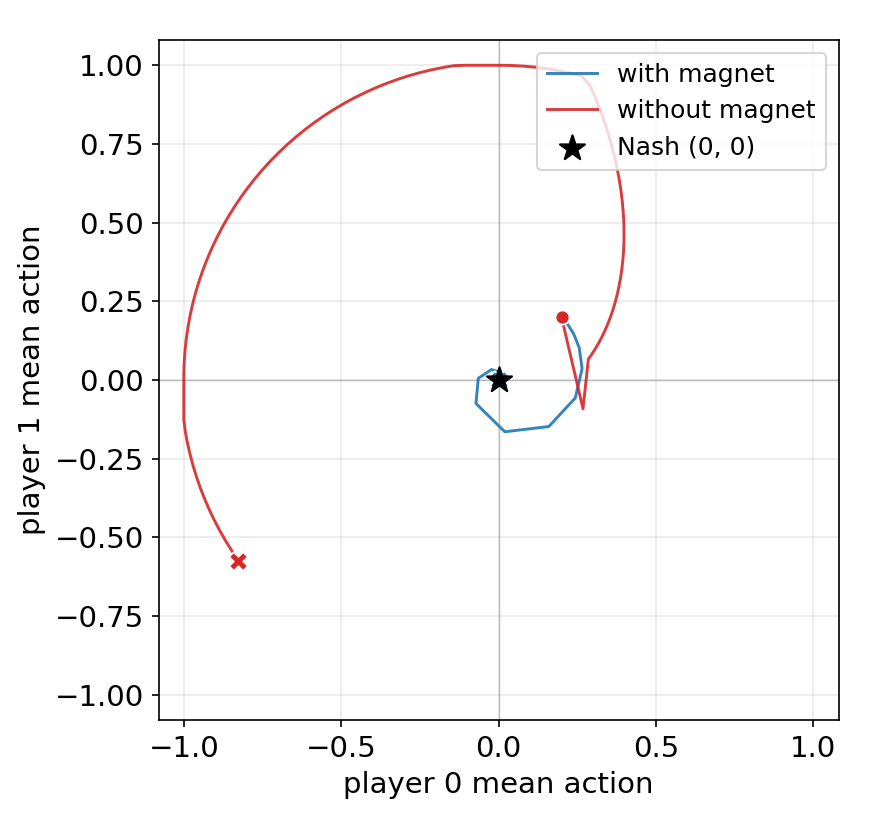}
        \caption{Exact gradient}
        \label{fig:conv_mp_ideal}
    \end{subfigure}
    \hfill % Adds horizontal space between subfigures
    \begin{subfigure}[b]{0.32\textwidth}
        \centering
    \includegraphics[width=0.99\linewidth]{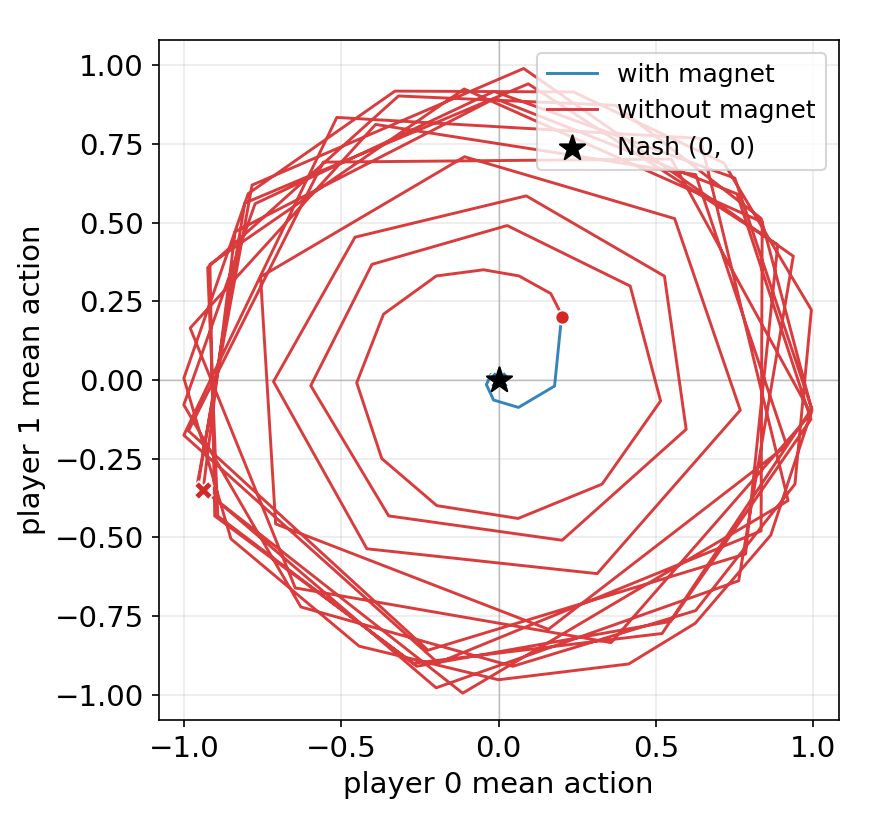}
        \caption{Estimated gradient}
        \label{fig:conv_mp_sampled}
    \end{subfigure} 
    \hfill % Adds horizontal space between subfigures
    \begin{subfigure}[b]{0.32\textwidth}
        \centering
    \includegraphics[width=0.99\linewidth]{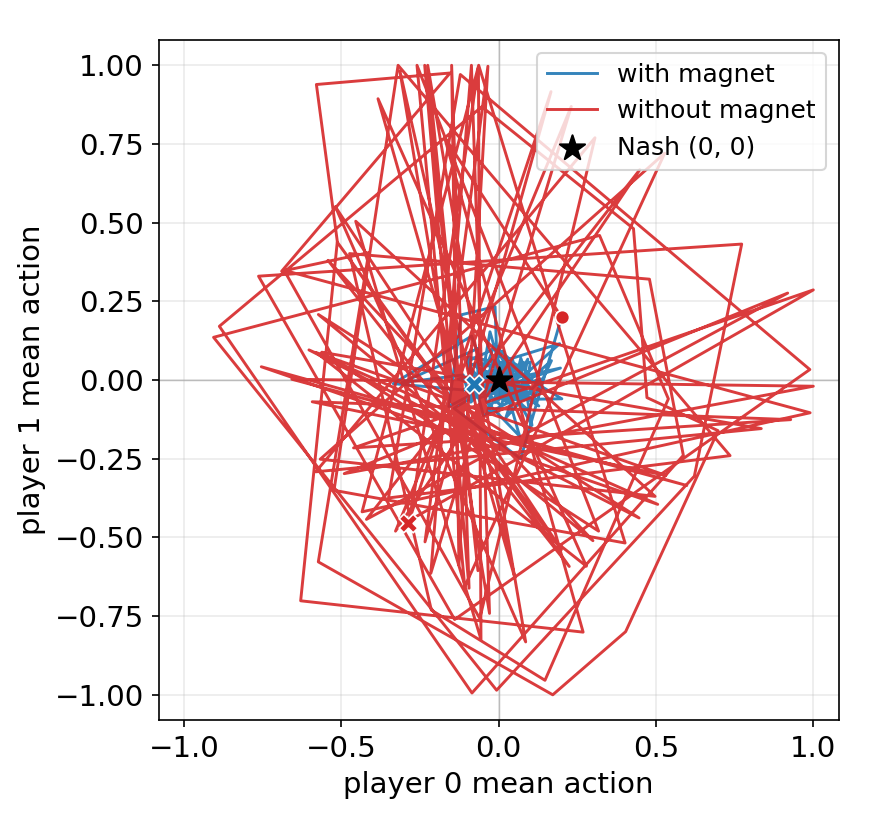}
        \caption{Estimated gradient, neural}
        \label{fig:conv_mp_ppo}
    \end{subfigure} 
    \caption{The evolution of a strategy in continuous matching pennies during training for each of the experimental setting}
    \label{fig:conv_mp} 
\end{figure*}  

\subsection{Wall-time comparison}
\label{app:walltime}
In \Cref{fig:one_shot,fig:seq} we compared the algorithms in terms of environment interactions. Here, we also provide the comparison of the same runs, but when using wall-time for comparison. The results are in \Cref{fig:one_shot_combined,fig:seq_combined}. Both the environment interactions and wall-time agree with the relative performance of the algorithms, which show that \AlgorithmShort{} is as fast as PPO, but converges much quicker than NFSP and PSRO.

\begin{figure*}[!h]
    \centering    
    \begin{subfigure}[b]{0.49\textwidth}
        \centering
    \includegraphics[width=0.99\linewidth]{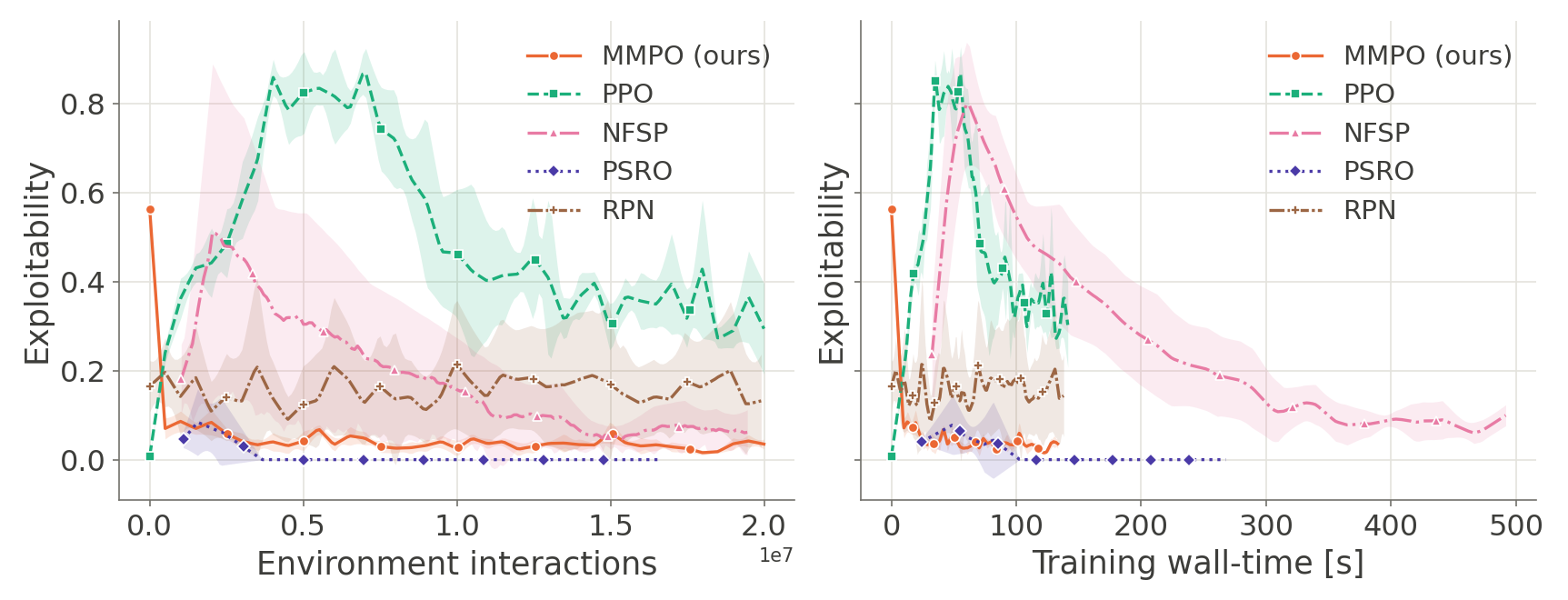}
        \caption{Continuous Matching Pennies}
        \label{fig:one_shot_combined_mmd}
    \end{subfigure}
    \hfill % Adds horizontal space between subfigures
    \begin{subfigure}[b]{0.49\textwidth}
        \centering
    \includegraphics[width=0.99\linewidth]{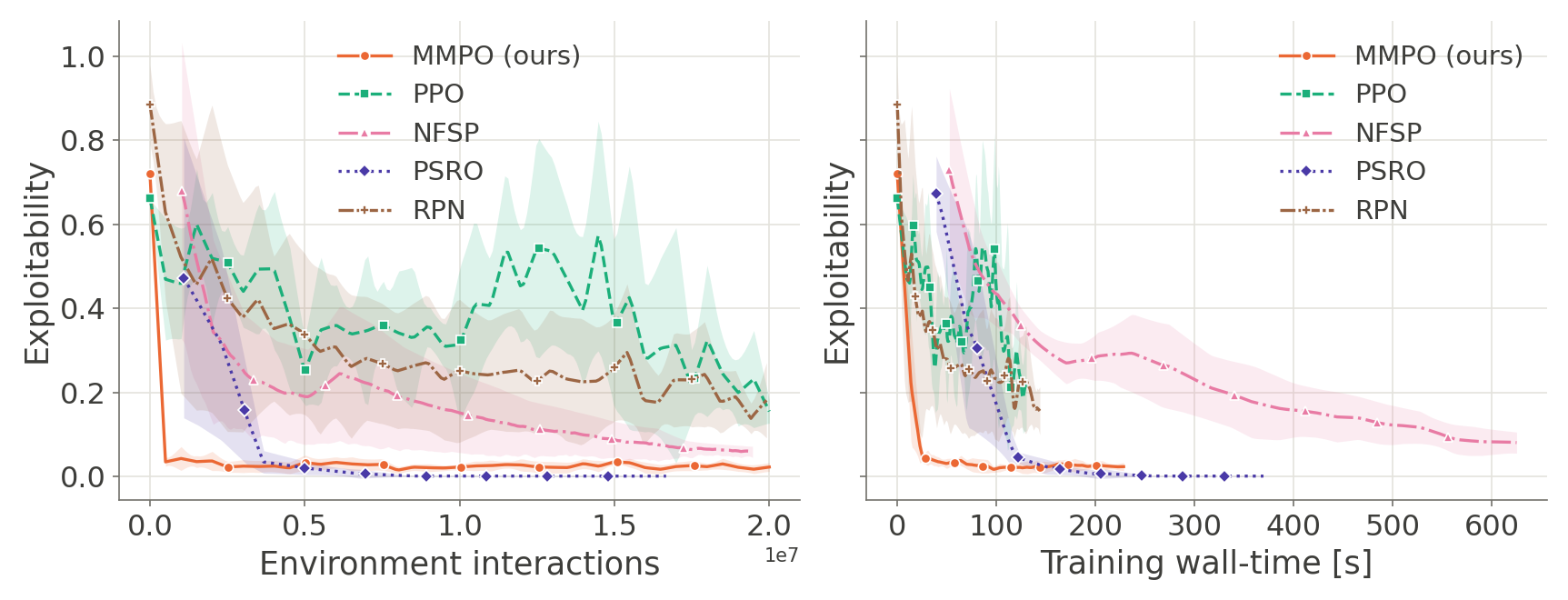}
        \caption{Glicksberg-Gross game}
        \label{fig:one_shot_combined_glicks}
    \end{subfigure}  
    \begin{subfigure}[b]{0.49\textwidth}
        \centering
    \includegraphics[width=0.99\linewidth]{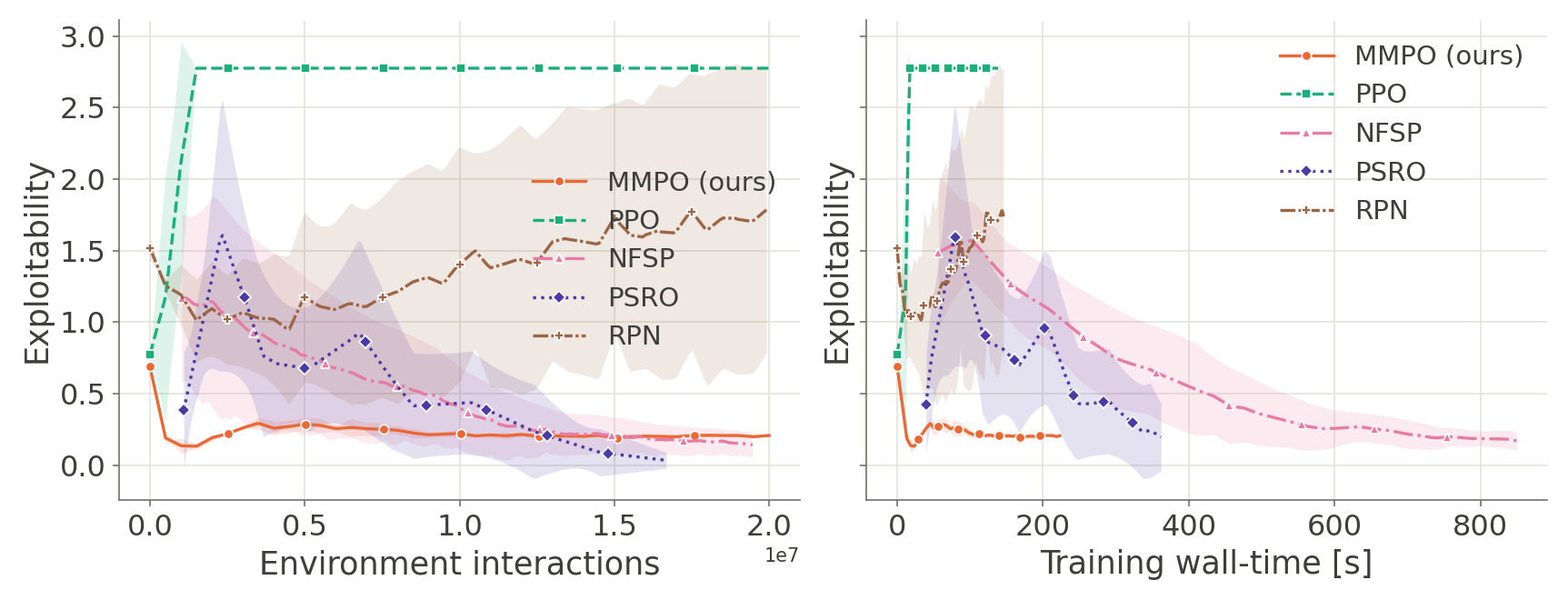}
        \caption{Circle game}
        \label{fig:one_shot_combined_circle}
    \end{subfigure}
    \hfill % Adds horizontal space between subfigures
    \begin{subfigure}[b]{0.49\textwidth}
        \centering
    \includegraphics[width=0.99\linewidth]{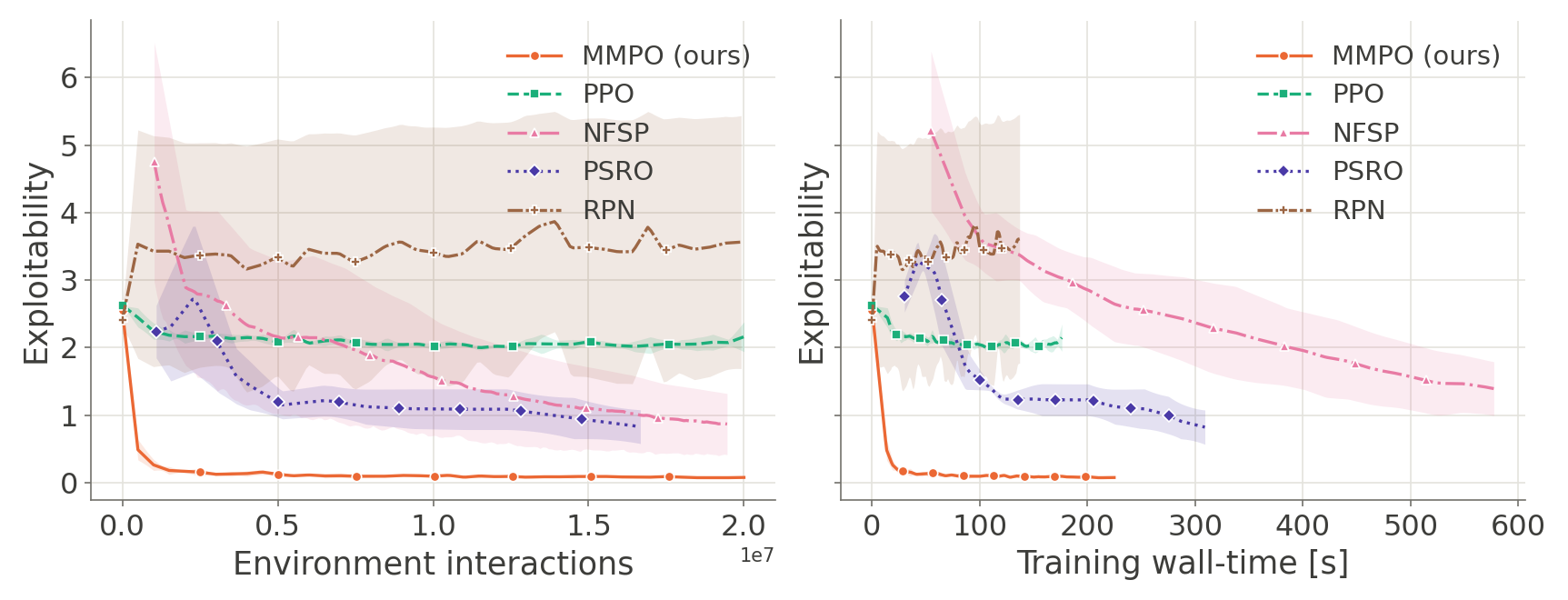}
        \caption{Two-point game}
        \label{fig:one_shot_combined_two}
    \end{subfigure}  
    \caption{Exploitability with 95\% confidence intervals of different baselines in one-shot games}
    \label{fig:one_shot_combined} 
\end{figure*}  

\begin{figure*}[!h]
    \centering    
    \begin{subfigure}[b]{0.49\textwidth}
        \centering
    \includegraphics[width=0.99\linewidth]{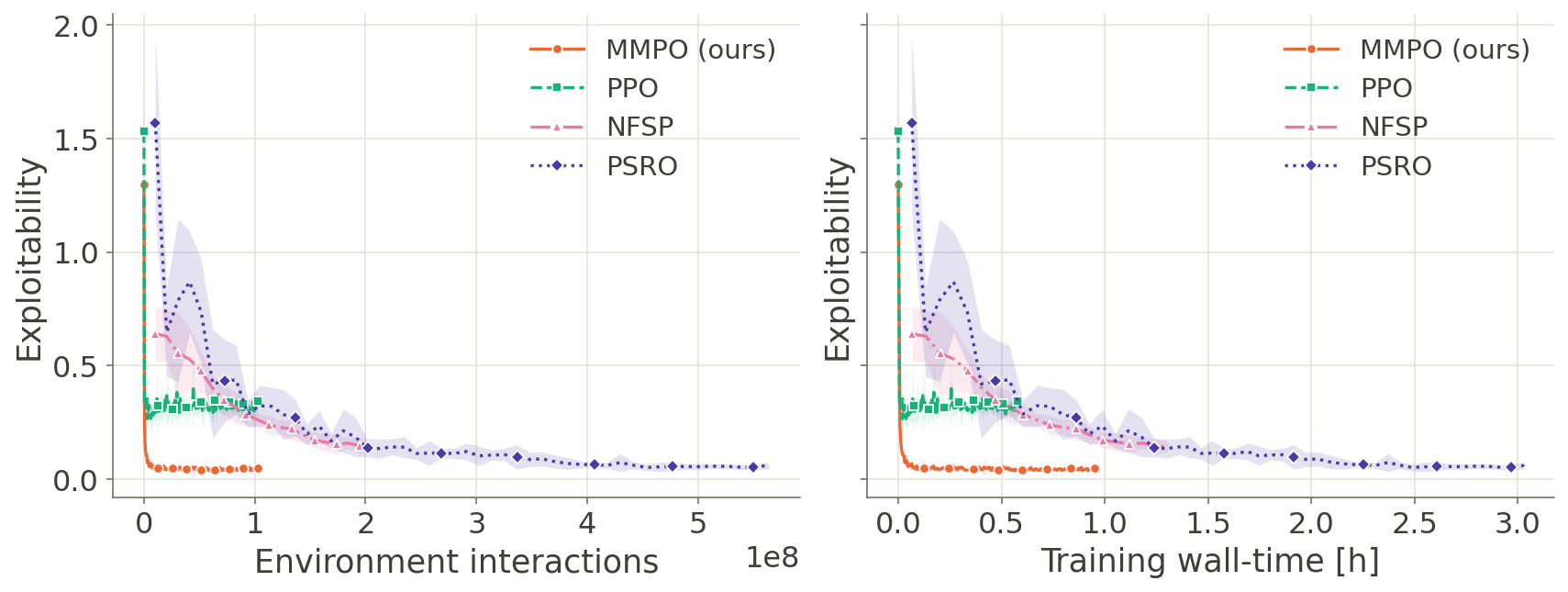}
        \caption{Kuhn Poker}
        \label{fig:seq_combined_kuhn}
    \end{subfigure}
    \hfill % Adds horizontal space between subfigures
    \begin{subfigure}[b]{0.49\textwidth}
        \centering
    \includegraphics[width=0.99\linewidth]{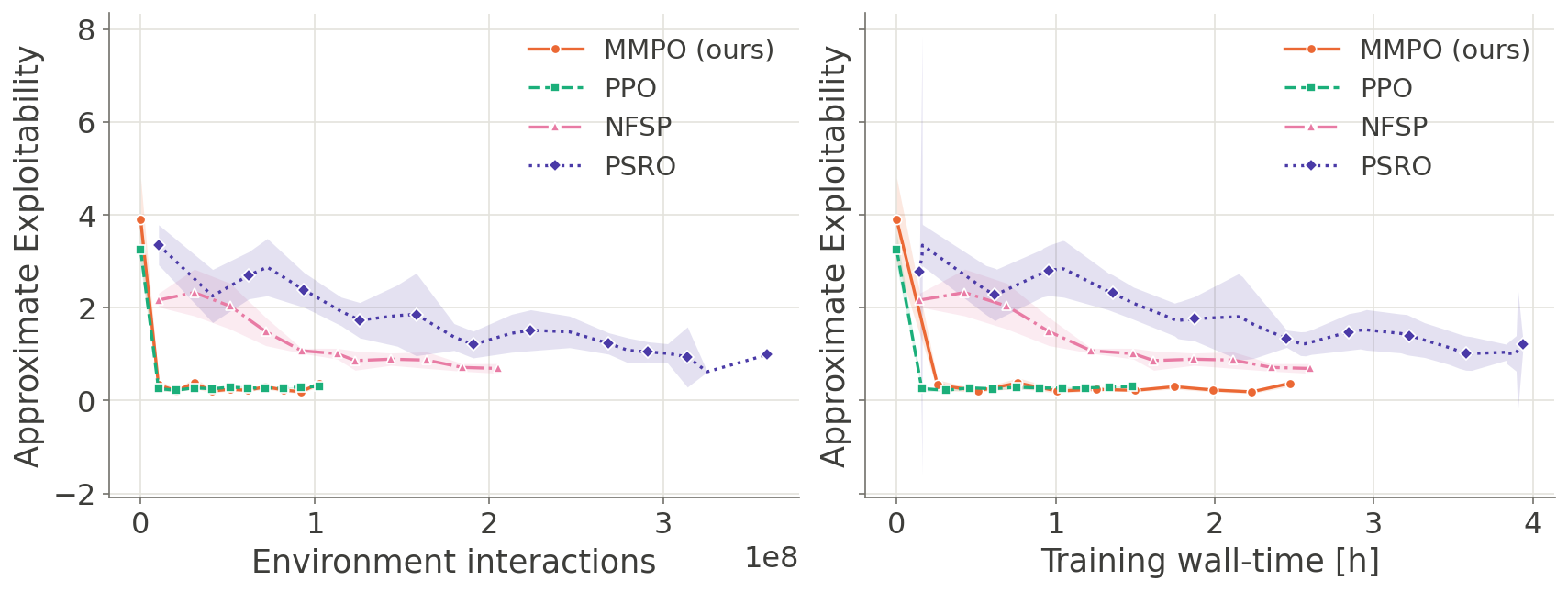}
        \caption{Leduc Poker}
        \label{fig:seq_combined_leduc}
    \end{subfigure}  
    \begin{subfigure}[b]{0.49\textwidth}
        \centering
    \includegraphics[width=0.99\linewidth]{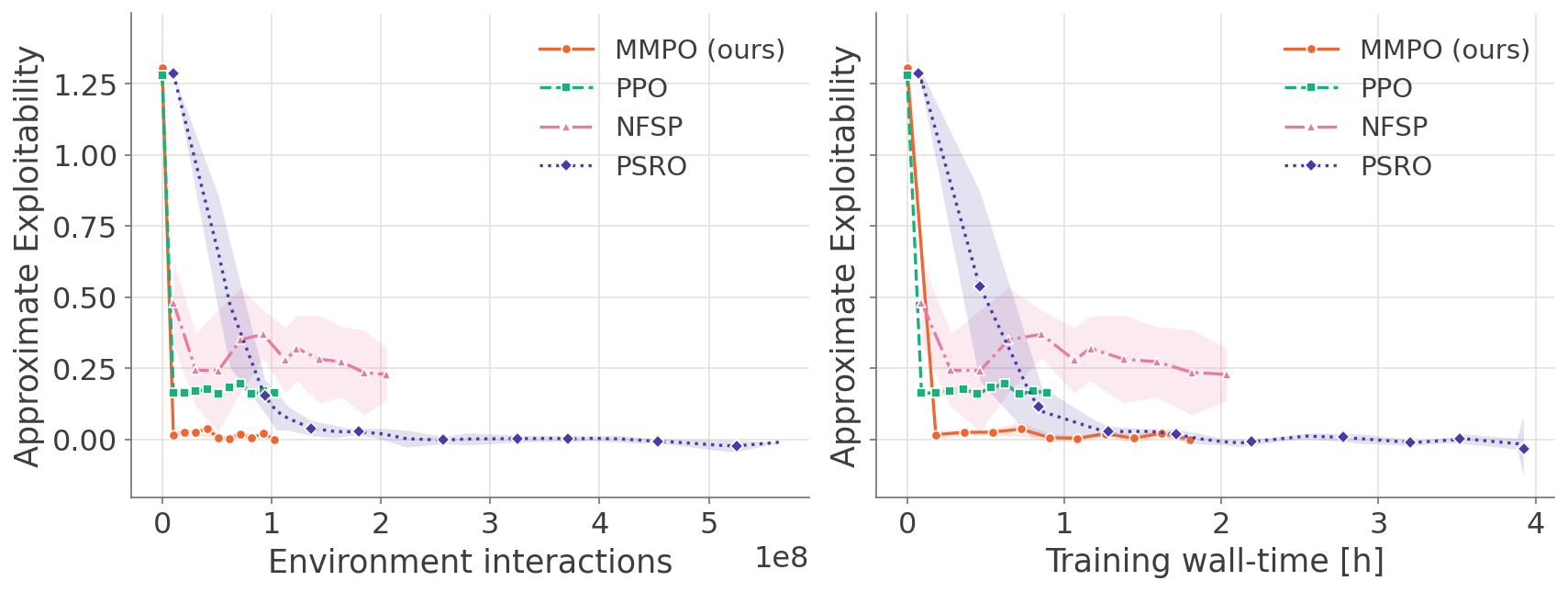}
        \caption{Sequential Blotto 3}
        \label{fig:seq_combined_blotto3}
    \end{subfigure}
    \hfill % Adds horizontal space between subfigures
    \begin{subfigure}[b]{0.49\textwidth}
        \centering
    \includegraphics[width=0.99\linewidth]{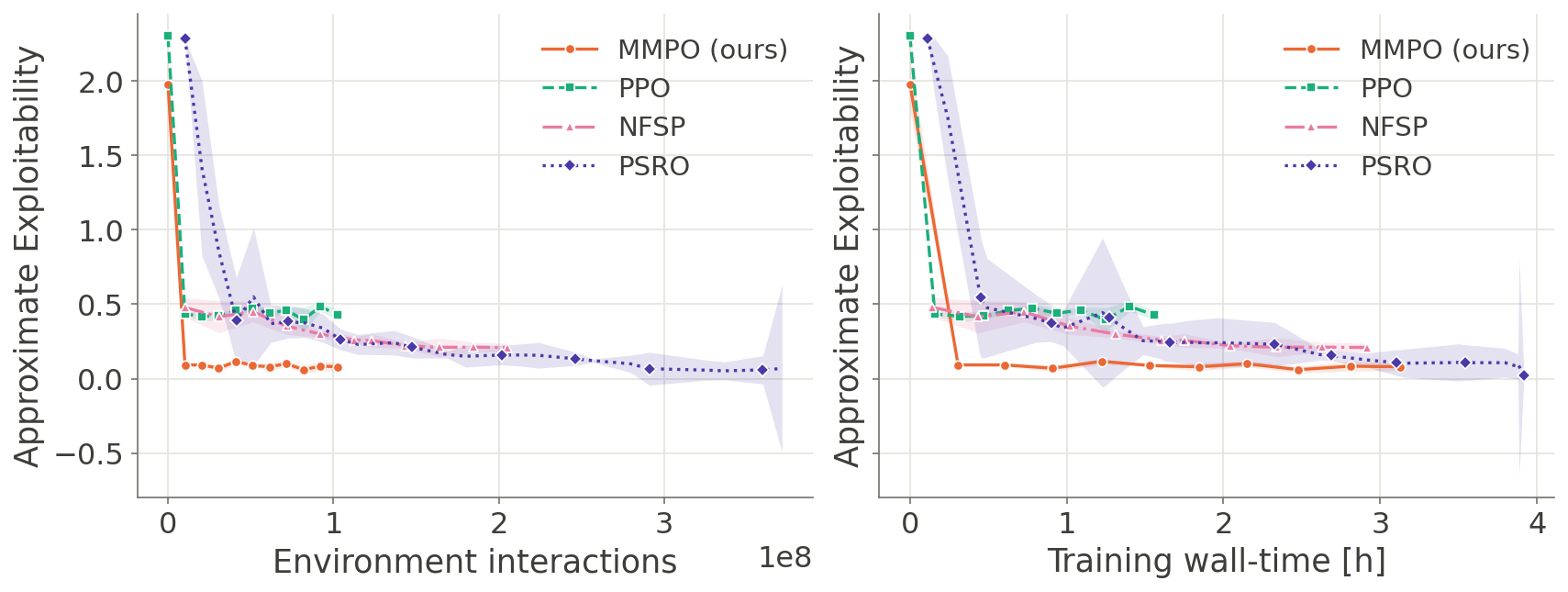}
        \caption{Sequential Blotto 5}
        \label{fig:seq_combined_blotto5}
    \end{subfigure}
    \caption{Approximate exploitability with 95\% confidence intervals of different baselines in sequential games}
    \label{fig:seq_combined} 
\end{figure*}  

\subsection{Effect of the discretization and amount of components in the mixture}
\label{app:add_components}
The strategy space of \AlgorithmShort{} is limited by the number of Gaussians used in the mixture. We have run another experiment that shows how the final exploitability changes when we use different numbers of components. We also compared it with the alternative approach that uniformly discretizes the action space and then uses MMD in the discretized game. We have used the same games as in \cref{sec:exp_one_shot,sec:exp_sequential}.

The results for one-shot games are in \Cref{fig:one_shot_capacity}. The discretization works well in matching pennies, because even if the game contains a pure Nash equilibrium, any strategy that has a mean of 0 is also a Nash equilibrium, so a uniform discretization that is symmetrical around 0 can always find a Nash equilibrium. In contrast, the noise from the Gaussian sampling in \AlgorithmShort{} limits how close to equilibrium the \AlgorithmShort{} can converge, regardless of the number of components.

In the Glicksberg-Gross game, the discretized version performs the best with 6 bins, and adding more bins degrades the performance. Similar behavior can be observed in some sequential games in \Cref{fig:seq_capacity}. This happens because of entropy regularization, which slightly skews the distribution toward a uniform distribution. This highlights another advantage of the \AlgorithmShort{}, because even if the categorical distribution is skewed toward playing all actions with some nonzero probability, the Gaussian can shift to a place where this probability is better utilized. This means that \AlgorithmShort{} is less sensitive to the entropy setting.

A two-point game presents the worst-case scenario for the discretization: when the two pure actions of the equilibrium are not in the discretization, the discretized version will never converge arbitrarily close, as we can see when we increase the number of discrete bins. On the other hand, the \AlgorithmShort{} can reliably converge with three Gaussian components. Two components should be sufficient for convergence, but \AlgorithmShort{} is sensitive to their initialization. Specifically, in this game where the two pure actions are -1 and 1, if all components are initialized greater than 1, then only very strong exploration would be able to move one of the components to -1. This is not specific to our method, but to any method that uses gradient descent to move pure actions. In practice, it may be possible to detect these collapsed actions, reinitialize the $\Mean$ and $\Variance$ for one of them, and resume training, but we decided to avoid this to keep the setting as simple as possible.

Even if adding more components broadens the set of strategies the algorithm can converge to, it also slows training, as more samples are required to learn more actions. As such, the optimal amount of components can differ based on the available budget.

\begin{figure*}[!h]
    \centering    
    \begin{subfigure}[b]{0.24\textwidth}
        \centering
    \includegraphics[width=0.99\linewidth]{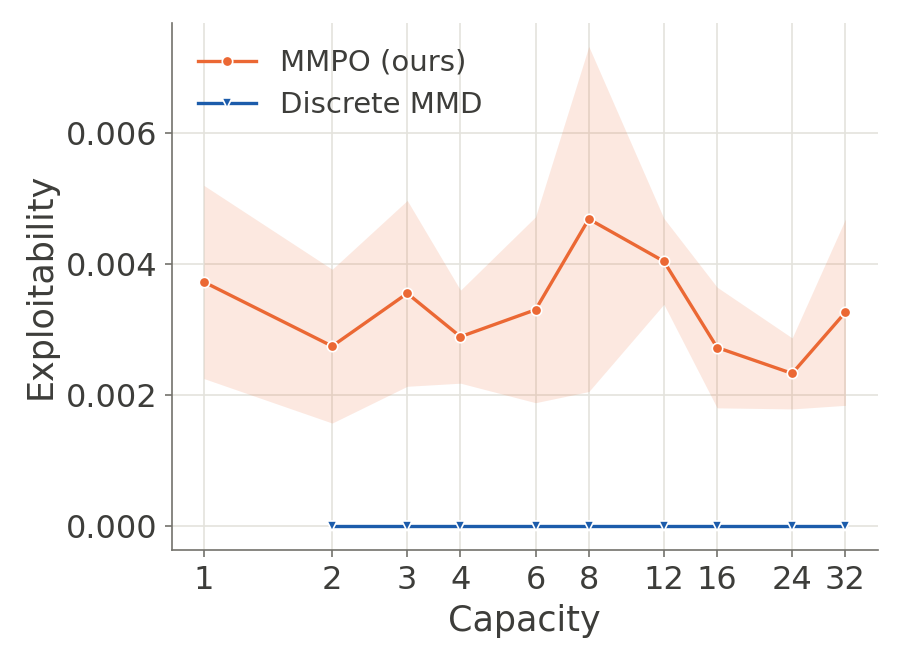}
        \caption{Matching Pennies}
        \label{fig:one_shot_mp_capacity}
    \end{subfigure}
    \hfill % Adds horizontal space between subfigures
    \begin{subfigure}[b]{0.24\textwidth}
        \centering
    \includegraphics[width=0.99\linewidth]{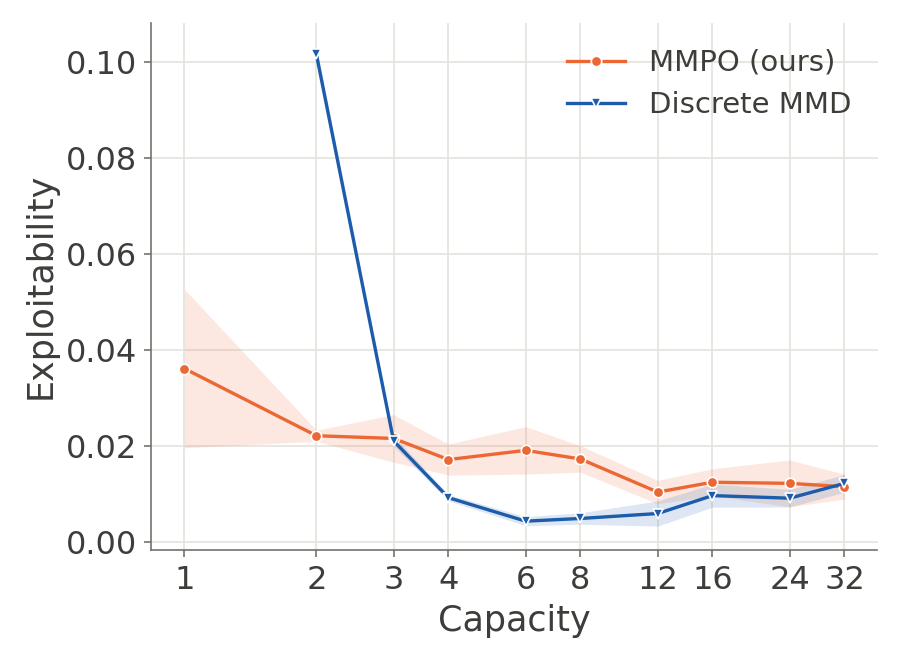}
        \caption{Glicksberg-Gross}
        \label{fig:one_shot_glicks_capacity}
    \end{subfigure}  
    \begin{subfigure}[b]{0.24\textwidth}
        \centering
    \includegraphics[width=0.99\linewidth]{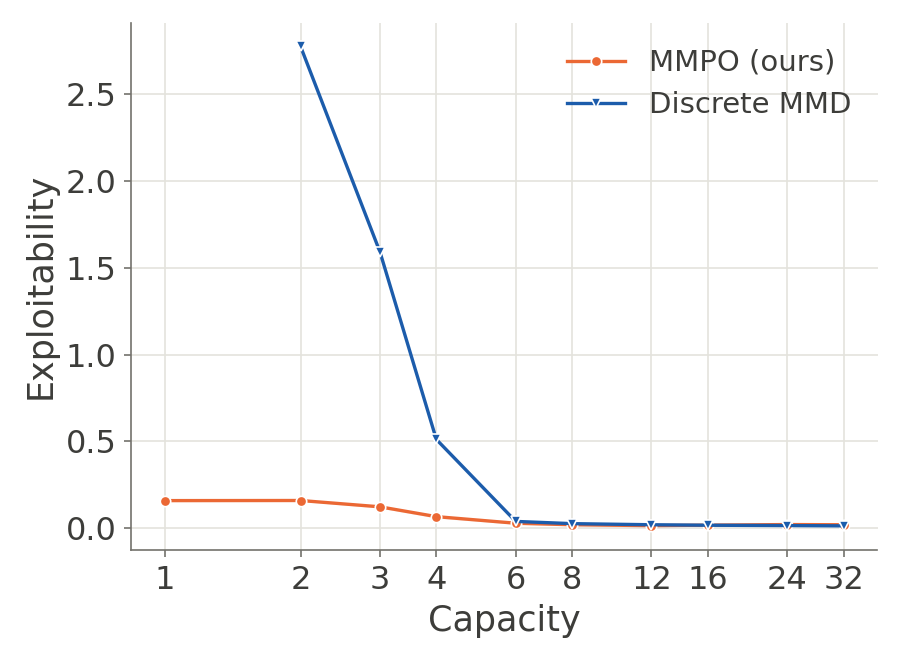}
        \caption{Circle game}
        \label{fig:one_shot_circle_capacity}
    \end{subfigure}
    \hfill % Adds horizontal space between subfigures
    \begin{subfigure}[b]{0.24\textwidth}
        \centering
    \includegraphics[width=0.99\linewidth]{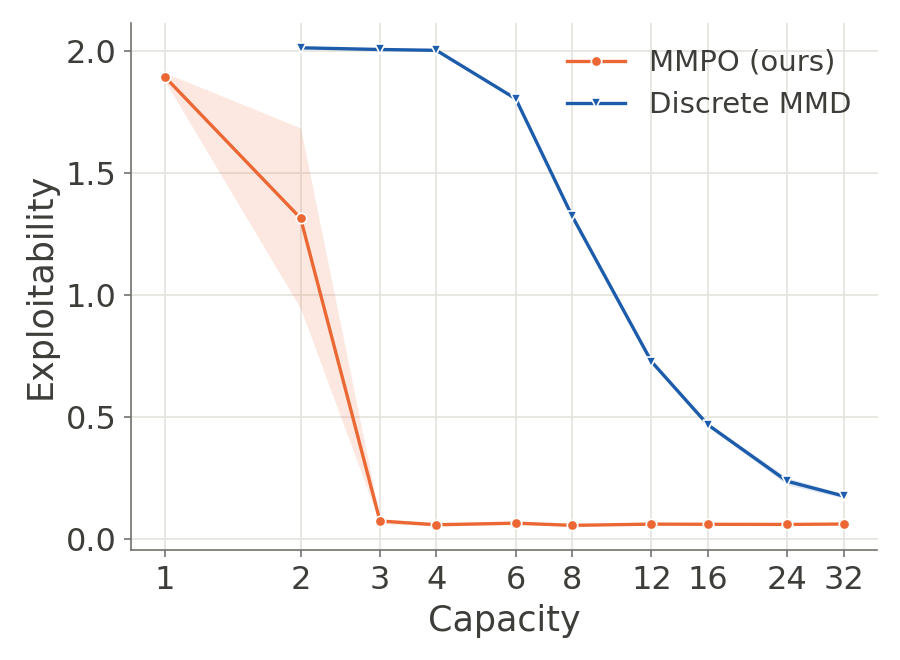}
        \caption{Two-point game}
        \label{fig:one_shot_two_capacity}
    \end{subfigure}  
    \caption{Final exploitability of \AlgorithmShort{} and MMD on discretized action space based on the amount of Gaussians or discrete components used.}
    \label{fig:one_shot_capacity} 
\end{figure*}

\begin{figure*}[!h]
    \centering    
    \begin{subfigure}[b]{0.24\textwidth}
        \centering
    \includegraphics[width=0.99\linewidth]{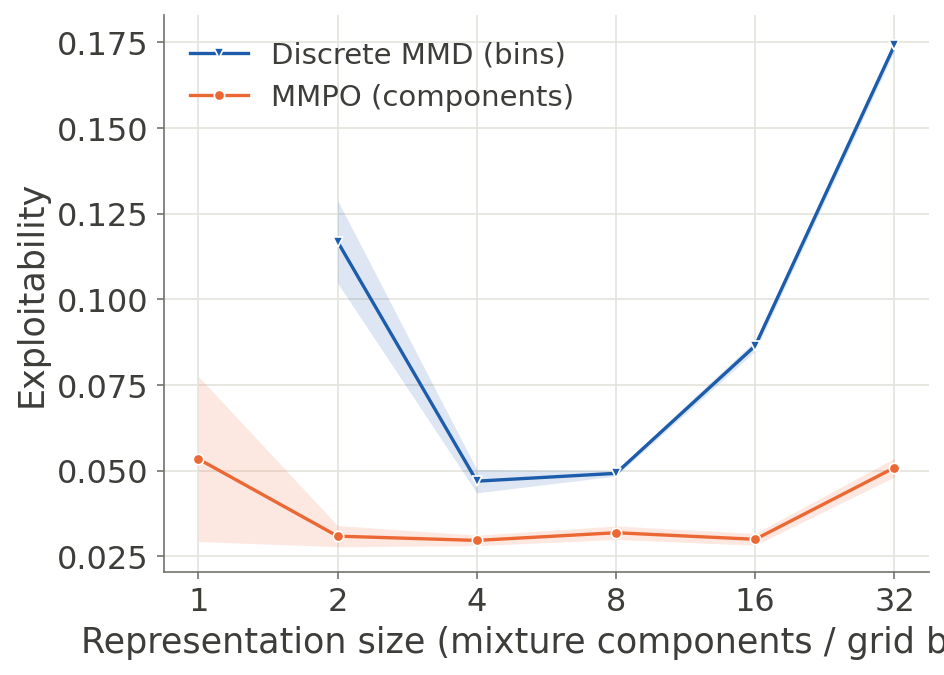}
        \caption{Kuhn Poker}
        \label{fig:seq_kuhn_capacity}
    \end{subfigure}
    \hfill % Adds horizontal space between subfigures
    \begin{subfigure}[b]{0.24\textwidth}
        \centering
    \includegraphics[width=0.99\linewidth]{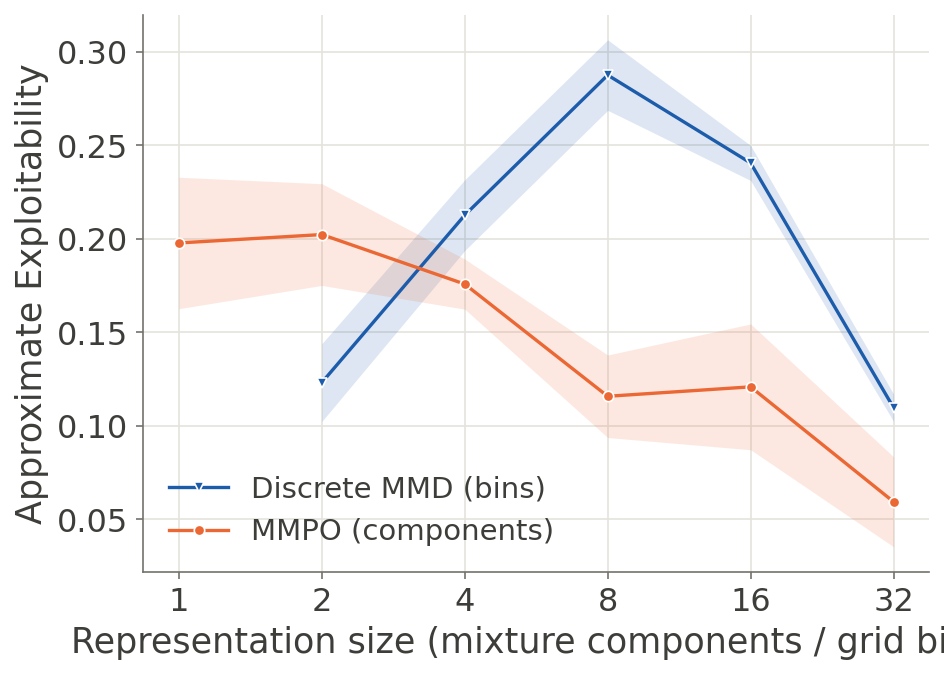}
        \caption{Leduc Poker}
        \label{fig:seq_leduc_capacity}
    \end{subfigure}  
    \begin{subfigure}[b]{0.24\textwidth}
        \centering
    \includegraphics[width=0.99\linewidth]{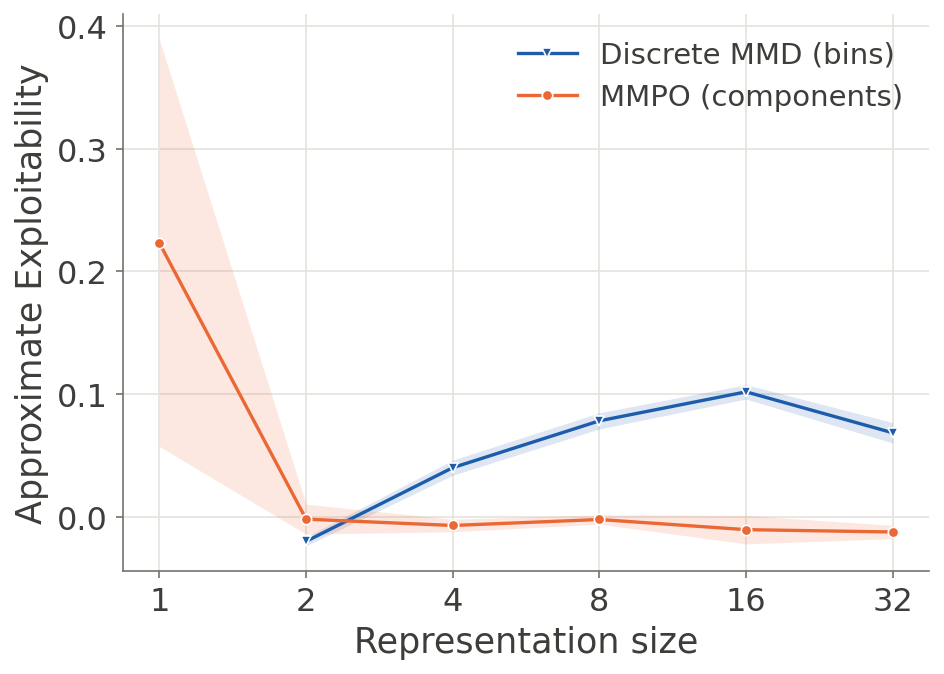}
        \caption{Blotto 3}
        \label{fig:seq_blotto3_capacity}
    \end{subfigure}
    \hfill % Adds horizontal space between subfigures
    \begin{subfigure}[b]{0.24\textwidth}
        \centering
    \includegraphics[width=0.99\linewidth]{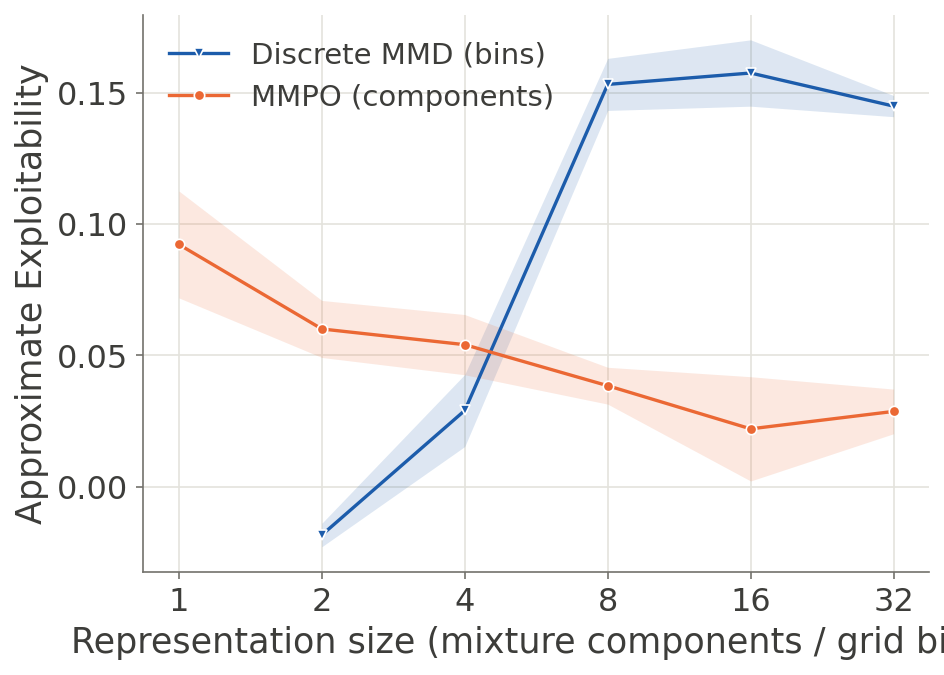}
        \caption{Blotto 5}
        \label{fig:seq_blotto5_capacity}
    \end{subfigure}
    \caption{Final exploitability of \AlgorithmShort{} and MMD on discretized action space based on the amount of Gaussians or discrete components used.}
    \label{fig:seq_capacity} 
\end{figure*}  
\section{Heads-up no limit Texas hold'em details}
\label{app:hunl}
In this section, we detail several changes to the MMPO we made specifically for the Texas hold 'em experiment and include additional results from that experiment.

\subsection{Experimental detail}
\paragraph{Action abstraction} We used these discrete actions: Fold, Call/Check, and All-in. Then we treat the betting as a single continuous action. However, we do not use the raw bet-size; the bet action is a fraction of the pot, which was used in prior Poker AIs \cite{brown2018libratus,moravcik2017deepstack}. We use 5 Gaussian components in this continuous action, and we do not initialize them randomly; instead, we initialize them so that their means cover the interval $[\frac{1}{3}, 3]$ geometrically. If the mean of the action is beyond the legal betting range, we clip it. Besides that, we are not predicting the mean $\Mean$ directly. We predict $\log(\Mean)$, which we then exponentiate. In the game, bets are discrete, so after we sample the real value, we round the bet to the nearest legal amount. During training, we use this rounded bet rather than the sampled one.
\paragraph{Action Clipping} We apply a trick similar to Neural replicator dynamics that stops learning certain discrete actions when their logits are too far apart \citep{hennes2020neurd}. Specifically, when an action logit is $\beta$ smaller than the highest logit, only the gradients that will increase the probability of such actions are used, and vice versa when the logit is higher than the smallest logit. We applied this trick because in one of our runs, the logits were getting too small, which made them unrecoverable when the action turned out to be valuable. We also use sigmoid for the $\log(\Variance)$ rescaled to values between [-3, -0.7]
\paragraph{Exploration} Early training runs showed that later checkpoints improved their performance against older checkpoints, but their performance against Slumbot was degrading substantially. We found out that it was because some bets were not being explored, and the strategy for those bets was degrading. We have made several changes to avoid this problem. The neural replicator dynamics' maximal-difference logits are one such change, and the second change is that, with 1\%, we sample the random categorical action. We use an importance-sampling correction to ensure that all estimates are unbiased. In imperfect-information games, it is not enough to apply an importance-sampling correction to the future values in V-trace; it is also necessary to correct past decisions. To avoid the correction becoming too high or too low, we clip it to the range [0.1, 10]. We do not use any exploration for the Gaussian head, besides the samples from the Gaussian itself.
\paragraph{Hyperparameter specifics} We use two training schedules. First is cosine decay of the learning rate from $10^{-3}$ with $T=1.5\cdot10^6$ and $\alpha = 0.06$. We also used an exponential decay of entropy from $0.03$ to $0.003$ over $2\cdot10^{6}$ iterations. We also did not use the same magnet and entropy regularization coefficients for the Gaussian and categorical heads. In the categorical head, we use the KL divergence coefficient $\RegularizationStrength_\CategoricalStrategy=0.2$ and the entropy coefficient $\EntropyCoefficient$ with exponential decay. For Gaussian, we used the KL divergence coefficient $\RegularizationStrength_\Gaussian = 0.02$, and scaled the entropy by 0.5. We decreased these coefficients for the Gaussian head because, in our early experiments, the Gaussian means were not learning due to the magnet loss outweighing the advantage gradient. All other hyperparameters are in \Cref{tab:hunl_hyper}
\begin{table}[]
    \centering
    \begin{tabular}{l|c} 
         PPO epochs & 1 \\
         Categorical magnet weight $\RegularizationStrength_\CategoricalStrategy$ & 0.2 \\
         Gaussian magnet weight $\RegularizationStrength_\Gaussian$ & 0.02 \\
         Magnet update & 20000\\
         Minimal std $\Variance_{\text{min}}$ & 0.05\\
         Components $K$ & 5\\
         Exploration $\epsilon$ & 0.01 \\
         Value loss weight & 1.0\\
         GAE lambda $\lambda$ & 0.95 \\
         V-trace clipping $\overline{\rho}$ & 1 \\
         V-trace clipping $\overline{c}$ & 1   \\
         Hidden sizes & (1024,1024,1024,1024) \\
         Activation & GELU \\
         Normalization & Layer\\
         Optimizer & Adamw  \\
         Weight decay & 0.002 \\ 
         Max grad norm & 1.0 \\
         Target update $\tau$  & $4 \cdot 10^{-4}$ \\
         Batch size & 4096\\
         Max game length & 40 
    \end{tabular}
    \caption{Hyperparameters used in training of heads-up no limit Texas hold'em}
    \label{tab:hunl_hyper}
\end{table}

\subsection{Additional results}

In \Cref{fig:hunl_small,fig:hunl_big} we show the initial bets of small blind and big blind after small blind checked, respectively. The probabilities are averages across all possible suits for that combination. The values in the main diagonal correspond to the player holding a pair; the lower-left triangle corresponds to the player holding 2 offsuit cards; and the upper-right triangle corresponds to the player holding cards in the same suit. Interestingly, these two tables differ from the tables provided by the authors of the GTO wizard AI \citep{provost2026gto}. For the Small blind bet, the GTO wizard's bot never folds with suited cards, and with off-suited cards, it folds much more aggressively. And for the Big blind bet, MMPO never folds, whereas GTO wizard folds quite often, especially when holding 2 and 3 offsuit.

In \Cref{fig:hunl_additional}, we show the average win rate against older checkpoints and weighted outcomes against Slumbot for checkpoints produced during training. Even if the Slumbot performance does not seem to improve substantially in later iterations, the network itself is still improving against older checkpoints. This highlights the strategy non-transitivity in imperfect information games. The improvement in gains for later checkpoints seems to be mainly from improving how much the player loses at showdown. On the other hand, rewards from folding remain roughly the same in later stages.

In \Cref{fig:hunl_bets,fig:hunl_pots}, we show all the bets as fractions of the pot and the pot sizes from self-play of the final checkpoint, which shows that, unlike abstraction bots, the \AlgorithmShort{} does not have blind spots on some decisions, which would be caused by static discretization.

\begin{figure}
    \centering
    \includegraphics[width=0.95\linewidth]{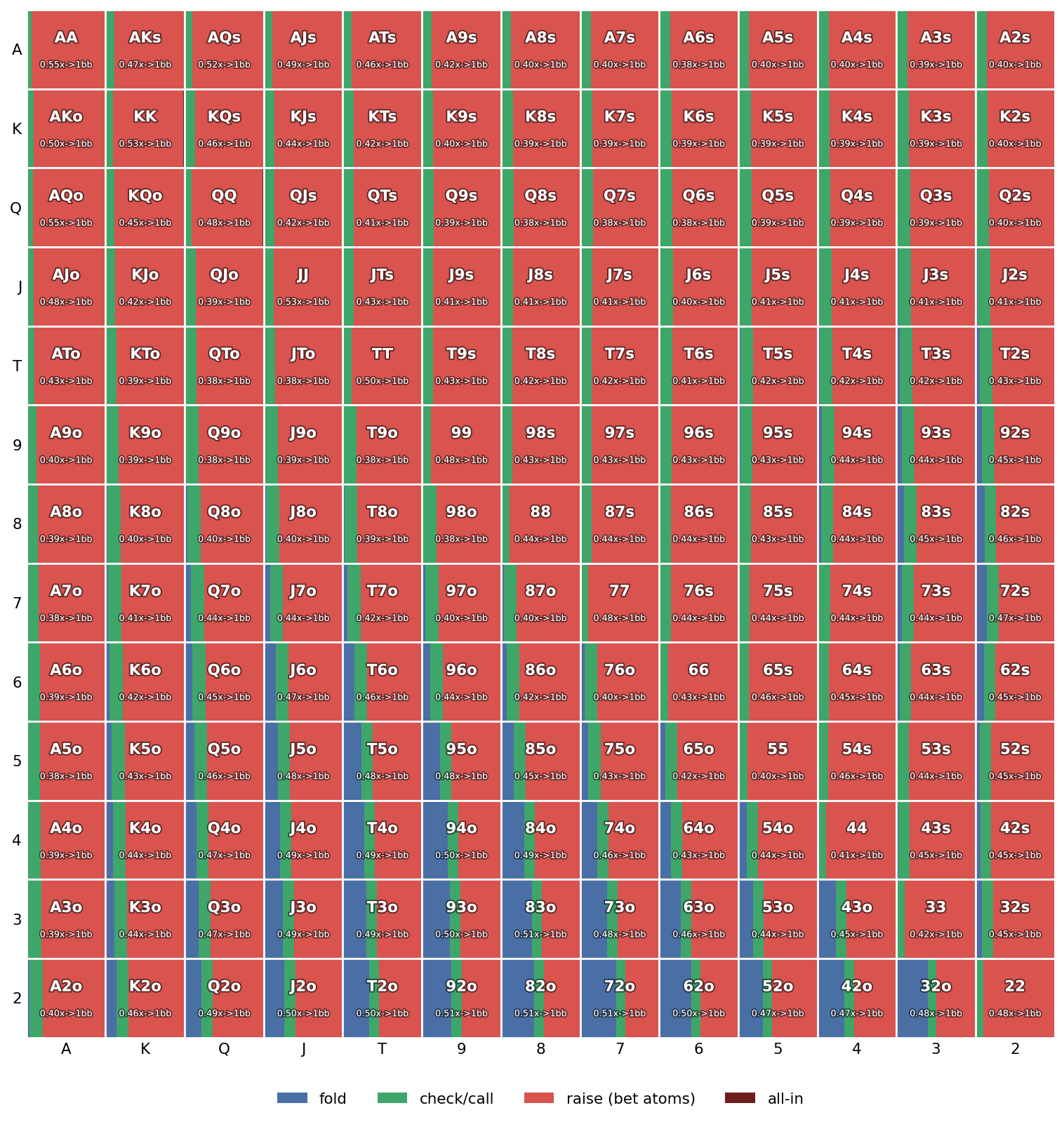}
    \caption{Small blind initial bet with different cards averaged over all suits.}
    \label{fig:hunl_small}
\end{figure}

\begin{figure}
    \centering
    \includegraphics[width=0.95\linewidth]{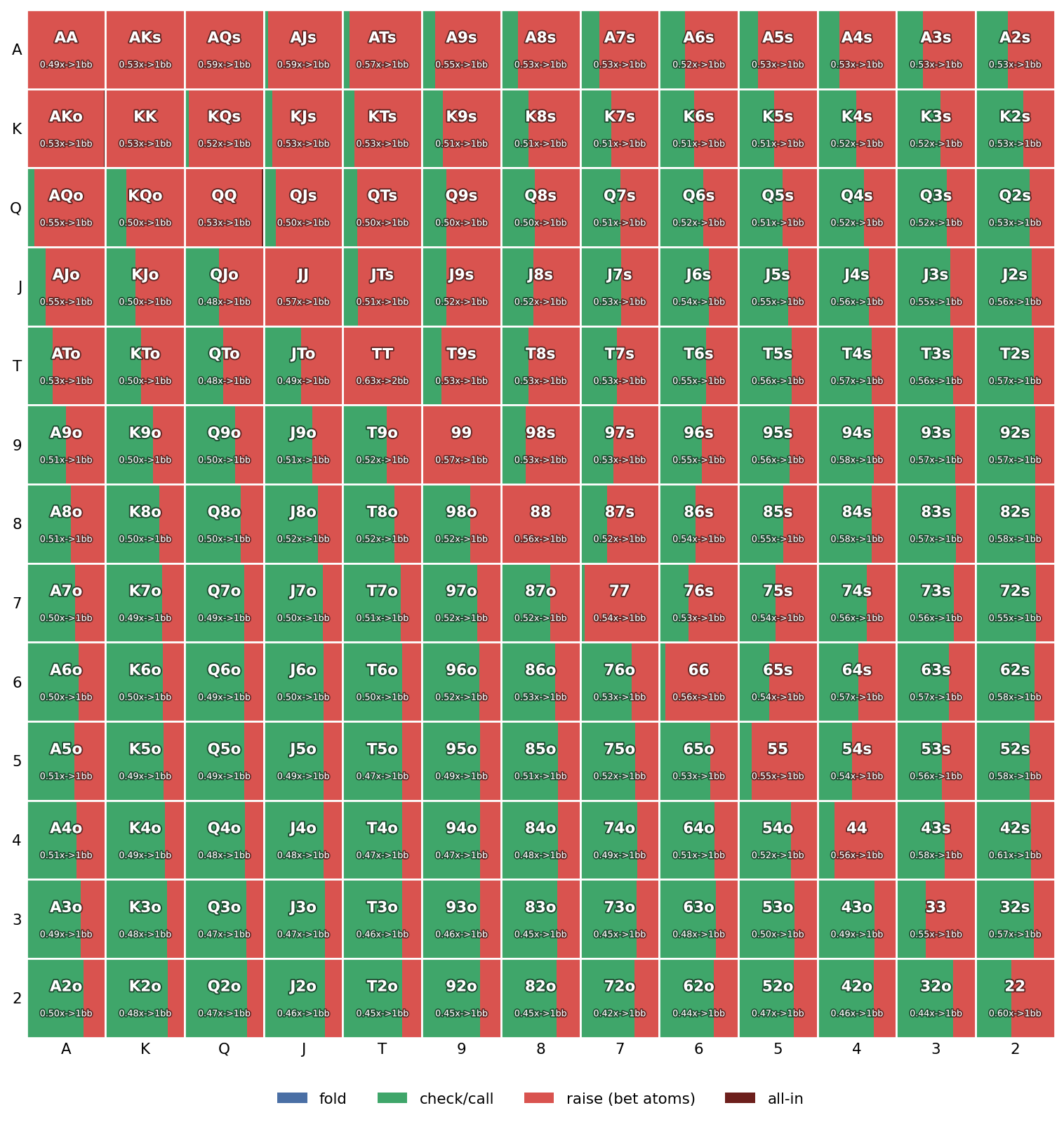}
    \caption{Big blind initial bet after check from the small blind with different cards averaged over all suits.}
    \label{fig:hunl_big}
\end{figure}

\begin{figure*}[!h]
    \centering
    \begin{subfigure}[b]{0.49\textwidth}
        \centering
    \includegraphics[width=0.99\linewidth]{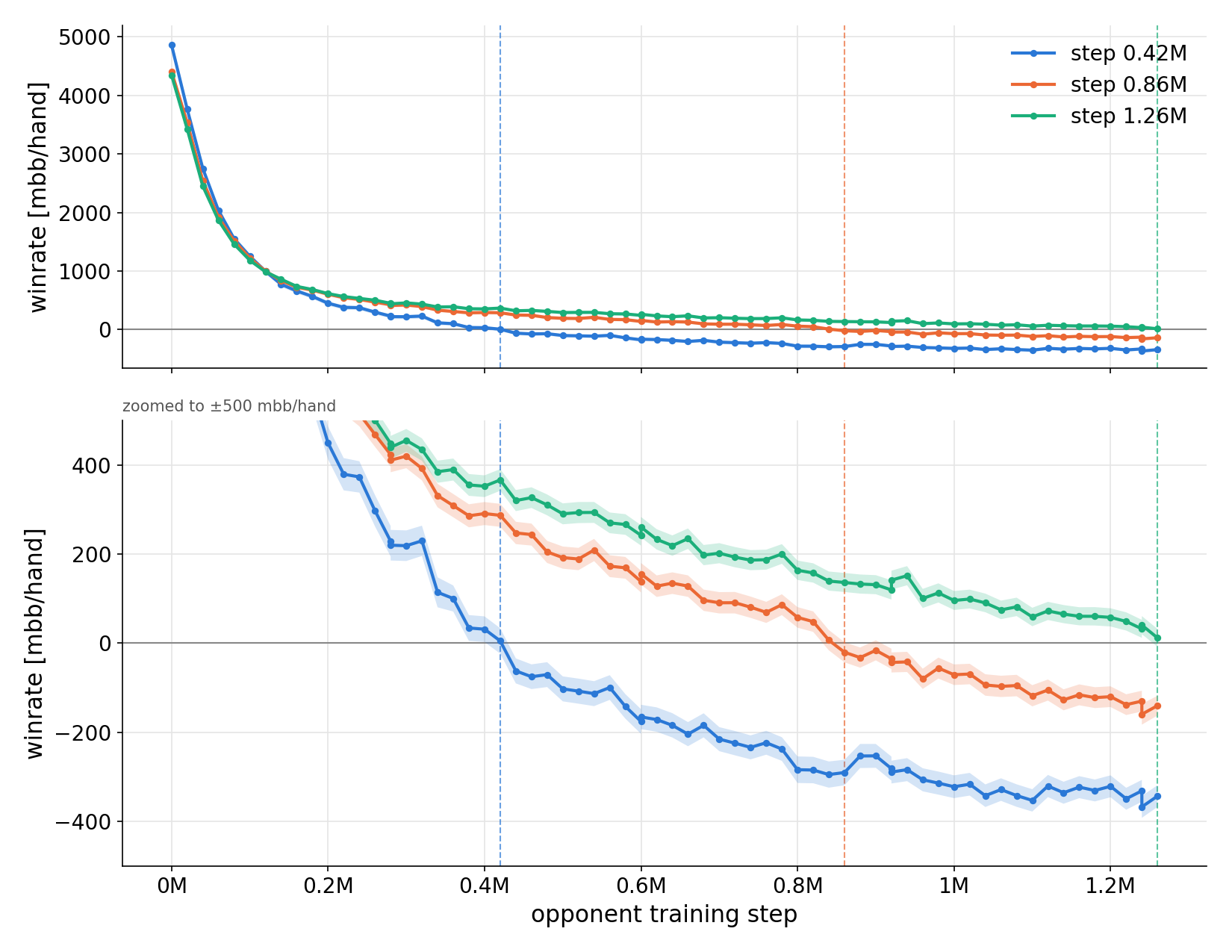}
        % \caption{Average hand length during the training}
        \label{fig:hunl_net_vs_net}
    \end{subfigure}
    \hfill % Adds horizontal space between subfigures
    \begin{subfigure}[b]{0.49\textwidth}
        \centering
    \includegraphics[width=0.99\linewidth]{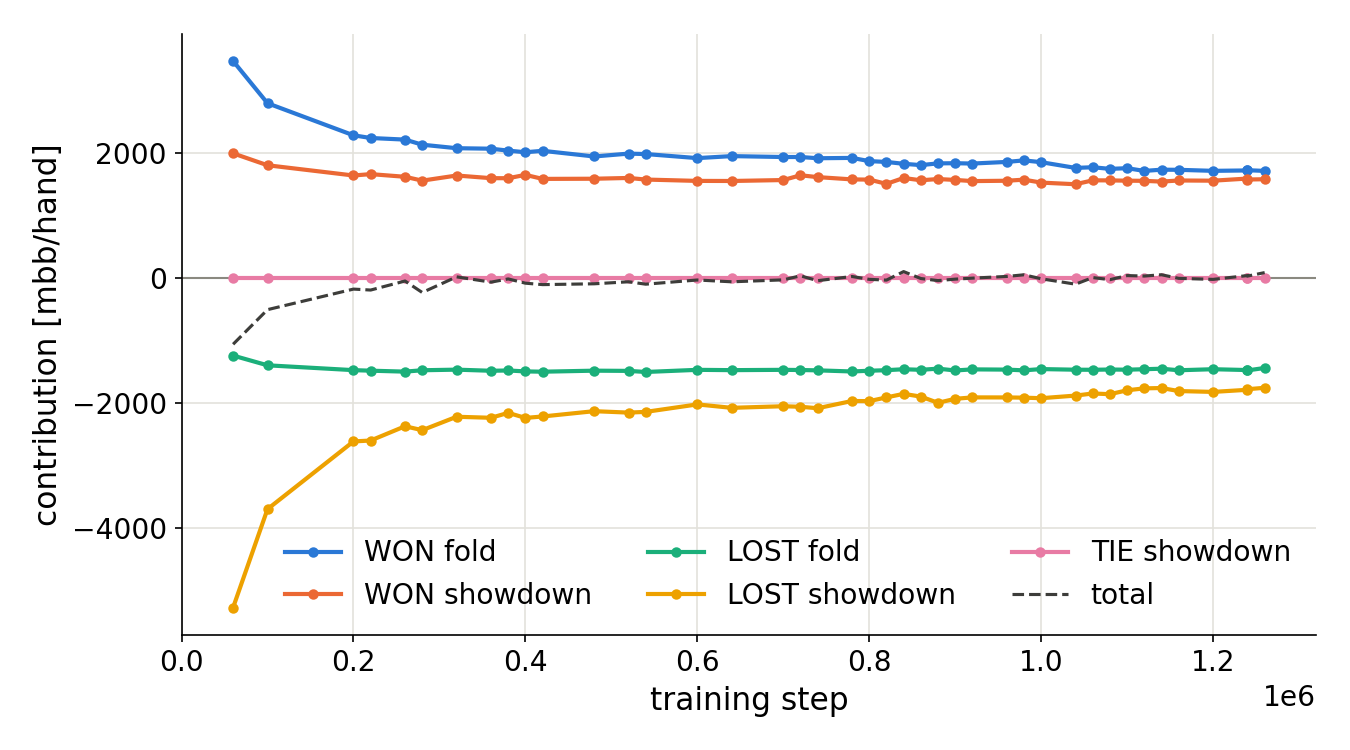}
        % \caption{Estimated gradient}
        \label{fig:hunl_weighted_outcome}
    \end{subfigure} 
    \caption{(left) Average win-rate of different checkpoints against all other checkpoints. (right) Weighted outcomes based on different results of the hand against Slumbot.}
    \label{fig:hunl_additional} 
\end{figure*}  
 
\begin{figure}
    \centering
    \includegraphics[width=0.95\linewidth]{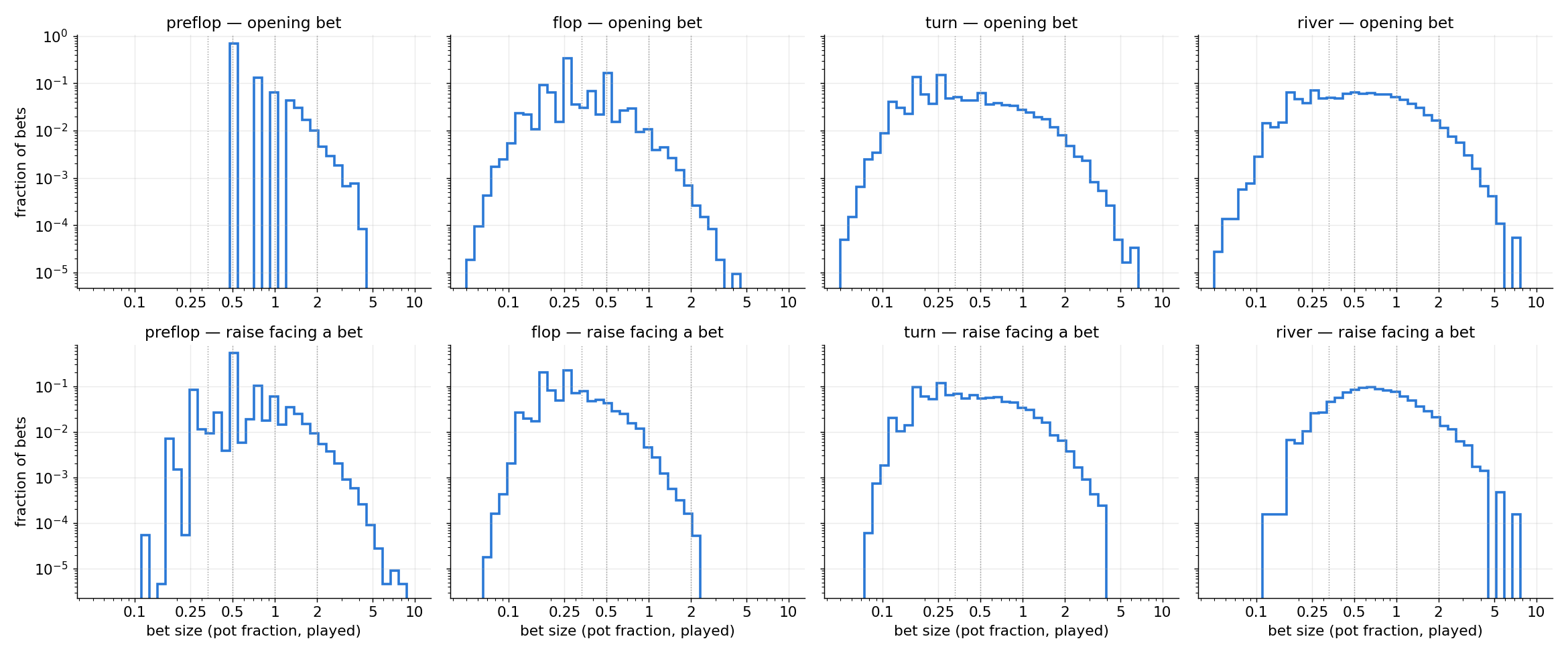}
    \caption{All bets made by the final trained checkpoint in self-play over million hands}
    \label{fig:hunl_bets}
\end{figure}

\begin{figure}
    \centering
    \includegraphics[width=0.95\linewidth]{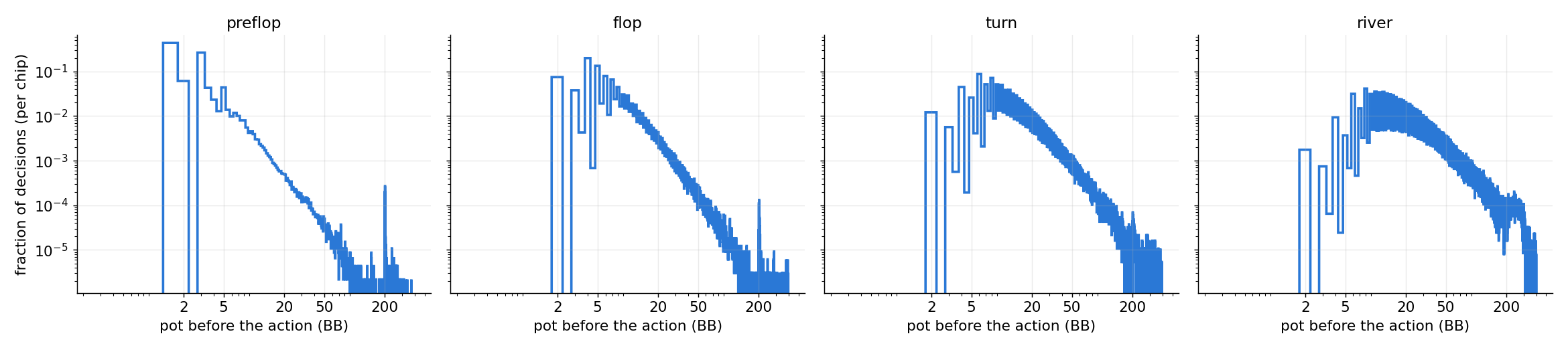}
    \caption{All pots, which were encountered by the final trained checkpoint in self-play over million hands}
    \label{fig:hunl_pots}
\end{figure}

\end{document}